\documentclass[10pt,twoside]{article}
\usepackage[utf8]{inputenc}
\usepackage[T1]{fontenc}
\usepackage[margin=0.75in]{geometry}
\usepackage{multicol}
\usepackage{physics}
\usepackage[user,hyperref]{zref}
\usepackage{hyperref}
\usepackage[toc]{appendix}
\usepackage{xcolor}
\usepackage{graphicx} 
\usepackage{cite}
\usepackage{parskip}
\usepackage{newtxtext}
\usepackage{amsmath, amssymb, amsthm}
\usepackage{authblk}

\theoremstyle{plain}      

\newtheorem{proposition}{Proposition}
\newtheorem{lemma}{Lemma}
\newtheorem{corollary}{Corollary}

\theoremstyle{definition}

\theoremstyle{remark}

\usepackage{etoolbox}
\apptocmd{\endproof}{\vspace{2pt}}{}{}

\BeforeBeginEnvironment{theorem}{\vspace{5pt}}
\AfterEndEnvironment{theorem}{\vspace{5pt}}
\BeforeBeginEnvironment{proposition}{\vspace{5pt}}
\AfterEndEnvironment{proposition}{\vspace{5pt}}
\BeforeBeginEnvironment{lemma}{\vspace{5pt}}
\AfterEndEnvironment{lemma}{\vspace{5pt}}
\BeforeBeginEnvironment{corollary}{\vspace{5pt}}
\AfterEndEnvironment{corollary}{\vspace{5pt}}

\makeatletter
\providecommand{\@currentshorttitle}{}
\zref@newprop{shorttitle}{\@currentshorttitle}
\zref@addprop{main}{shorttitle}

\NewDocumentCommand{\labelshort}{om}{%
	\begingroup
	\IfValueT{#1}{%
		\renewcommand{\@currentshorttitle}{#1}%
		\zlabel{#2}%
	}%
	\endgroup
	\label{#2}%
}

\NewDocumentCommand{\nameshortref}{O{}m}{%
	\zref@ifrefundefined{#2}{%
	}{%
		\hyperlink{\zref@extract{#2}{anchor}}{#1\zref@extract{#2}{shorttitle}}%
	}%
}

\usepackage{etoolbox}

\newcommand{\appendixcontent}{}

\newcommand{\toappendix}[1]{%
	\gappto\appendixcontent{#1}%
}

\newcommand{\printappendix}{%
	\appendixcontent
}

\usepackage{cleveref}

\title{\large\bfseries 
	Single Link Removal Perturbation in Szegedy Quantum Walk:\\
	from Graph Completeness Testing to Integrity Monitoring
}
\date{}
\author[1]{Sara Giordano\thanks{\texttt{sgiordan@ucm.es}}}

\author[1,2]{Miguel A. Martin-Delgado\thanks{\texttt{mardel@ucm.es}}}

\affil[1]{Departamento de F\'{\i}sica Te\'orica, Universidad Complutense de Madrid, Madrid, 28040, Spain}
\affil[2]{CCS-Center for Computational Simulation, Campus de Montegancedo Universidad Politecnica de Madrid (UPM), Boadilla del Monte, 28660, Madrid,Spain}

\begin{document}
	\maketitle
	
	%------------------------------------------------------------------------------------
	\begin{abstract}
			We present a rigorous perturbative analysis of the Szegedy quantum walk search algorithm on the complete graph with marked nodes, when a specific anomaly is present in the graph. This is motivated by the problem of monitoring the integrity of dense trusted communication networks with a quantum-assisted procedure. The topology of these networks is modeled as a complete graph, and the anomaly of interest is the disappearance of a single communication link which represents the minimal and spectrally hardest structural defect to detect. Building on the graph-completeness testing algorithm framework, we quantify how the removal of a single unmarked-unmarked edge propagates through the relevant spectral quantities of the Szegedy quantum walk. Denoting by $n$ the total number of nodes of the graph and by $m$ the number of marked nodes, we prove that the perturbation to the transition matrix has spectral norm $\Theta(1/n)$, and that the gap eigenvalue undergoes a strictly negative first-order shift for every $n$ and every number of marked nodes $m$, providing a formal proof of a conjecture from our completeness testing algorithm work; in the regime $m = \Theta(n)$ relevant for the search algorithm, this shift has magnitude $\Theta(1/n^2)$. The corresponding eigenphase shift satisfies $\Delta\theta_\star = \Theta(1/n^2)$ in the same regime. We establish that the rotation angle of the effective subspace under this perturbation is $O(1/n)$ for $m = \Theta(n)$. Finally, we bound the change in success probability to $O(1/\sqrt{n})$ in this same regime of marked nodes, and show that this bound is dominated by the geometric misalignment of the effective subspace rather than by the spectral shift of the eigenphase. These results provide both the theoretical foundations and the fundamental scaling limits of quantum walk-based topology integrity monitoring under minimal structural perturbations.
	\end{abstract}
	\begin{multicols*}{2}
	\section{Introduction}
	
	In this work we introduce a method for monitoring the integrity of dense communication networks, subject to minimal anomalies, grounding on the framework introduced in \cite{GIORDANO2026170305} for testing graph completeness while refining the analysis with a perturbative approach.
	
	Graph completeness testing was originally introduced \cite{GIORDANO2026170305} as a quantum procedure for deciding whether an input arbitrary graph is complete or not by combining the Szegedy quantum walk search \cite{Szegedy_qw, Portugal_book, Aharonov_quantum_walks_on_graphs, Ambainis2007} with quantum phase estimation (QPE) \cite{kitaev1995quantum}. In that framework, the graph structure is encoded in the transition matrix which is embedded in the quantum walk evolution operator, and completeness is inferred from its spectral properties \cite{Portugal_book, Szegedy_qw}. Several practical uses in network structure analysis were already suggested in the seminal work \cite{GIORDANO2026170305}, such as routing algorithms validation or clustering discovery.
	
	This article focuses on a more specific objective.  Our input is a trusted subset of entities in a network which is expected to maintain full pairwise connectivity, so that its topology is modeled by a clique or by a complete graph. Our goal is to detect whether a one link failure occurred, and characterize its effects on the relevant quantities in the Szegedy quantum walk evolution. Moreover, with our analysis we want to highlight the importance and effectiveness of using quantum diagnostic for network integrity monitoring \cite{Centrality_measure, Wang_20, Wang2022, Quad_speed_Apers, Vlasic2025, pagerank_phase, Ortega_complex_phase, Paparo2012, Feldman2010, romeo2025probinggraphtopologylocal, Reitzner2009}, precisely defining its capabilities in this specific but crucial setting. Perturbative approaches to quantum walks have also been used to study robustness to generic transition-matrix errors and their propagation to hitting times \cite{Chiang2010, Chiang2013}; our work instead considers theoretically motivated perturbation and extends the analysis to subspace rotation and search success probability.
	
	The situation of having one missing link is particularly interesting because the most difficult incomplete graph to distinguish from a complete one is precisely the one obtained by deleting only one edge. This is because we can expect its spectral properties to be close to the complete case. Therefore, the single-edge-removal setting is the minimally different and spectrally hardest case of incomplete graph. For this reason, it provides a natural bridge between the aforementioned graph-completeness algorithm and a more refined perturbative analysis for a single edge deletion. Single-edge-removal perturbations of quantum walks have also been studied numerically in other graph families, e.g. for embedded hypercubes, where the removal of a single edge was found not to destroy the exponential quantum speedup in hitting time~\cite{Makmal2016}. Here we instead provide an analytical treatment of a single edge removal on the complete graph with marked nodes, quantifying its effect on the search algorithm rather than on the hitting time alone.
	
	\section{Motivation and application scenario}\label{sec:motivation_and_application}
	Consider a monitored infrastructure composed of a set of critical network entities, such as servers, services, control-plane nodes,	orchestration agents, or protected zones. We model their communication pattern by an undirected graph
	\begin{equation}
		G^0 = (V, \mathcal{E}^0),
	\end{equation}
	where vertices represent the monitored entities and edges authorized bidirectional communication links.
	
	In many operational settings, the relevant subnetwork is designed to be highly connected. Examples include tightly coupled data-center fabrics~\cite{lacin2026, guo2009bcube}, east--west communication layers inside protected service domains~\cite{zhuang2024microsegmented}, coordination networks in which every pair of nodes in the control-plane is expected to	maintain direct pairwise communication~\cite{jeffery2021kubernetes}, and secure control-plane overlays~\cite{lahmadi2013disco}. In the idealized regime adopted here, this trusted subgraph of communicating nodes is modeled as a complete graph $K_n$.
	
	The integrity monitoring problem then becomes the following: determine whether the currently observed communication graph still	matches the expected dense communication pattern, or whether it has
	suffered a structural anomaly. The anomaly of interest is the disappearance of a single expected link,
	\begin{equation}
		\tilde{G} = (V,\; \mathcal{E}^0 \setminus \{(u,v)\}),
	\end{equation}
	for some unknown pair $u, v \in V$, $u \neq v$. This missing edge may correspond to a physical link failure, a routing malfunction, a misconfiguration, or a cyberattack that selectively disrupts the link communication.
	
	This single edge deletion is a realistic model for early-stage network failures and stealthy topology deviations, since in many real systems the first visible symptom of degradation is not a complete network collapse, but the loss of connectivity between a very small number of node pairs.
	
	We clarify that the monitoring task we analyze consists in deciding whether the observed communication graph remains topologically consistent with the trusted complete graph, and in quantifying the changes in the relevant quantities of the Szegedy quantum walk evolution. This method does not localize the missing edge, which may be pursued as a subsequent step once an anomaly has been detected, and it is not meant to replace general-purpose intrusion detection systems based on traffic analysis. It provides a specialized mechanism for detecting subtle structural anomalies in environments where the expected connectivity pattern is known in advance and small deviations are operationally significant.
	
	We begin from the complete graph with marked nodes and study the effect of removing one unmarked--unmarked edge, that is, an edge $\{u,v\} \subseteq V \setminus M$, where $M \subseteq V$ is the designated marked set. This constraint is essential: the perturbative results characterize the effect of structural changes in the unmarked sector on the walk dynamics, and the analysis does not extend to the removal of edges incident to marked nodes. 
		
	This perturbation can be represented as a low-rank modification of the transition matrix, which preserves column-stochasticity through an appropriate redistribution of probability weight. We estimate the shift of the relevant simple eigenvalue, used as a proxy of completeness in the completeness testing algorithm in \cite{GIORDANO2026170305}, the induced phase variation, the rotation of
	the effective invariant subspace, and the corresponding change in the probability of measuring a marked node.
		
	These technical results quantify the sensitivity of the quantum monitoring procedure to the smallest possible structural change in the edge structure for a complete graph.
	
	\section{Szegedy Quantum Walk Construction}\label{sec:szegedy_quantum_walk_construction}
	Szegedy's quantum walk corresponds to the quantization of Reversible Markov chains \cite{Szegedy_qw,Portugal_book, santos2016szegedysquantumwalkqueries, Ambainis2005} and provides a systematic way to define a discrete-time quantum walk from a classical transition matrix $P$, usually represented by a graph $G$ \cite{Shenvi2003}. The unitary evolution operator of the quantum walk $W$ encodes the spectral information of the discriminant matrix $C$ which depends on the transition matrix $P$ and is defined as $C_{u,v}=\sqrt{P_{v,u}P_{u,v}}$. In particular, when marked nodes are present on the graph, the evolution operator spectrum plays a central role in quantum search algorithms based on Szegedy quantum walk. 
	
	In this note, we study the effects of removing edges from the underlying graph on the spectrum of the unitary operator $W$, when the latter is a complete graph $K_n$  with marked nodes, focusing on the removal of a single edge.
	
	Let $P^0$ be the column-stochastic transition matrix of an ergodic Markov chain on a graph $G^0=(V,\mathcal{E}^0)$ with $|V|=n$. Define the Hilbert space
	\begin{equation}
		\mathcal{H} = \mathrm{span}\{\, \ket{u}\ket{v} : u,v \in V \,\},
	\end{equation}
	where $\ket{v},\ket{u} \in \mathbb{C}^n$. For each $u\in V$, define the state
	\begin{equation}
		\ket{\psi_u} = \ket{u} \otimes \sum_{v \in V} \sqrt{P^0_{v,u}}\,\ket{v}.
	\end{equation}
	Similarly, for each $v\in V$,
	\begin{equation}
		\ket{\phi_v} = \sum_{u \in V} \sqrt{P^0_{v,u}}\,\ket{u} \otimes \ket{v}.
	\end{equation}
	
	We define the isometries
	\begin{equation}
		A = \sum_{u \in V}|\psi_u\rangle\langle u|,
		\qquad
		B = \sum_{v \in V}|\phi_v\rangle\langle v|,
	\end{equation}
	which satisfy $\|A\|_2 = \|B\|_2 = 1$ since $\langle\psi_u|\psi_{u'}\rangle
	= \delta_{uu'}$ and $\langle\phi_v|\phi_{v'}\rangle = \delta_{vv'}$,
	the associated subspaces are:
	\begin{equation}
		\mathcal{A} = \mathrm{span}\{\ket{\psi_u} : u \in V\}
		\qquad
		\mathcal{B} = \mathrm{span}\{\ket{\phi_v} : v \in V\}.
	\end{equation}
	
	Defining the reflections
	\begin{equation}
		R_A = 2 \sum_{u} \ket{\psi_u} \bra{\psi_u} - I, \quad R_B = 2 \sum_{v} \ket{\phi_v} \bra{\phi_v} - I,
	\end{equation}
	which reflects over the subspaces $\mathcal{A}$ and $\mathcal{B}$, we can write finally the Szegedy walk operator:
	\begin{equation}\label{eq:quantum_walk_operator}
		W^0 = R_B R_A,
	\end{equation}
	
	The spectrum of $W^0$ is determined by the discriminant matrix $C^0$ with entries
	\begin{equation}
		C^0_{u,v} = \sqrt{P^0_{v,u} P^0_{u,v}}.
	\end{equation}
	
	If $\lambda$ is an eigenvalue of $C^0$, then $W^0$ has corresponding eigenvalues $e^{\pm i \arccos(\lambda)}$.
	
	\subsection{Transition matrix of the complete graph with marked nodes}
	Initially, the underlying graph is the complete graph $K_n$. The transition matrix $P^0$ is column-stochastic:
	\begin{equation}
		P^0_{v,u} = \begin{cases}
			\frac{1}{n-1}, & v \neq u, \\
			0, & v=u.
		\end{cases}
	\end{equation}
	Marking a set $M$ of $m= |M|$ vertices is modeled by making the corresponding columns in $P^0$ absorbing or "sinks", obtaining the matrix $P$. Calling $\mathbf{e}_j$ the column vector of the computational basis, for which $\mathbf{e_j^i}=0 $ for each $i\neq j$ and $\mathbf{e^{ j}_j}=1$, we mark the elements on the graph by writing the columns of $P$ as follows:
	\begin{equation}
		P^0_{\cdot,j}\to P_{\cdot, j} = \mathbf{e}_j, \quad j \in M,
	\end{equation}
	where $P^0_{\cdot,j}$ and $P_{\cdot,j}$ are respectively the columns of $P^0$ and $P$.
	
	At the level of the underlying graph it means that for the marked nodes there are no more outgoing edges, only ingoing ones and a self loop is added. The evolution operator based on this marked transition matrix $P$ allows for the development of quantum walk search algorithms, which hopefully maximize the probability of finding the quantum walker on the marked nodes.

	By removing an edge of the graph, we are applying a perturbation to the corresponding transition matrix. Indeed the change we are applying can be written as a low rank matrix and its norm is much smaller compared to that of the transition matrix. To assess that the applied change is a perturbation we will compare the spectral norms of the matrix, $\|P\|_2$ and of the perturbation matrix.
	
	Because Szegedy's walk spectrum depends on the discriminant matrix $C$, we evaluate the perturbation on $P$ and then track how this affects $C$ and consequently the quantum walk operator $W$.
	
	\subsection{Discriminant with marked vertices: block form and exact spectrum}\label{subsec:discriminant_with_marked_vertices_block_form} 
	The discriminant $C^0$ (Portugal's notation) is
	\begin{equation}
		C^0_{u,v}=\sqrt{\,P^0_{v,u}\,P^0_{u,v}\,},
	\end{equation}
	with $P^0$ transition matrix without marked nodes. Analogously, we can define $C$ based on the matrix $P$ with marked nodes:
		\begin{equation}
		C_{u,v}=\sqrt{\,P_{v,u}\,P_{u,v}\,}.
	\end{equation}
	By construction, $C$ is real and symmetric. Recalling that we are starting with a complete graph case, the $C$ matrix has a block structure 
	\begin{equation}
		C \;=\; I_{(m)}\ \oplus\ \frac{1}{n-1}\left(J_{(k)} - I_{(k)}\right),
	\end{equation}
	where $k=n-m$, $J_{(k)}$ is the $k\times k$ all-ones matrix and $I_{(m)}$ is the $m\times m $ identity matrix. Consequently, the eigenvalues $\lambda$ of $C$ are
	\begin{gather}\label{eq:eigenvalues_C}\notag
		\underbrace{\lambda_M=1,}_{\text{multiplicity } m} \; \lambda_\star=\frac{k-1}{n-1}=\underbrace{\frac{n-m-1}{n-1}}_{\text{simple eigenvalue}}\\, 
		\underbrace{\lambda_B=-\frac{1}{n-1}} _{\text{multiplicity }k-1}.
	\end{gather}
	These correspond to three invariant subspaces respectively $\mathcal{U}_M,\mathcal{U}_\star,\mathcal{U}_B$. The $m$ eigenvalues $1$ derive from the absorbing marked block of $C$ which comes from the marked part of the transition matrix $P$; the remaining spectrum comes from the unmarked block, an invariant subspace of dimension $1$ and the bulk subspace of dimension $k-1=n-m-1$.
	
	The eigenspaces $\mathcal{U}_M$, $\mathcal{U}_\star$, $\mathcal{U}_B$ of $C$ are subspaces of $\mathbb{R}^n$. By the spectral correspondence between $C$ and $W$~\cite{Szegedy_qw, Portugal_book}, each eigenspace $\mathcal{U}$ of $C$ with eigenvalue $\lambda$ corresponds to an invariant subspace	$\mathcal{V}$ of $W$ in $\mathcal{H}$, spanned by the eigenvectors
	$\ket{\omega^\pm}$ with eigenvalues $e^{\pm 2i\arccos(\lambda)}$. We denote by $\mathcal{V}_M$, $\mathcal{V}_\star$, $\mathcal{V}_B$, and $\mathcal{V}_{\mathrm{eff}} = \mathcal{V}_M \oplus \mathcal{V}_\star$ the corresponding invariant subspaces of $W$ in $\mathcal{H}$.
	
	The full spectral decomposition of $W$ also contains a large $(+1)$-eigenspace $\mathcal{A}^\perp \cap \mathcal{B}^\perp$ of dimension $n^2 - 2n + m$, orthogonal to $\mathcal{V}_{\mathrm{eff}} \oplus \mathcal{V}_B$, so that
	\begin{equation}
		\mathcal{H} = \mathcal{V}_{\mathrm{eff}} \oplus \mathcal{V}_B
		\oplus (\mathcal{A}^\perp \cap \mathcal{B}^\perp).
	\end{equation}
	
	The initial state $\ket{\psi_0} = \frac{1}{\sqrt{n}}\sum_{x \in V} \ket{\psi_x}$, defined as uniformly distributed over all the nodes of the unmarked complete graph, satisfies $\ket{\psi_0} \perp \mathcal{V}_B$ exactly,	as shown in~\cite{Portugal_book}. In other words $\ket{\psi_0} \in \mathcal{V}_{M}\oplus \mathcal{V}_\star \oplus \mathcal{A}^\perp \cap \mathcal{B}^\perp = \mathcal{V}_{\mathrm{eff}}\oplus \mathcal{A}^\perp \cap \mathcal{B}^\perp $. Its remaining component outside $\mathcal{V}_{\mathrm{eff}}$ lies in $\mathcal{A}^\perp \cap \mathcal{B}^\perp$, which is a $(+1)$-eigenspace of $W$ satisfying $\Pi_M(\mathcal{A}^\perp \cap \mathcal{B}^\perp) = 0$~\cite{Portugal_book}. Since $W^t$ acts as the identity on $\mathcal{A}^\perp \cap	\mathcal{B}^\perp$, this has direct consequences for the probability of finding the walker on a marked node, which we will use in \Cref{sec:effects_on_succ_prob}:
	\begin{align}
		\Pi_M \ket{\psi_0} &= \Pi_M \Pi_{\mathcal{V}_{\mathrm{eff}}}\ket{\psi_0},\\
		\Pi_M W^t \ket{\psi_0} &= \Pi_M W^t
		\Pi_{\mathcal{V}_{\mathrm{eff}}}\ket{\psi_0}.
	\end{align}
	
	Among the three subspaces $\mathcal{V}_M$, $\mathcal{V}_\star$,	$\mathcal{V}_B$, only $\mathcal{V}_{\mathrm{eff}} = \mathcal{V}_M \oplus \mathcal{V}_\star$ is relevant for the success probability	$p_M(t)$. This follows from $\ket{\psi_0} \perp \mathcal{V}_B$, so the bulk subspace plays no role in the dynamics, and the component of $\ket{\psi_0}$	outside $\mathcal{V}_{\mathrm{eff}}$ lies in $\mathcal{A}^\perp \cap \mathcal{B}^\perp$, which is invisible to $\Pi_M$. The effective subspace $\mathcal{V}_{\mathrm{eff}}$ therefore includes all the spectral component of $|\psi_0\rangle$ that is relevant for detecting the marked nodes.
	
	It is crucial to remark that the spectral norm of a matrix $F$ is given by:
	\begin{equation}
		\|F\|_2=\sqrt{\lambda_{\mathrm{max}}(D)},
	\end{equation} 
	 (see e.g.\cite{HornJohnson2012}, Example 5.6.6, Sec. 5.6) where $\lambda_{\mathrm{max}}(D)$ indicates the maximum eigenvalue of the discriminant matrix $D=F^*F$ of $F$, i.e. the maximum singular value of $F$. This means that the spectral norm of $P$ is equal to the maximum eigenvalue of $C$, $\lambda_M=1$. Once we will have formalized the perturbation matrix, we will compare its spectral norm with $\|P\|_2=1$. This comparison will confirm that the removal of a single edge is indeed a perturbation for the original transition matrix. 
	
	\section{Single Edge Removal Perturbation}
	
	In the complete graph $K_n$ with a marked set $M$ of $m$ vertices, there are three classes of edges:
	\begin{itemize}
		\item Marked--marked: connecting vertices $a - b$ with $a,b\in M$. After marking, the columns corresponding to $a$ and $b$ are absorbing ($P_{\cdot,a}=e_a$ and $P_{\cdot,b}=e_b$), so all outgoing edges from $a$ and $b$ are removed, including $a\to b$ and $b \to a$.
		\item Marked--unmarked: connecting vertices $a - u$ with $a\in M$, $u\in V\setminus M$. While also $a \to u$ is removed for absorption, removing the $u \to a$ edge causes a rank-one perturbation.
		\item Unmarked--unmarked: connecting vertices $u - v$ with $u,v\in V \setminus M$, $u\neq v$. This case is considered nontrivial after marking, it involves the removal of $u \to v$  and $v \to u$.
	\end{itemize}
	We treat the last of these three cases. We will treat one undirected edge removal as two directed-edge removals, each one being a rank-1 perturbation to the transition matrix.

	\subsection{Perturbation representation}
	A rank-1 perturbation to the matrix $P$ consists in adding or subtracting a matrix of rank-1 to the matrix $P$.
	Since it should remain a graph transition matrix, we have to guarantee its column-stochasticity. Thus, when we remove edge $u-v$ by subtracting the probabilities $p = P_{v,u}=P_{u,v}$ from the columns $P[:,u]$ and $P[v,:]$, we renormalize these columns by redistributing the weight $p$ among the remaining columns elements. Thus, when removing one edge only two columns $P[:,u]$ and $P[v,:]$ changes, with each change consisting of a rank-1 perturbation of the transition matrix. Considering a complete graph with uniform transition probability, we have $p=\frac{1}{n-1}$. In this section we will formalize this perturbation in terms of a matrix operation.

	Let $\mathbf{c}$ denote the original column $P[:,u]$, while $\mathbf{\tilde{c}}$ denotes the column after the edge removal and $\mathbf{c}_i$ denotes a column element. We also call $\mathbf{r}^{(u)}$ the column vector used to distribute the weight $p$ among the other column elements. We remove from $\mathbf{c}$ the weight $p=P_{v,u}=\mathbf{c}_v$ and redistribute it with the column vector $\mathbf{r}^{(u)}$, where $\sum_i \mathbf{r}^{(u)}_i=1$ and $\mathbf{r}^{(u)}_v=0$. This reads:
	\begin{equation}
		\mathbf{\tilde{c}} = \mathbf{c} - p \mathbf{e}_v + p \mathbf{r}^{(u)} = \mathbf{c} + p(\mathbf{r}^{(u)}-\mathbf{e}_v).
	\end{equation}
	Thus the matrix representing the perturbation induced from the removal of one directed edge is
	\begin{equation}\label{eq:perturbation_direct_edge}
		\Delta_u^{\mathrm{dir}} = (p(\mathbf{r}^{(u)}-\mathbf{e}_v)) \mathbf{e}_u^T,
	\end{equation}
	a rank-one matrix. When we remove both $u \to v$ and $v \to u$ we have:
	\begin{equation}\label{eq:perturbation_first_expr}
		\Delta = \Delta_u^{\mathrm{dir}}+\Delta_v^{\mathrm{dir}}= (p( \mathbf{r}^{(u)}-\mathbf{e}_v)) \mathbf{e}_u^T + (p(\mathbf{r}^{(v)}-\mathbf{e}_u)) \mathbf{e}_v^T
	\end{equation}
	a rank-2 matrix, where $\sum_i \mathbf{r}^{(v)}_i=1$ and $\mathbf{r}^{(v)}_u=0$.
	
	Let us define
	\begin{equation}
		\mathbf{a}=\mathbf{r}^{(u)}-\mathbf{e}_v,\qquad \mathbf{b}=\mathbf{r}^{(v)}-\mathbf{e}_u,
	\end{equation}
	
	so that the perturbation in Eq.\eqref{eq:perturbation_first_expr} can be written as	\begin{equation}\label{eq:perturbation}
		\Delta \;=\; p\left(\mathbf{a}\,\mathbf{e}_u^{\!T} \,+\, \mathbf{b}\,\mathbf{e}_v^{\!T}\right).
	\end{equation}

	\paragraph{Norm of the perturbation.}
	Since we are interested in the effect of the perturbation $\Delta$ on the spectrum of the evolution operator $W$ we must calculate its spectral norm $\|\Delta\|_2$. We use the Frobenius norm \cite{HornJohnson2012} which provides an upper and lower bound to the spectral norm, the details on how to derive the following proposition are reported in \nameshortref{app:perturbation_norm_approx}.
	
	\begin{proposition}[Spectral norm of the perturbation]\label{prop:perturbation_norm}
		Let $\Delta$ be the perturbation of the transition matrix induced by the removal of the undirected edge $\{u,v\}$ from a complete graph with marked nodes $K_n$, with $p = \frac{1}{n-1}$ and weight distributed uniformly over the $n-2$ remaining neighbours of each node. Then the spectral norm of $\Delta$ satisfies
		\begin{equation}
			\frac{1}{n-1}\sqrt{1+\frac{1}{n-2}}	\;\leq\; \|\Delta\|_2 \;\leq\; \frac{1}{n-1}\sqrt{2+\frac{2}{n-2}},
		\end{equation}
		and in particular for large $n$
		\begin{equation}
			\|\Delta\|_2 = \Theta\!\left(\frac{1}{n}\right).
		\end{equation}
	\end{proposition}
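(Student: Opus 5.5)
The plan is to compute the Frobenius norm of $\Delta$ in closed form and then use two elementary inequalities. The upper bound comes from $\|\Delta\|_2\le\|\Delta\|_F$. The lower bound comes from $\|\Delta\|_2\ge\|\Delta\mathbf{x}\|_2$ for a suitable unit vector $\mathbf{x}$, which I take to be a standard basis vector. Since $\Delta$ is given explicitly in \eqref{eq:perturbation}, this is a direct calculation with no spectral theory needed.

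\textbf{Step 1: the vectors $\mathbf{a}$ and $\mathbf{b}$.} Uniform redistribution over the $n-2$ remaining neighbours of $u$ means that $\mathbf{r}^{(u)}$ has entries $\frac{1}{n-2}$ on $V\setminus\{u,v\}$ and $0$ at $u$ and $v$. Recall that $P_{u,u}=0$ for an unmarked $u$, and that the marked vertices are still out-neighbours of $u$, so the count $n-2$ is correct. Hence $\mathbf{a}=\mathbf{r}^{(u)}-\mathbf{e}_v$ has $n-2$ entries equal to $\frac{1}{n-2}$, one entry $-1$ at position $v$, and a zero at $u$. This gives
\begin{equation*}
\|\mathbf{a}\|_2^2=(n-2)\cdot\frac{1}{(n-2)^2}+1=1+\frac{1}{n-2},
\end{equation*}
and by symmetry $\|\mathbf{b}\|_2^2=1+\frac{1}{n-2}$.

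\textbf{Step 2: the Frobenius norm.} By \eqref{eq:perturbation}, $\Delta$ has only two nonzero columns: column $u$ equals $p\,\mathbf{a}$ and column $v$ equals $p\,\mathbf{b}$. Therefore
\begin{equation*}
\|\Delta\|_F^2=p^2\left(\|\mathbf{a}\|_2^2+\|\mathbf{b}\|_2^2\right)=\frac{1}{(n-1)^2}\left(2+\frac{2}{n-2}\right).
\end{equation*}
Taking square roots and using $\|\Delta\|_2\le\|\Delta\|_F$ gives the stated upper bound.

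\textbf{Step 3: the lower bound.} I would use $\|\Delta\|_2\ge\|\Delta\mathbf{e}_u\|_2=p\|\mathbf{a}\|_2=\frac{1}{n-1}\sqrt{1+\frac{1}{n-2}}$. Equivalently, $\|\Delta\|_2\ge\|\Delta\|_F/\sqrt{\mathrm{rank}\,\Delta}$ with $\mathrm{rank}\,\Delta\le 2$ gives the same value.

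\textbf{Step 4: asymptotics.} Both bounds are $\frac{c}{n-1}(1+O(1/n))$ with constants $c=1$ and $c=\sqrt2$, so $\|\Delta\|_2=\Theta(1/n)$. Since $\|P\|_2=1$, this confirms that $\Delta$ is a genuine perturbation.

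The only step needing care is Step 1: pinning down the support of $\mathbf{r}^{(u)}$, namely that $u$ itself is excluded and the marked vertices are included. Everything else is routine.
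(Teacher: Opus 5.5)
Your proof is correct and follows essentially the paper's route: the paper also computes $\|\Delta\|_F^2=p^2(\|\mathbf{a}\|_2^2+\|\mathbf{b}\|_2^2)$ and takes the upper bound from $\|\Delta\|_2\le\|\Delta\|_F$. For the lower bound it uses $\|\Delta\|_2\ge\|\Delta\|_F/\sqrt{2}$ with $\mathrm{rank}\,\Delta=2$. Your test-vector bound $\|\Delta\mathbf{e}_u\|_2=p\|\mathbf{a}\|_2$ is a slightly more elementary way to reach the same value, which coincides because $\|\mathbf{a}\|_2=\|\mathbf{b}\|_2$, and it does not require establishing the rank.
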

	Notice that in this proposition we did not set any regime for the number of marked nodes $m$.
	\newcounter{AppCount}

	%FIRST APPENDIX BEGIN----------------------------------------------------------------------------------------------------------------------------------------
		\toappendix{%
			\setcounter{AppCount}{1}
			\subsection*{\Alph{AppCount}. \space Norm of the perturbation $\Delta$}\labelshort[\Alph{AppCount}]{app:perturbation_norm_approx}
			\begin{proof}[Proof of \Cref{prop:perturbation_norm}]
				\medskip\noindent\textbf{Spectral norm of the perturbation}
				
				\medskip\medskip\noindent\textit{\textbf{Upper bound through Frobenius norm.}}
				The Frobenius norm $\|\Delta\|_F$ satisfies $\|\Delta\|_F^2=\text{tr}(\Delta(\Delta)^{\!T})$. Using $\mathbf{e}_u^{\!T}\mathbf{e}_v=0$ we get
				\begin{equation}
					\Delta(\Delta)^{\!T}
					= p^2\left( \mathbf{a a}^{\!T} + \mathbf{b b}^{\!T}\right),
				\end{equation}
				hence
				\begin{equation}
					\|\Delta\|_F^2
					= \mathrm{tr}\!\left(\Delta(\Delta)^{\!T}\right)
					= p^2\left(\|\mathbf{a}\|_2^2 + \|\mathbf{b}\|_2^2\right),
				\end{equation}
				where we used the property $\mathrm{tr}(\mathbf{aa}^{\!T})=\mathbf{a}^{\!T}\mathbf{a}$ for both $\mathbf{a}$ and $\mathbf{b}$. We substitute $\mathbf{a}=\mathbf{r}^{(u)}-\mathbf{e}_v$ and $\mathbf{b}=\mathbf{r}^{(v)}-\mathbf{e}_u$ and we obtain:
				\begin{equation}
					\;\|\Delta\|_F^2
					= p^2\left(\|\mathbf{r}^{(u)}\|_2^2+\|\mathbf{r}^{(v)}\|_2^2 + 2\right),\;
				\end{equation}
				because $\|\mathbf{e}_u\|^2_2=\|\mathbf{e}_v\|^2_2=1$ and because $\mathbf{r}^{(u)}_v=\mathbf{r}^{(v)}_u=0$, thus the cross products between $\mathbf{r}^{(u)}/\mathbf{r}^{(v)}$ and $\mathbf{e}_u/ \mathbf{e}_v$ are zero.
				
				If $\mathbf{r}^{(u)}$ and $\mathbf{r}^{(v)}$ distribute respectively the weights $p$ uniformly over a number of nodes $K_u$ and $K_v$, then
				$\|\mathbf{r}^{(u)}\|_2^2=\sum(\frac{1}{K_u})^2=\frac{K_u}{K_u^2}=\frac{1}{K_u}$ and $\|\mathbf{r}^{(v)}\|_2^2=\frac{1}{K_v}$, so
				\begin{equation}
					\;\|\Delta\|_F
					= p\,\sqrt{\,2+\frac{1}{K_u}+\frac{1}{K_v}\,}\;. \;
				\end{equation}
				Considering that $K_u=K_v=n-2$ when the weight is distributed uniformly over the other nodes connected to $u$ and $v$, and that $p=\frac{1}{n-1}$ then
				\begin{equation}
					\;\|\Delta\|_F=\frac{1}{n-1}\sqrt{2+ \frac{2}{n-2}}= \;\Theta\!\left(\frac{1}{n}\right)\;.
				\end{equation}
				This provides an upper bound on the spectral norm since 
				\begin{equation}
					\|\Delta\|_2\! \le \! \|\Delta\|_F=\Theta\!\left(\frac{1}{n}\right).
				\end{equation}
				\medskip\noindent\textit{\textbf{Lower bound through Frobenius norm.}}  
				Let the rank of the perturbation be $r=\mathrm{rank}(\Delta)$. Since between the Frobenius norm and the spectral norm it holds that\cite{HornJohnson2012}:
				\begin{equation}
					\|\Delta\|_2\geq \frac{\|\Delta\|_F}{\sqrt{r}}\;,
				\end{equation}
				it follows immediately that
				\begin{equation}
					\frac{\|\Delta\|_F}{\sqrt{r}} \;\le\; \|\Delta\|_2 \;\le\; \|\Delta\|_F,
				\end{equation}
				and since $r=\text{rank}(\Delta)=2$ in our case, we have:
				\begin{equation}
					\frac{\|\Delta\|_F}{\sqrt{2}} \;\le\; \|\Delta\|_2 \;\le\; \|\Delta\|_F.
				\end{equation}\qedhere
			\end{proof}
		}
	%FIRST APPENDIX END----------------------------------------------------------------------------------------------------------------------------------------

	More specifically, the Frobenius norm $\|\Delta\|_F$ is
	\begin{equation}
		\|\Delta\|_F=\frac{1}{n-1}\sqrt{2+ \frac{2}{n-2}}= O\!\left(\frac{1}{n}\right),
	\end{equation}
	and 
	\begin{equation}
		\frac{\|\Delta\|_F}{\sqrt{r}}\leq \|\Delta\|_2  \leq \|\Delta\|_F
	\end{equation}
	with $r$ the rank of the perturbation. Thus, whenever $\Delta$ has fixed low rank (e.g.\ rank one or two), the Frobenius norm provides a control of the spectral norm both from above and below. In our case $r=2$ and thus $\|\Delta\|_2 = \Theta\!\left(\frac{1}{n}\right)$.

	At the end of \Cref{subsec:discriminant_with_marked_vertices_block_form} we calculated $\|P\|_2=1$, comparing it with $\|\Delta\|_2=O(\frac{1}{n})$ we can assume that for large $n$ we have $\|P\|_2 \gg \|\Delta\|_2$. This confirms that $\Delta$ is a perturbation for $P$.

	\subsection{Effects of the perturbation on the spectrum of the Szegedy discriminant matrix}\label{subsec:effects_of_the_perturbation_on_the_discriminant_matrix}
	Now that we have an evaluation of the spectral norm of the perturbation for the matrix $P$ we can analyze its discriminant matrix $C$ whose eigenvalues are involved in the quantum walk evolution given by the operator $W$. We need to evaluate the effects of the perturbation $\Delta$ on the spectra of $C$. In particular, we want to evaluate how much the eigenvalues of $C$ change due to the perturbation and if they increase or decrease in value. We call $E=\tilde{C}-C$ the perturbation matrix acting on the discriminant.
	
	Since $C$ and $\tilde{C}$ are real symmetric, Weyl's inequality \cite{Bhatia1997} (Chapter III, Sec.~2) gives
	\begin{equation}\label{eq:bauer_fike_C}
		\max_i |\lambda_i(\tilde{C})-\lambda_i(C)| \;\le\; \|E\|_2.
	\end{equation}
	where $\lambda_i(\tilde{C})$ and $\lambda_i(C)$ are the eigenvalues respectively of $\tilde{C}$ and $C$ (see Eq.\eqref{eq:eigenvalues_C}). This type of bound on the spectral gap of a quantized transition matrix under a generic perturbation was first applied to Szegedy quantum walks by Chiang~\cite{Chiang2010}, and extended to hitting-time bounds by Chiang and Gomez~\cite{Chiang2013}; here we specialize this approach to the structured perturbation $E$. Therefore, we need to calculate $\|E\|_2$. Here we report the results detailed in \nameshortref{app:perturbation_discriminant}, where we use the definition of $\Delta$ to derive the explicit form of $E$ and its norm $\|E\|_2$.
	
	\begin{proposition}[Structure and norm of $E = \tilde{C} - C$]
		\label{prop:E_structure}
		Let $E = \tilde{C} - C$ be the perturbation of the discriminant matrix induced by the removal of the edge $\{u,v\}$ from $K_n$, with $p = \frac{1}{n-1}$ and redistribution weights $\mathbf{r}^{(u)}_x = \mathbf{r}^{(v)}_x = \frac{1}{n-2}$ for all $x \neq u,v$.
		\begin{enumerate}
			\item[\textup{(i)}] \textbf{Support and explicit entries.}
			$E$ is symmetric, supported only on rows and columns $\{u,v\}$,	and of rank at most $2$. Its non-zero entries are:
			\begin{align}
				E_{u,v} &= E_{v,u} = -\frac{1}{n-1}, \\[4pt] \notag
				E_{x,u} &= E_{x,v} = \frac{1}{2(n-1)}\,\mathbf{r}^{(u)}_x
				+ O\!\left(\frac{(\mathbf{r}^{(u)}_x)^2}{n}\right)
				\\
				& \approx \frac{1}{n^2} + O\!\left(\frac{1}{n^3}\right),
				\quad x \neq u,v.
			\end{align}
			In particular, $E_{u,u} = E_{v,v} = 0$, and $E_{x,y} = 0$ whenever $x,y \notin \{u,v\}$ or $x$ or $y$ is a marked node.
			
			\item[\textup{(ii)}] \textbf{Spectral norm.}
			The spectral norm of $E$ satisfies
			\begin{equation}
				\|E\|_2 = \Theta\!\left(\frac{1}{n}\right),
			\end{equation}
			with leading contribution from $E_{u,v} = -p = -\frac{1}{n-1}$.
		\end{enumerate}
	\end{proposition}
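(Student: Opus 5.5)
The plan is to compute $\tilde P=P+\Delta$ explicitly, build $\tilde C$ entrywise from $\tilde C_{x,y}=\sqrt{\tilde P_{y,x}\tilde P_{x,y}}$, and subtract $C$. With the uniform redistribution of \Cref{prop:perturbation_norm}, only columns $u$ and $v$ of $P$ change. For $x\neq u,v$ we get $\tilde P_{x,u}=\frac{1}{n-1}+\frac{1}{(n-1)(n-2)}=\frac{1}{n-2}$, together with $\tilde P_{v,u}=0$, and symmetrically for column $v$. All other columns are untouched. This includes the marked columns $a\in M$, which remain $\mathbf e_a$, so that $\tilde P_{u,a}=P_{u,a}=0$.

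The key steps, in order, are as follows.

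First, I establish the support. A diagonal entry $\tilde C_{x,x}$ and an entry $\tilde C_{x,y}$ with $x,y\notin\{u,v\}$ involve only unchanged columns, so $E$ vanishes there. An entry $\tilde C_{a,u}$ with $a\in M$ contains the factor $\tilde P_{u,a}=0$, so it stays $0$. The redistributed weight that lands on marked rows is therefore invisible in $C$. Symmetry of $E$ is inherited from that of $C$ and $\tilde C$.

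Second, I compute the nonzero entries. From $\tilde P_{u,v}=\tilde P_{v,u}=0$ we get $E_{u,v}=E_{v,u}=0-\frac{1}{n-1}$. For unmarked $x\neq u,v$,
\[
E_{x,u}=\sqrt{\tfrac{1}{n-1}\cdot\tfrac{1}{n-2}}-\tfrac{1}{n-1}=\tfrac{1}{n-1}\Bigl(\sqrt{1+\tfrac{1}{n-2}}-1\Bigr).
\]
Expanding $\sqrt{1+r}=1+\frac r2+O(r^2)$ with $r=\mathbf r^{(u)}_x=\frac1{n-2}$ gives $\frac{1}{2(n-1)}\mathbf r^{(u)}_x+O\bigl((\mathbf r^{(u)}_x)^2/n\bigr)$, which is of order $n^{-2}$, and the same holds for $E_{x,v}$.

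Third, I bound the norm. I write
\[
E=-p\,(\mathbf e_u\mathbf e_v^T+\mathbf e_v\mathbf e_u^T)+\mathbf w(\mathbf e_u+\mathbf e_v)^T+(\mathbf e_u+\mathbf e_v)\mathbf w^T,
\]
where $\mathbf w$ is supported on the $k-2$ unmarked vertices other than $u,v$, with entries $\Theta(n^{-2})$. Hence $\|\mathbf w\|_2=O(\sqrt{k}/n^2)=O(n^{-3/2})$. The first term has spectral norm exactly $p=\frac1{n-1}$, and the correction has norm at most $2\sqrt2\|\mathbf w\|_2=O(n^{-3/2})$. The triangle inequality (or Weyl's inequality \eqref{eq:bauer_fike_C} applied to $E$ itself) then gives
\[
\|E\|_2=\tfrac1{n-1}+O(n^{-3/2})=\Theta(1/n),
\]
with the leading contribution coming from $E_{u,v}$. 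As a cross-check, $\mathbf e_u-\mathbf e_v$ is an exact eigenvector of $E$ with eigenvalue $p$, and the remaining nonzero spectrum lives in the $2\times2$ block on $\mathrm{span}\{\mathbf e_u+\mathbf e_v,\mathbf w\}$, which can be diagonalized exactly if sharper constants are wanted.

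The main obstacle I expect is the rank claim in (i). The decomposition above shows that the range of $E$ lies in $\mathrm{span}\{\mathbf e_u,\mathbf e_v,\mathbf w\}$, so the naive count gives rank at most $3$, not $2$. I would therefore first check whether the intended statement is that the leading part $-p(\mathbf e_u\mathbf e_v^T+\mathbf e_v\mathbf e_u^T)$ has rank $2$, with the $\mathbf w$-terms treated as a higher-order correction. Failing that, I would verify directly whether the $2\times2$ block on $\mathrm{span}\{\mathbf e_u+\mathbf e_v,\mathbf w\}$ degenerates. Its determinant is $-2\|\mathbf w\|^2\neq0$ when $k>2$, so I expect to have to state the rank bound as $3$ or to interpret it in the leading-order sense. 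The norm conclusion in (ii) is unaffected either way.
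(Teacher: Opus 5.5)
Your computation of the entries in (i) matches the paper's. The paper likewise writes $E_{x,u}=\sqrt{P_{u,x}}\bigl(\sqrt{P_{x,u}+\Delta_{x,u}}-\sqrt{P_{x,u}}\bigr)$ and uses the vanishing prefactor to kill marked rows. It reads off $E_{u,v}=-p$ from $\tilde P_{v,u}=0$ and Taylor-expands $\sqrt{1+\mathbf r^{(u)}_x}$. One wording fix: $E_{u,u}=0$ holds because $\mathbf r^{(u)}_u=0$, so $\tilde P_{u,u}=0$, not because column $u$ is unchanged.

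For (ii) your route is different and stronger. The paper only asserts $\|E\|_2=\Theta(1/n)$ on the grounds that the redistribution entries are smaller than $E_{u,v}$ by a factor $\sim 1/n$. That entrywise comparison does not by itself control the norm, because there are $4(k-2)$ such entries, i.e.\ up to $\Theta(n)$ of them. Your split of $E$ handles this accumulation correctly: the leading part is $-p(\mathbf e_u\mathbf e_v^T+\mathbf e_v\mathbf e_u^T)$, the remainder is $\mathbf w\mathbf s^T+\mathbf s\mathbf w^T$ with $\mathbf s=\mathbf e_u+\mathbf e_v$, and the remainder is bounded via $\|\mathbf w\|_2=O(n^{-3/2})$.

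Your cross-check actually gives the norm in closed form. We have $E(\mathbf e_u-\mathbf e_v)=p(\mathbf e_u-\mathbf e_v)$, and $E$ vanishes on the orthogonal complement of $\mathrm{span}\{\mathbf e_u,\mathbf e_v,\mathbf w\}$. On $\mathrm{span}\{\mathbf s,\mathbf w\}$, in orthonormal coordinates, $E$ acts as $\bigl(\begin{smallmatrix}-p&\sqrt{2}\,\|\mathbf w\|_2\\ \sqrt{2}\,\|\mathbf w\|_2&0\end{smallmatrix}\bigr)$. Hence $\|E\|_2=\tfrac{1}{2}\bigl(p+\sqrt{p^2+8\|\mathbf w\|_2^2}\bigr)=p+O(n^{-2})$. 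This also gives $\|E\|_2\ge p$ for every $n\ge 3$, not only asymptotically.

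Your doubt about the rank is justified. It is an error in the statement, not a gap in your argument. The determinant $-2\|\mathbf w\|_2^2$ of that $2\times 2$ block is nonzero whenever $k\ge 3$, so $E$ has rank exactly $3$. Concretely, for $n=4$, $m=0$, the rows of $E$ indexed by $u$, $v$ and a third node are linearly independent. Rank $2$ occurs only when $k=2$, where $\mathbf w=0$.

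The paper's proof gives no argument for ``rank at most $2$'' beyond the support pattern. The claim is presumably carried over from $\mathrm{rank}(\Delta)=2$. But $\Delta$ is supported on two columns, whereas the symmetric $E$ is supported on two rows and two columns. That pattern allows rank up to $4$ in general, and here gives $3$ because both cross-couplings equal $\mathbf w$.

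Nothing later in the paper uses the rank of $E$; only $\|E\|_2$ and the block form $E=0_{(m)}\oplus E_U$ are used. So your proposed fix is the right one: state rank at most $3$, or rank $2$ for the leading part only. With that amendment, your proposal proves the proposition completely.
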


	%SECOND APPENDIX BEGIN----------------------------------------------------------------------------------------------------------------------------------------
		\toappendix{
			\stepcounter{AppCount}
			\subsection*{\Alph{AppCount}. \space Perturbation $E$ to the discriminant matrix $C$}\labelshort[\Alph{AppCount}]{app:perturbation_discriminant}
			In this appendix we derive the explicit form of the perturbation $E$ calculating its value at the first order approximation, proving \Cref{prop:E_structure} (i) and (ii).
			
			\begin{proof}[Proof of \Cref{prop:E_structure} (i)]
				\medskip\noindent\textit{\textbf{Structure and norm of $E = \tilde{C} - C$: support and explicit entries.}}
								
				Recalling that matrix $C$ is $C_{x,y}=\sqrt{P_{y,x}P_{x,y}}$, since $E=\tilde{C}-C$ we have:
				\begin{equation}
					E_{x,y} \;=\; \sqrt{(P_{y,x}+\Delta_{y,x})(P_{x,y}+\Delta_{x,y})}\;-\;\sqrt{P_{y,x}P_{x,y}}.
				\end{equation}
				
				Because $\Delta$ modifies only the columns corresponding to the two nodes of the removed edge, $u$ and $v$, the matrix $E$ has support only on rows and columns $u$ and $v$. More precisely, recalling that $E$ is symmetric, for any $x$ we have:
				\begin{gather}\notag
					E_{x,u}=\sqrt{(P_{u,x}+\Delta_{u,x})(P_{x,u}+\Delta_{x,u})}- \sqrt{P_{u,x}P_{x,u}},\\
					E_{x,v}=\sqrt{(P_{v,x}+\Delta_{v,x})(P_{x,v}+\Delta_{x,v})}- \sqrt{P_{v,x}P_{x,v}}.
				\end{gather}
				Considering that the perturbation matrix has only columns $u$ and $v$ different from zero, then $\Delta_{u,x}=\Delta_{v,x}=0 $ for each $x$, and thus we can write
				\begin{gather}\notag
					E_{x,u} = \sqrt{P_{u,x}}\left(\sqrt{P_{x,u}+\Delta_{x,u}}-\sqrt{P_{x,u}}\right),\\
					E_{x,v} = \sqrt{P_{v,x}}\left(\sqrt{P_{x,v}+\Delta_{x,v}}-\sqrt{P_{x,v}}\right).
				\end{gather}
				All other entries with $x,y$ different from $u$ and $v$ are zero, and $E_{x,y}=0$ also when $x$ or $y$ corresponds to a marked node. Recalling the perturbation in Eq.\eqref{eq:perturbation_direct_edge} and writing both of the directed edge perturbations elementwise, we obtain
				\begin{equation}
					\Delta^{dir}_{x,u}=p\left(\mathbf{r}^{(u)}_x-\delta_{xv}\right), \quad \Delta^{dir}_{x,v}=p\left(\mathbf{r}^{(v)}_x-\delta_{xu}\right).
				\end{equation}
				Since on the marked complete graph we have $P_{x,u}=P_{u,x}=P_{x,v}=P_{v,x}=p=\tfrac{1}{n-1}$, so:
				\begin{equation}\label{eq:E_non_zero_general}
						E_{x,u} \;=\; \frac{1}{n-1}\left(\sqrt{\,1+\mathbf{r}^{(u)}_x - \delta_{x,v}\,}-1\right),\quad
						E_{x,v} \;=\; \frac{1}{n-1}\left(\sqrt{\,1+\mathbf{r}^{(v)}_x - \delta_{x,u}\,}-1\right)\;.
				\end{equation}
				
				\medskip\noindent\textbf{\textit{First order approximation.}}
				For the removed entries, $x=u,v$, using $\mathbf{r}^{(u)}_v=\mathbf{r}^{(v)}_u=0$,
				\begin{equation}\label{eq:E_non_zero_uv}
					E_{u,v}=E_{v,u} \;=\; \frac{1}{n-1}\left(\sqrt{\,1-1\,}-1\right) \;=\; -\,\frac{1}{n-1}=-p,
				\end{equation}
				and $E_{u,u}=E_{v,v}=0$. It corresponds to removing the edge between $u$ and $v$, whose original weight was exactly $p=\frac{1}{n-1}$. This brings a negative contribution to the eigenvalues variation described by Eq.\eqref{eq:bauer_fike_C}.
				
				For entries involving unmarked nodes different from $u$ and $v$, i.e. $x\neq u,v$, we expand the term $\sqrt{1+\mathbf{r}^{(u/v)}_x}$ using a first order Taylor approximation. 
				
				Using $\sqrt{1+\varepsilon}=1+\tfrac{\varepsilon}{2}+O(\varepsilon^2)$ we have
				
				\begin{equation}\label{eq:E_non_zero_others}
					E_{x,u} \;=\; \frac{1}{2(n-1)}\,\mathbf{r}^{(u)}_x \;+\; O\!\left(\frac{(\mathbf{r}^{(u)}_x)^2}{n}\right),
					\qquad
					E_{x,v} \;=\; \frac{1}{2(n-1)}\,\mathbf{r}^{(v)}_x \;+\; O\!\left(\frac{(\mathbf{r}^{(v)}_x)^2}{n}\right)
				\end{equation}
				Considering that $\mathbf{r}_x^{(u)}=\mathbf{r}^{(v)}_x=\frac{1}{n-2}\;\forall x\neq u,v$, we have 
				\begin{equation}
					E_{x,u}=E_{x,v}=E_{u,x}=E_{v,x}\approx \frac{1}{n^2} + O\!\left(\frac{1}{n^3}\right),
				\end{equation}
				these are positive contributions to the variation of the eigenvalues in Eq.\eqref{eq:bauer_fike_C}.\qedhere
			\end{proof}
			
			\begin{proof}[Proof of \Cref{prop:E_structure} (ii)]
				\medskip\noindent\textit{\textbf{Structure and norm of $E = \tilde{C} - C$: Spectral norm.}}
				The explicit form shows that $E$ is symmetric, supported only on $\{u,v\}$ rows/columns, and of rank at most $2$. In particular,
				\begin{equation}
					\|E\|_2 \;=\; \Theta\!\left(\frac{1}{n}\right),
				\end{equation}
				with the leading contribution coming from $E_{u,v}=-\frac{1}{n-1}=-p$, while the other entries from the redistribution are smaller by a factor $\sim 1/n-2$. Thus we can write from Eq.\eqref{eq:bauer_fike_C}:
				\begin{equation}
					\max_i |\lambda_i(\tilde{C})-\lambda_i(C)| = O\!\left(\frac{1}{n}\right).
				\end{equation} \qedhere
			\end{proof}
		}
	%SECOND APPENDIX END-------------------------------------------------------------------------------------
	
	Consequently, the eigenvalues of $\tilde{C}$ and $C$ satisfy
	\begin{equation}
		\max_i\,|\lambda_i(\tilde{C}) - \lambda_i(C)|
		= O\!\left(\frac{1}{n}\right)
	\end{equation}
	in a first order approximation of $\|E\|_2$. This means that, as the graph grows, the eigenvalues of the discriminant matrix, which govern the dynamics of the quantum walk, shift by an amount that vanishes as $O\!\left(\frac{1}{n}\right)$. 
	
	\paragraph{First order approximation of the eigenvalue shift.}
	
	Another way to estimate the shift of $C$'s eigenvalues is through the general matrix perturbation theory~\cite{Stewart1990} and the definition of the Rayleigh quotient~\cite{HornJohnson2012}. Let
	$C \in \mathbb{R}^{n\times n}$ be a real symmetric matrix and $E$ a symmetric perturbation, with $\tilde{C} = C + E$. Denote by $\lambda_i$ a simple eigenvalue of $C$, with corresponding normalized eigenvector $\mathbf{v}_i$, and by $\tilde{\lambda}_i = \lambda_i(\tilde{C})$ the perturbed eigenvalue. For any unit vector $\mathbf{x}$, the Rayleigh quotient of $\mathbf{x}$ with respect to $C$ is
	\begin{equation}
		R_C(\mathbf{x}) = \mathbf{x}^{\!T} C\,\mathbf{x},
	\end{equation}
	and when $\mathbf{x} = \mathbf{v}_i$ is an eigenvector it reproduces the corresponding eigenvalue, $R_C(\mathbf{v}_i) = \lambda_i$. The classical first-order expansion of the perturbed eigenvalue follows from the general matrix perturbation theorem (Theorem~2.3, p.~64 in~\cite{Stewart1990}), giving the following result.
	
	\begin{lemma}[First-order eigenvalue shift]
		\label{lem:rayleigh_shift}
		Let $\lambda_i$ be a simple eigenvalue of $C$ with normalized eigenvector $\mathbf{v}_i$, and let $\tilde{C} = C + E$ with $E$ symmetric. Then
		\begin{equation}
			\Delta\lambda_i
			\;=\; \mathbf{v}_i^{\!T} E\,\mathbf{v}_i + O(\|E\|_2^2).
		\end{equation}
		Hence, the leading-order variation of a simple eigenvalue under a symmetric perturbation $E$ is precisely the Rayleigh quotient of $E$ evaluated on the unperturbed eigenvector.
	\end{lemma}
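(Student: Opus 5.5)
The plan is the standard Rayleigh--Schrödinger argument for a simple eigenvalue of a real symmetric matrix. We combine an analytic (or at least continuous) eigenpair of $\tilde C = C+tE$ with a Rayleigh-quotient identity; the eigenvector error enters only quadratically.

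First we localize the eigenvalue. Since $\lambda_i$ is simple, let $\delta = \min_{j\neq i}|\lambda_j-\lambda_i|>0$. By Weyl's inequality, Eq.~\eqref{eq:bauer_fike_C}, every eigenvalue of $\tilde C$ lies within $\|E\|_2$ of the corresponding one of $C$. Hence for $\|E\|_2<\delta/2$ there is exactly one eigenvalue $\tilde\lambda_i$ of $\tilde C$ in the interval $(\lambda_i-\delta/2,\lambda_i+\delta/2)$, and it is simple. Let $\tilde{\mathbf v}_i$ be a unit eigenvector for it, with sign chosen so that $\mathbf v_i^T\tilde{\mathbf v}_i\ge 0$.

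Next we control the eigenvector rotation. This is the key step. Write $\tilde{\mathbf v}_i = \alpha\mathbf v_i + \mathbf w$ with $\mathbf w\perp\mathbf v_i$. Project the equation $(C-\tilde\lambda_i)\tilde{\mathbf v}_i = -E\tilde{\mathbf v}_i$ onto $\mathbf v_i^\perp$. There $C-\tilde\lambda_i$ is invertible with inverse norm at most $2/\delta$, because the remaining eigenvalues of $C$ are at distance at least $\delta-\|E\|_2\ge\delta/2$ from $\tilde\lambda_i$. This gives $\|\mathbf w\|_2\le 2\|E\|_2/\delta$, which is the Davis--Kahan $\sin\theta$ bound. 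Equivalently, one may invoke Theorem~2.3 of~\cite{Stewart1990} directly. We expect this step to be the only nontrivial point, and the only place where simplicity of $\lambda_i$ is used.

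Then we use the exact identity. Multiply $\tilde C\tilde{\mathbf v}_i=\tilde\lambda_i\tilde{\mathbf v}_i$ on the left by $\mathbf v_i^T$ and use the symmetry of $C$, which gives $\mathbf v_i^T C = \lambda_i\mathbf v_i^T$. We obtain
\begin{equation*}
(\tilde\lambda_i-\lambda_i)\,\mathbf v_i^T\tilde{\mathbf v}_i = \mathbf v_i^T E\,\tilde{\mathbf v}_i .
\end{equation*}
Substituting $\tilde{\mathbf v}_i=\alpha\mathbf v_i+\mathbf w$ and using $\mathbf v_i^T\tilde{\mathbf v}_i=\alpha$ yields
\begin{equation*}
\Delta\lambda_i = \mathbf v_i^TE\mathbf v_i + \alpha^{-1}\mathbf v_i^TE\mathbf w .
\end{equation*}
Since $\alpha=\sqrt{1-\|\mathbf w\|_2^2}\ge 1/2$ for $\|E\|_2$ small, the remainder satisfies $|\alpha^{-1}\mathbf v_i^TE\mathbf w|\le 2\|E\|_2\|\mathbf w\|_2\le 4\|E\|_2^2/\delta$. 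This is $O(\|E\|_2^2)$, with a constant depending only on the gap $\delta$, which proves the lemma.

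For later use it is worth noting that the implied constant is $O(1/\delta)$. For $\lambda_\star$ in Eq.~\eqref{eq:eigenvalues_C}, the gap to $\lambda_M=1$ and to $\lambda_B$ is $\Theta(m/n)$ and $\Theta(1)$ respectively. Applying the lemma uniformly in $n$ therefore requires checking that $\|E\|_2^2/\delta$ is indeed subleading, which holds for $m=\Theta(n)$.
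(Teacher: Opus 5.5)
Your proof is correct, and it goes further than the paper, which gives no argument for this lemma. The paper simply invokes the standard first-order expansion for a simple eigenvalue (Theorem~2.3 of \cite{Stewart1990}, and Kato in Appendix~C), which for symmetric $C$ reduces to the Rayleigh quotient $\mathbf v_i^TE\mathbf v_i$. Your route is self-contained and non-asymptotic: Weyl localization of $\tilde\lambda_i$, a reduced-resolvent bound on the component $\mathbf w\perp\mathbf v_i$, and the exact identity $(\tilde\lambda_i-\lambda_i)\,\mathbf v_i^T\tilde{\mathbf v}_i=\mathbf v_i^TE\tilde{\mathbf v}_i$. It yields the explicit remainder $4\|E\|_2^2/\delta$ whenever $\|E\|_2\le c\,\delta$. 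That explicit dependence on the gap is exactly what the citation hides. It matters here because $\delta$ depends on $n$ and $m$.

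Your closing remark, however, draws the wrong conclusion from it. First, the gap $\lambda_\star-\lambda_B$ equals $k/(n-1)$, which is $\Theta(1)$ only when $k=\Theta(n)$. Second, even for $m,k=\Theta(n)$ your remainder is $4\|E\|_2^2/\delta=\Theta(1/n^2)$, since $\|E\|_2\ge|E_{u,v}|=1/(n-1)$. This is the same order as the first-order term $\mathbf v_\star^TE\mathbf v_\star=\Theta(1/n^2)$ of Proposition~\ref{prop:asymptotic_shift}, so it is not subleading.

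Your own argument repairs this with one more line. Since $\mathbf w\perp\mathbf v_i$, one has $\mathbf v_i^TE\mathbf w=(QE\mathbf v_i)^T\mathbf w$ with $Q=I-\mathbf v_i\mathbf v_i^T$. The projected equation then gives $\|\mathbf w\|_2\le 4\|QE\mathbf v_i\|_2/\delta$ once $\|E\|_2\le\delta/4$, so the remainder is at most $8\|QE\mathbf v_i\|_2^2/\delta$. Because $E=0_{(m)}\oplus E_U$, you may also run the argument inside the unmarked block, where the only relevant gap is $k/(n-1)$. With $\|QE\mathbf v_\star\|_2\le\|E\mathbf v_\star\|_2=O(n^{-1}k^{-1/2})$, this gives a remainder $O(1/(nk^2))$. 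In the intermediate regime that is $O(n^{-3})$, which is genuinely subleading.
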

	
	In our setting, the relevant eigenvalue is the simple eigenvalue $\lambda_\star = \frac{n-m-1}{n-1}$, whose normalized eigenvector for the complete graph is
	\begin{equation}
		\mathbf{v}_\star = \frac{1}{\sqrt{n-m}}\,\mathbf{1}_U,
	\end{equation}
	with $U = V \setminus M$ the set of unmarked nodes. Applying \Cref{lem:rayleigh_shift}, the first-order shift of $\lambda_\star$ under the perturbation $E = \tilde{C} - C$ is
	\begin{align}\notag
		\Delta\lambda_\star& = \; \mathbf{v}_\star^{\!T} E\,\mathbf{v}_\star\;+\;O(\|E\|_2^2)\\
		&=\; \frac{1}{n-m}\,\mathbf{1}_U^{\!T} E\,\mathbf{1}_U \;+\; O(\|E\|_2^2).
	\end{align}
	Since $\|E\|_2 = O(1/n)$, the quadratic correction $O(\|E\|_2^2) = O(1/n^2)$ is negligible, and the sign of the first-order term $\mathbf{v}_\star^{\!T} E\,\mathbf{v}_\star$ determines the direction
	of the spectral shift. A negative value implies a systematic decrease of the gap eigenvalue under the perturbation; a positive value implies an increase. Since we are interested in characterizing this change both qualitatively and quantitatively, we need to evaluate this sign, which is the content of \Cref{prop:negative_shift}.

	\paragraph{Sign and exact eigenvalue shift.}
	We now establish that the first-order perturbation of the gap eigenvalue is strictly negative, which is the key step in proving that the removal of a single edge from $K_n$ always reduces the relevant spectral quantity controlling the quantum walk dynamics. The argument relies on bounding the Rayleigh quotient of $E$ with respect to the unperturbed gap eigenvector $\mathbf{v}_\star$; details and proof of the following proposition, together with the exact magnitude of the shift, are given in \nameshortref{app:gap_eigenvalue_shift}.
	\begin{proposition}[Negative shift of the gap eigenvalue]
		\label{prop:negative_shift}
		Let $E = \tilde{C} - C$ be the perturbation of the discriminant matrix induced by the removal of the undirected edge $\{u,v\}$ from $K_n$, with $p = \frac{1}{n-1}$, under any redistribution scheme satisfying $\sum_{x\in V\setminus\{u,v\}}\mathbf{r}^{(u)}_x = \sum_{x\in V\setminus\{u,v\}}\mathbf{r}^{(v)}_x = 1$. Let $\mathbf{v}_\star =
		\frac{1}{\sqrt{n-m}}\mathbf{1}_U$ be the eigenvector associated with the gap eigenvalue $\lambda_\star$ of $C$. Then the Rayleigh quotient of $E$ with respect to $\mathbf{v}_\star$ satisfies
		\begin{equation}
			\mathbf{v}_\star^{\!T} E\, \mathbf{v}_\star
			\;<\; 0
		\end{equation}
		strictly. Consequently, the first-order (linear) approximation to the shift of the gap eigenvalue,
		\begin{equation}
			\tilde\lambda_\star - \lambda_\star = \mathbf{v}_\star^{\!T} E\,\mathbf{v}_\star + O\!\left(\|E\|_2^2\right),
		\end{equation}
		is strictly negative at leading order, confirming at this order the conjecture stated in~\cite{GIORDANO2026170305}. The exact magnitude of the shift, together with a proof of its strict negativity to all orders under the uniform redistribution assumption, is established in \Cref{prop:exact_shift}.
	\end{proposition}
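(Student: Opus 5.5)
Since $\mathbf{v}_\star = \frac{1}{\sqrt{n-m}}\mathbf{1}_U$, the plan is to compute $\mathbf{1}_U^{\!T} E\,\mathbf{1}_U$ exactly from the entries in \Cref{prop:E_structure}(i), using the exact formula Eq.~\eqref{eq:E_non_zero_general} rather than the Taylor expansion, so that the argument holds for every $n$ and every admissible redistribution. By \Cref{prop:E_structure}(i), $E$ is symmetric, supported on rows and columns $u,v$, vanishes on marked rows and columns, and has $E_{u,u}=E_{v,v}=0$. Hence
\begin{equation}
\mathbf{1}_U^{\!T} E\,\mathbf{1}_U \;=\; 2E_{u,v} \;+\; 2\sum_{x\in U\setminus\{u,v\}}\left(E_{x,u}+E_{x,v}\right),
\end{equation}
where $2E_{u,v}=-\frac{2}{n-1}$. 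One caveat must be stated at this point. For marked $x$ we have $P_{u,x}=0$, since column $x$ is $\mathbf{e}_x$. So $\tilde C_{x,u}=0$ even if $\mathbf{r}^{(u)}_x>0$. This is why only unmarked $x$ enter the sum. It also means that the weight redistributed to marked nodes never helps the sum, which only makes negativity easier.

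Next I bound the positive part. From Eq.~\eqref{eq:E_non_zero_general}, for $x\neq u,v$ unmarked, $E_{x,u}=\frac{1}{n-1}\bigl(\sqrt{1+\mathbf{r}^{(u)}_x}-1\bigr)$. The key elementary inequality is $\sqrt{1+\varepsilon}-1\le \varepsilon/2$ for $\varepsilon\ge 0$, which follows from concavity. Equality holds only at $\varepsilon=0$, so the inequality is strict whenever $\varepsilon>0$. Summing and using $\sum_x \mathbf{r}^{(u)}_x = 1$ gives
\begin{equation}
\sum_{x\in U\setminus\{u,v\}} E_{x,u}\;\le\;\frac{1}{2(n-1)}\sum_{x\in U\setminus\{u,v\}}\mathbf{r}^{(u)}_x\;\le\;\frac{1}{2(n-1)},
\end{equation}
and the same bound holds for $v$. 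Therefore $\mathbf{1}_U^{\!T}E\,\mathbf{1}_U\le -\frac{2}{n-1}+2\cdot\frac{1}{2(n-1)}\cdot 2 = 0$.

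Strictness is the step I expect to need care, since the crude bound is tight at zero. It comes from the equality case. Because $\sum_x \mathbf{r}^{(u)}_x=1$, some $\mathbf{r}^{(u)}_x>0$. If that $x$ is unmarked, then $\sqrt{1+\varepsilon}-1<\varepsilon/2$ is strict for it. If all the weight goes to marked nodes, then the second inequality above is strict instead, since the unmarked sum is then $<1$. Either way, $\mathbf{v}_\star^{\!T}E\,\mathbf{v}_\star<0$. For the uniform scheme the estimate is quantitative: $\varepsilon = \frac{1}{n-2}$ and the gap $\varepsilon/2-(\sqrt{1+\varepsilon}-1)\approx\varepsilon^2/8$ give a deficit of order $1/n^2$ after dividing by $n-m$, consistent with the later $\Theta(1/n^2)$ claim. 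The concluding statement about $\tilde\lambda_\star-\lambda_\star$ then follows directly from \Cref{lem:rayleigh_shift}, using that $\lambda_\star$ is simple, together with $\|E\|_2=\Theta(1/n)$ from \Cref{prop:E_structure}(ii). Strictly speaking, "negative at leading order" means the linear term is negative; the all-orders comparison with $O(\|E\|_2^2)$ is deferred to \Cref{prop:exact_shift}.
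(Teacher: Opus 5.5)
Your argument is correct and is essentially the paper's proof. Both split $\mathbf{1}_U^{\!T}E\,\mathbf{1}_U$ into the $-2p$ contribution from the $\{u,v\}\times\{u,v\}$ block plus the unmarked cross terms, and bound those terms by $\sqrt{1+\varepsilon}-1\le\varepsilon/2$ and $\sum_{x\in U\setminus\{u,v\}}\mathbf{r}^{(u)}_x\le 1$; the paper merely repackages this through row sums. Your explicit dichotomy for strictness is cleaner than the paper's chain, whose first strict inequality fails as written when no unmarked node receives weight (e.g.\ $k=2$).

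One correction to your closing aside. Under uniform redistribution, the concavity gap contributes only $\Theta(1/n^3)$ after dividing by $n-m$; this is the paper's $T_B$. The $\Theta(1/n^2)$ in the search regime comes instead from the weight fraction $m/(n-2)$ sent to marked nodes, which is the paper's $T_A$.
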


	This evaluation is essential for rigorously establishing the findings presented in \cite{GIORDANO2026170305}. In that work, the decrease of the simple eigenvalue $\lambda_\star$ was extrapolated from numerical simulations and theoretically assessed from an analogous case involving the adjacency matrix. Here, we provide a formal proof for our previous conjecture: specifically, that the eigenvalue $\tilde{\lambda}_\star$ cannot exceed $\lambda_\star$. A reformulation of the completeness testing algorithm of~\cite{GIORDANO2026170305} on this analytical basis is given in \nameshortref{app:algorithm_reformulation}.
	
	Beyond its sign, we compute $\Delta\lambda_\star$ exactly under the uniform redistribution assumption, and then extract its asymptotic behavior for large $n$. Consistently with the standard Szegedy quantum walk construction, marking a node only replaces its own outgoing column with a self-loop and leaves the transition weights of unmarked nodes unchanged; we therefore assume that the redistribution vectors distribute the weight $p = \frac{1}{n-1}$ of the removed edge uniformly over \emph{all} remaining nodes other than $u$ and $v$, marked and unmarked alike:
	\begin{equation}
		\mathbf{r}^{(u)}_x = \mathbf{r}^{(v)}_x = \frac{1}{n-2}, \qquad x \in V \setminus \{u,v\}.
	\end{equation}
	
	\begin{proposition}[Exact shift of the gap eigenvalue]\label{prop:exact_shift}
		Under the uniform redistribution assumption, with the weight $p$ spread uniformly over all $n-2$ remaining nodes $V\setminus\{u,v\}$ (marked and unmarked alike), the first-order shift of the gap eigenvalue is
		\begin{equation}\label{eq:exact_rayleigh_shift}
			\Delta\lambda_\star
			\;=\; -\,\frac{2p}{k\,(n-2)}\left[\,m+(k-2)\left(\sqrt{n-1}-\sqrt{n-2}\right)^{2}\right],
		\end{equation}
		where $k := n-m$ is the number of unmarked nodes. The bracketed quantity is a sum of two nonnegative terms, strictly positive for every $n\ge3$ and every $0\le m\le n-2$; in particular, $\Delta\lambda_\star<0$ for all finite $n$, confirming the conjecture stated in~\cite{GIORDANO2026170305} to all orders.
	\end{proposition}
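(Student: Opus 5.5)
The plan is to evaluate the Rayleigh quotient $\mathbf{v}_\star^{\!T}E\,\mathbf{v}_\star$ of \Cref{lem:rayleigh_shift} in closed form. I will use the \emph{exact} entries of $E$ from Eq.~\eqref{eq:E_non_zero_general}, not their Taylor approximations. Since $\mathbf{v}_\star=\frac{1}{\sqrt k}\mathbf{1}_U$, we have $\Delta\lambda_\star=\frac1k\,\mathbf{1}_U^{\!T}E\,\mathbf{1}_U$, so only the entries $E_{x,y}$ with $x,y\in U$ contribute.

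\textbf{Step 1: which entries of $E$ survive.} By \Cref{prop:E_structure}(i), $E$ is supported on rows and columns $u,v$, with $E_{u,u}=E_{v,v}=0$. Marked $x$ give no contribution. Indeed, for marked $x$ the column $P_{\cdot,x}=\mathbf{e}_x$ forces $P_{u,x}=P_{v,x}=0$, so $C_{x,u}=\tilde C_{x,u}=0$ even though weight $p/(n-2)$ is redistributed onto $x$. This is also automatic because $\mathbf{v}_\star$ vanishes on $M$.

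The surviving entries are therefore the following.
\begin{itemize}
\item[$\circ$] The pair $E_{u,v}=E_{v,u}=-p$, contributing $-2p$.
\item[$\circ$] For each of the $k-2$ unmarked $x\neq u,v$, the four entries $E_{x,u},E_{u,x},E_{x,v},E_{v,x}$. Each equals $p\big(\sqrt{1+\tfrac{1}{n-2}}-1\big)=p\big(\tfrac{\sqrt{n-1}}{\sqrt{n-2}}-1\big)$.
\end{itemize}
Adding these gives
\begin{equation*}
\mathbf{1}_U^{\!T}E\,\mathbf{1}_U=-2p+4(k-2)\,p\left(\frac{\sqrt{n-1}}{\sqrt{n-2}}-1\right).
\end{equation*}

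\textbf{Step 2: algebraic rearrangement.} This is the only step that needs care. Factor out $-\frac{2p}{n-2}$ to get
\begin{equation*}
-\frac{2p}{n-2}\Big[(n-2)-2(k-2)\big(\sqrt{(n-1)(n-2)}-(n-2)\big)\Big].
\end{equation*}
Then use $(\sqrt{n-1}-\sqrt{n-2})^2=2n-3-2\sqrt{(n-1)(n-2)}$ together with $n-2=m+(k-2)$. Subtracting $m+(k-2)(\sqrt{n-1}-\sqrt{n-2})^2$ from the bracket, the $\sqrt{(n-1)(n-2)}$ terms cancel. What remains is $(k-2)+2(k-2)(n-2)-(k-2)(2n-3)=0$. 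Dividing by $k$ then yields Eq.~\eqref{eq:exact_rayleigh_shift}.

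\textbf{Step 3: sign.} Both bracketed terms are nonnegative, since $m\ge0$ and $k-2\ge0$ when $m\le n-2$, and $\sqrt{n-1}>\sqrt{n-2}$. Strict positivity follows in two cases.
\begin{itemize}
\item[$\circ$] If $m\ge1$, the first term is already positive.
\item[$\circ$] If $m=0$, then $k=n\ge3$, so $k-2\ge1$ and the second term is positive.
\end{itemize}
Hence $\Delta\lambda_\star<0$ for every finite $n\ge3$.

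For the phrase ``to all orders'', I would note that the computation above uses the exact $E$, with no Taylor truncation in the redistribution entries. The remaining gap between $\tilde\lambda_\star$ and the Rayleigh quotient is the second-order term of \Cref{lem:rayleigh_shift}. I expect this to be the main obstacle. To show that this term cannot reverse the sign, I would use the variational characterisation: $\tilde\lambda_\star$ is the second-largest eigenvalue in the unmarked block, and I would compare it with a restricted Rayleigh quotient on the $E$-invariant complement. This part may need a separate argument beyond this proposition.
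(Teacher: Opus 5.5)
Your proof is correct and is essentially the paper's: the same exact entries ($E_{u,v}=-p$, and $p\bigl(\sqrt{(n-1)/(n-2)}-1\bigr)$ on each of the $4(k-2)$ unmarked cross positions), the same total $\mathbf{1}_U^{T}E\,\mathbf{1}_U=-2p+4(k-2)p\bigl(\sqrt{(n-1)/(n-2)}-1\bigr)$, and an equivalent simplification and sign split. The paper uses the conjugate identity $d(d+2\sqrt{n-2})=1$ with $d=\sqrt{n-1}-\sqrt{n-2}$ where you expand $d^2$ directly, and it splits the sign on $k$ rather than on $m$.

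Like yours, the paper's proof stops at the exact Rayleigh quotient, so its ``to all orders'' only means that the entries are not Taylor-truncated, and your caution about the remainder is justified. For $m=0$ the perturbed chain is the simple random walk on $K_n$ minus an edge, which is reversible, so $\tilde C$ still has top eigenvalue exactly $1=\lambda_\star$; the true shift therefore vanishes even though the first-order term is strictly negative. Note also that $\tilde\lambda_\star$ is the largest eigenvalue of the unmarked block, not the second-largest. Since Rayleigh quotients only bound that eigenvalue from below, the variational comparison you sketch points the wrong way.
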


	\begin{proposition}[Asymptotic expansion of $\Delta\lambda_\star$]
		\label{prop:asymptotic_shift}
		Setting $k=n-m$, expression~\eqref{eq:exact_rayleigh_shift} decomposes exactly as $|\Delta\lambda_\star| = T_A + T_B$, with
		\begin{equation}\label{eq:delta_lambda_star_explicit}
			T_A = \frac{2m}{(n-1)(n-2)\,k}
			\qquad\text{(exact)},
			\qquad
			T_B = \Theta\!\left(\frac{1}{n^3}\right)
		\end{equation}
		uniformly over all $k\ge3$. Consequently, the asymptotic order of $\Delta\lambda_\star$ depends on the regime of $m$:
		\begin{itemize}
			\item \textbf{Few marked nodes} ($m=o(n)$): $|\Delta\lambda_\star| = \Theta(1/n^3)$ if $m=O(1)$, and $|\Delta\lambda_\star| = \Theta(m/n^3)$ if $m\to\infty$ with $m=o(n)$;
			\item \textbf{Marked nodes in an intermediate range} ($\varepsilon n\le m\le(1-\varepsilon)n$ for some $\varepsilon\in(0,\tfrac12]$): $|\Delta\lambda_\star| = \Theta(1/n^2)$;
			\item \textbf{Few unmarked nodes} ($k=n-m=o(n)$): 
			\begin{equation}\label{eq:delta_lambda_star_explicit_k}
				|\Delta\lambda_\star| = \Theta\!\left(\frac{1}{n\,k}\right),
			\end{equation}
			in particular $|\Delta\lambda_\star|=\Theta(1/n)$ when $k=\Theta(1)$.
		\end{itemize}
	\end{proposition}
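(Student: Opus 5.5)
Starting from the closed form \eqref{eq:exact_rayleigh_shift} of \Cref{prop:exact_shift}, we substitute $p=\frac{1}{n-1}$ and distribute the prefactor $\frac{2p}{k(n-2)}$ over the two bracketed terms. The first term gives exactly
\begin{equation*}
T_A=\frac{2m}{(n-1)(n-2)k},
\end{equation*}
with no approximation. The second gives
\begin{equation*}
T_B=\frac{2(k-2)}{(n-1)(n-2)k}\left(\sqrt{n-1}-\sqrt{n-2}\right)^2 .
\end{equation*}
Both are nonnegative, so $|\Delta\lambda_\star|=T_A+T_B$ exactly. This reduces the proposition to estimating two explicit elementary expressions.

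For $T_B$ we rationalize, $\sqrt{n-1}-\sqrt{n-2}=1/(\sqrt{n-1}+\sqrt{n-2})$. Since $2\sqrt{n-2}\le\sqrt{n-1}+\sqrt{n-2}\le2\sqrt{n-1}$, the squared difference lies between $\frac{1}{4(n-1)}$ and $\frac{1}{4(n-2)}$, hence is $\Theta(1/n)$ with absolute constants for $n\ge3$. The factor $\frac{k-2}{k}$ lies in $[\tfrac13,1)$ once $k\ge3$; this is exactly where the hypothesis $k\ge3$ enters, since $k=2$ kills $T_B$. Together with $\frac{2}{(n-1)(n-2)}=\Theta(1/n^2)$, this gives $T_B=\Theta(1/n^3)$ uniformly in $k\ge3$, with explicit constants.

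The regimes then follow by comparing $T_A$ with $T_B$, writing $T_A=\Theta\!\left(\frac{m}{n^2k}\right)$.
\begin{itemize}
\item If $m=o(n)$, then $k=\Theta(n)$ and $T_A=\Theta(m/n^3)$. For $m=O(1)$ both terms are $O(1/n^3)$ and $T_B\ge c/n^3$, so the sum is $\Theta(1/n^3)$. For $m\to\infty$, $T_A$ dominates and the sum is $\Theta(m/n^3)$. The case $m=0$ is covered by $T_B$ alone.
\item In the intermediate range $\varepsilon n\le m\le(1-\varepsilon)n$, both $m$ and $k$ are $\Theta(n)$ with constants depending on $\varepsilon$. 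Then $T_A=\Theta(1/n^2)$, which dominates $T_B=O(1/n^3)$.
\item If $k=o(n)$, then $m=n-k=\Theta(n)$, so $T_A=\Theta\!\left(\frac{1}{nk}\right)$. Since $k\le n$, we have $\frac{1}{nk}\ge\frac{1}{n^2}\gg\frac{1}{n^3}$, so $T_B$ is negligible. The case $k=\Theta(1)$ gives $\Theta(1/n)$.
\end{itemize}

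The main care point is not the computation, which is elementary, but making the constants in $T_B$ uniform in $k$. This means using explicit two-sided bounds rather than a Taylor expansion, and stating the $k\ge3$ restriction so that the $\Theta$ lower bound holds. A second subtlety is the boundary cases $k\in\{1,2\}$ in the last regime. There $T_B$ vanishes or is undefined in sign, but $T_A$ alone already yields $\Theta(1/(nk))$, so I would remark that the third bullet holds for all $k\ge1$ through $T_A$ alone.
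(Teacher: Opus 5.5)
Your proposal is correct and follows the paper's proof essentially step for step. It uses the same split of the closed form into $T_A$ and $T_B$, the same rationalization sandwich for $(\sqrt{n-1}-\sqrt{n-2})^2$ (your lower bound $1/(4(n-1))$ is marginally sharper than the paper's $1/(4n)$), the same $[\tfrac13,1)$ bound on $(k-2)/k$ for $k\ge3$, and the same regime-by-regime comparison, including the $k=2$ boundary where $T_B=0$. One small correction: $k=1$ is not a case to cover at all, since both endpoints $u,v$ are unmarked and the closed form of \Cref{prop:exact_shift} is only stated for $m\le n-2$, i.e.\ $k\ge2$.
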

	
	%THIRD APPENDIX BEGIN----------------------------------------------------------------------------------------------------------------------------------------
	\toappendix{
		\stepcounter{AppCount}
		\subsection*{\Alph{AppCount}. \space Gap eigenvalue shift: sign of the shift, exact evaluation and first order approximation of $\Delta \lambda_\star$}\labelshort[\Alph{AppCount}]{app:gap_eigenvalue_shift}
			\begin{proof}[Proof of \Cref{prop:negative_shift}]
				\medskip\noindent\textbf{Negative shift of the gap eigenvalue.}
				For undirected removal of the edge $\{u,v\}$ with a column-stochastic	matrix, $E$ has support only on rows/columns $u$ and $v$; the nonzero entries are given in Eq.\eqref{eq:E_non_zero_general}. Since $E$ is symmetric and vanishes outside rows/columns $\{u,v\}$, the double sum $\sum_{i,j\in U}E_{ij}$ splits into the $\{u,v\}\times\{u,v\}$ block and the two cross blocks $\{u,v\}\times(U\setminus\{u,v\})$ and $(U\setminus\{u,v\})\times\{u,v\}$, which coincide by symmetry:
				\begin{equation}
					\mathbf{1}_U^{\!T} E\,\mathbf{1}_U = \sum_{i,j\in U} E_{ij}= (E_{uu}+E_{uv}+E_{vu}+E_{vv})+ 2\!\!\sum_{x\in U\setminus\{u,v\}}\!\!\bigl(E_{ux}+E_{vx}\bigr).
				\end{equation}
				Using $E_{uu}=E_{vv}=0$ and $E_{uv}=E_{vu}=-p$ (Eq.\eqref{eq:E_non_zero_uv}), the first parenthesis equals $-2p$. Writing $\sum_{j\in U}E_{uj} = E_{u,v}+\sum_{x\in U\setminus\{u,v\}}E_{ux} = -p+\sum_x E_{ux}$ (and analogously for $v$), we substitute $\sum_{x\in U\setminus\{u,v\}}E_{ux}$ and $\sum_{x\in U\setminus\{u,v\}}E_{vx}$ respectively with $\sum_{j\in U}E_{uj} +p$ and $\sum_{j\in U}E_{vj} +p$ and obtain :
				\begin{equation}
					\mathbf{1}_U^{\!T} E\,\mathbf{1}_U= -2p + 2\Bigl[\Bigl(\sum_{j\in U}E_{uj}+p\Bigr)+\Bigl(\sum_{j\in U}E_{vj}+p\Bigr)\Bigr]= 2\sum_{j\in U}E_{uj} + 2\sum_{j\in U}E_{vj} + 2p. \label{eq:corrected_identity_negshift}
				\end{equation}
				Each row sum is, from Eqs.\eqref{eq:E_non_zero_uv},\eqref{eq:E_non_zero_general}:
				\begin{equation}
					\sum_{j\in U}E_{uj}	= -p + p\sum_{x\in U\setminus\{u,v\}} \!\!\left(\sqrt{1+\mathbf{r}^{(u)}_x}-1\right),
				\end{equation}
				and analogously for the row of $v$. Recall that $\mathbf{r}^{(u)}$ redistributes the removed weight $p$ over \emph{all} remaining nodes $V\setminus\{u,v\}$, marked and unmarked alike (consistently with the standard Szegedy marking procedure, under which unmarked nodes retain unrestricted outgoing transitions to every other node), so that $\sum_{x\in V\setminus\{u,v\}}\mathbf{r}^{(u)}_x=1$. Restricting the sum to the unmarked subset $U\setminus\{u,v\}\subseteq V\setminus\{u,v\}$ therefore gives only
				\begin{equation}\label{eq:partial_mass_bound}
					\sum_{x\in U\setminus\{u,v\}}\mathbf{r}^{(u)}_x \;\le\; \sum_{x\in V\setminus\{u,v\}}\mathbf{r}^{(u)}_x = 1,
				\end{equation}
				with equality only if no redistributed weight is sent to marked nodes (e.g.\ when $m=0$). Using the elementary inequality $\sqrt{1+\varepsilon}-1 <	\tfrac{\varepsilon}{2}$, strict for all $\varepsilon > 0$ by strict concavity of $\sqrt{\,\cdot\,}$, together with~\eqref{eq:partial_mass_bound}:
				\begin{equation}
					p\sum_{x\in U\setminus\{u,v\}}
					\left(\sqrt{1+\mathbf{r}^{(u)}_x}-1\right)
					\;<\;
					\frac{p}{2}\sum_{x\in U\setminus\{u,v\}} \mathbf{r}^{(u)}_x
					\;\le\; \frac{p}{2},
				\end{equation}
				so $\sum_{j\in U}E_{uj} < -p/2$, and similarly $\sum_{j\in U}E_{vj} < -p/2$. Substituting into Eq.\eqref{eq:corrected_identity_negshift}:
				\begin{equation}
					\mathbf{1}_U^{\!T} E\,\mathbf{1}_U
					\;<\; 2\left(-\frac p2\right)+2\left(-\frac p2\right)+2p \;=\; 0,
				\end{equation}
				so that
				\begin{equation}
					\mathbf{v}_\star^{\!T} E\,\mathbf{v}_\star
					= \frac{1}{n-m}\,\mathbf{1}_U^{\!T} E\,\mathbf{1}_U
					\;<\; 0
				\end{equation}
				strictly, for any redistribution scheme satisfying $\sum_{x\in V\setminus\{u,v\}}\mathbf{r}^{(u)}_x=1$, irrespective of how the weight is split between marked and unmarked targets. The Rayleigh quotient exact magnitude under uniform redistribution is computed in \Cref{prop:exact_shift}.
				
				\medskip\noindent\textbf{\textit{From Rayleigh quotient to eigenvalue shift.}}
				Since $C$ and $\tilde{C} = C + E$ are both real symmetric, the	first-order perturbation theory for simple eigenvalues (see	e.g.~\cite{Kato1966, Stewart1990}) gives:
				\begin{equation}
					\tilde\lambda_\star
					= \lambda_\star
					+ \mathbf{v}_\star^{\!T} E\,\mathbf{v}_\star
					+ O\!\left(\|E\|_2^2\right),
				\end{equation}
				where $\mathbf{v}_\star$ is the unit eigenvector of $C$ for	$\lambda_\star$. Since $\mathbf{v}_\star^{\!T} E\,\mathbf{v}_\star<0$ strictly, the first-order (linear) approximation to the eigenvalue shift is strictly negative, confirming at this order the conjecture stated in~\cite{GIORDANO2026170305}, where the decrease of $\lambda_\star$ was extrapolated from numerical simulations. Establishing the strict negativity of the full shift $\tilde\lambda_\star-\lambda_\star$, together with its exact rate in $n$ and $m$, is carried out directly (without invoking the generic $O(\|E\|_2^2)$ remainder bound) in \Cref{prop:exact_shift,prop:asymptotic_shift} below.\qedhere
			\end{proof}

			\begin{proof}[Proof of \Cref{prop:exact_shift}]
				\medskip\noindent\textbf{Exact shift of the gap eigenvalue.}
				Under the uniform redistribution assumption, the weight $p$ is spread uniformly over all $n-2$ remaining nodes $V\setminus\{u,v\}$ (marked and unmarked alike), so that $\mathbf{r}^{(u)}_x=\mathbf{r}^{(v)}_x=\frac1{n-2}$ for every $x\in V\setminus\{u,v\}$. As shown in the proof of \Cref{prop:E_structure}, the prefactor $\sqrt{P_{u,x}}$ in $E_{x,u}=\sqrt{P_{u,x}}\bigl(\sqrt{P_{x,u}+\Delta_{x,u}}-\sqrt{P_{x,u}}\bigr)$ vanishes identically whenever $x$ is marked, so only the $n-m-2=k-2$ unmarked targets $x\in U\setminus\{u,v\}$ contribute a nonzero entry, each equal to
				\begin{equation}
					E_{x,u}=E_{x,v}=q, \qquad q:=p\!\left(\sqrt{1+\tfrac1{n-2}}-1\right) = \frac{p\,d}{\sqrt{n-2}}, \qquad d:=\sqrt{n-1}-\sqrt{n-2},
				\end{equation}
				(rationalizing $\sqrt{1+\tfrac1{n-2}}=\sqrt{\tfrac{n-1}{n-2}}$), with all remaining entries zero except $E_{u,v}=E_{v,u}=-p$. Since $\mathbf{v}_\star=\frac1{\sqrt k}\mathbf{1}_U$, the first-order shift is
				\begin{equation}
					\Delta\lambda_\star = \frac1k\,\mathbf{1}_U^{\!T}E\,\mathbf{1}_U = \frac1k\sum_{i,j\in U}E_{ij}.
				\end{equation}
				As in the proof of \Cref{prop:negative_shift}, the double sum splits into the $\{u,v\}\times\{u,v\}$ block and the two symmetric cross blocks, each containing $k-2$ nonzero terms equal to $q$:
				\begin{equation}
					\mathbf{1}_U^{\!T}E\,\mathbf{1}_U = -2p + 4(k-2)\,q = -2p + \frac{4(k-2)\,p\,d}{\sqrt{n-2}},
				\end{equation}
				and hence
				\begin{equation}\label{eq:exact_rayleigh_shift_proof}
					\Delta\lambda_\star = \frac{2p}{k}\left[\frac{2(k-2)\,d}{\sqrt{n-2}}-1\right].
				\end{equation}
				
				\medskip\noindent\textbf{\textit{Simplification.}}
				Multiplying $d$ by its conjugate, $d\left(\sqrt{n-1}+\sqrt{n-2}\right)=(n-1)-(n-2)=1$; writing the conjugate factor as $\sqrt{n-1}+\sqrt{n-2}=d+2\sqrt{n-2}$ and substituting gives $d(d+2\sqrt{n-2})=1$, i.e.
				\begin{equation}\label{eq:d_identity}
					\frac{2d}{\sqrt{n-2}} = \frac{1-d^2}{n-2}.
				\end{equation}
				Substituting~\eqref{eq:d_identity} into the bracket of Eq.\eqref{eq:exact_rayleigh_shift_proof}:
				\begin{align}
					\frac{2(k-2)d}{\sqrt{n-2}}-1
					&= \frac{(k-2)(1-d^2)}{n-2}-1
					= \frac{(k-2)-(n-2)-(k-2)d^2}{n-2} \notag\\
					&= \frac{(k-n)-(k-2)d^2}{n-2}
					= -\frac{m+(k-2)d^2}{n-2},
				\end{align}
				using $k-n=-m$ in the last step. Substituting back into Eq.\eqref{eq:exact_rayleigh_shift_proof} yields the exact closed form, valid for every $n\ge3$ and every $0\le m\le n-2$:
				\begin{equation}\label{eq:exact_rayleigh_shift_closed}
					\Delta\lambda_\star \;=\; -\,\frac{2p}{k\,(n-2)}\Bigl[\,m + (k-2)\left(\sqrt{n-1}-\sqrt{n-2}\right)^{2}\Bigr].
				\end{equation}
				
				\medskip\noindent\textbf{\textit{Sign.}}
				The squared bracket in Eq.\eqref{eq:exact_rayleigh_shift_closed} is a sum of two non-negative terms: $m\ge0$, and $(k-2)\ge0$ with $(\sqrt{n-1}-\sqrt{n-2})^2>0$ strictly (since $n-1\neq n-2$). If $k>2$, the second term is strictly positive; if $k=2$ it means that the only unmarked nodes in the graph are $u$ and $v$, then clearly $m=n-2>0$ for $n\ge3$, so the first term is strictly positive and greater than the second one. In either case the bracket is strictly positive, so
				\begin{equation}
					\Delta\lambda_\star < 0 \qquad \text{for every finite } n\ge3 \text{ and } 0\le m\le n-2,
				\end{equation}
				confirming the claim of the proposition, uniformly over all regimes of $m$.\qedhere
			\end{proof}

	\begin{proof}[Proof of \Cref{prop:asymptotic_shift}]
		\medskip\noindent\textbf{Asymptotic expansion of $\Delta\lambda_\star$.}
		From the exact closed form~\eqref{eq:exact_rayleigh_shift_closed}, write $\Delta\lambda_\star=-(T_A+T_B)$ with
		\begin{equation}
			T_A := \frac{2pm}{k(n-2)} = \frac{2m}{(n-1)k(n-2)},
			\qquad
			T_B := \frac{2p(k-2)}{k(n-2)}\,d^2 = \frac{2(k-2)}{k(n-1)(n-2)}\,d^2,
		\end{equation}
		where $d:=\sqrt{n-1}-\sqrt{n-2}$; both $T_A,T_B\ge0$, so $|\Delta\lambda_\star|=T_A+T_B$.
		
		Regarding the term $T_B$, we recall from the proof of \Cref{prop:exact_shift} that $d\left(\sqrt{n-1}+\sqrt{n-2}\right)=1$. Since $n-2<n-1\le n$ for $n\ge3$, we have $2\sqrt{n-2}\le\sqrt{n-1}+\sqrt{n-2}\le2\sqrt n$, and therefore
		\begin{equation}\label{eq:d2_sandwich}
			\frac1{4n} \;\le\; d^2 \;\le\; \frac1{4(n-2)}, \qquad n\ge3.
		\end{equation}
		In the same way, $(k-2)/k=1-2/k$ increases while $k$ increases, and for the minimum $k=3$ it has value $1/3$, thus
		\begin{equation}\label{eq:km2_sandwich}
			\frac13 \;\le\; \frac{k-2}{k} \;<\; 1, \qquad k\ge3.
		\end{equation}
		Substituting~\eqref{eq:d2_sandwich} and~\eqref{eq:km2_sandwich},  into $T_B=\dfrac{2(k-2)}{k(n-1)(n-2)}\,d^2$ gives, for every $n\ge3$ and $k\ge3$,
		\begin{equation}\label{eq:TB_sandwich}
			\frac1{6\,n(n-1)(n-2)} \;\le\; T_B \;\le\; \frac1{2(n-1)(n-2)^2},
		\end{equation}
		so that $T_B=\Theta(1/n^3)$, uniformly for every $k\ge3$, i.e.\ for every $0\le m\le n-3$.
		
		Combining, for every $n\ge3$ and $0\le m\le n-3$ (equivalently $k\ge3$):
		\begin{equation}\label{eq:master_asymptotic}
			|\Delta\lambda_\star| \;=\; \frac{2m}{(n-1)(n-2)\,k} \;+\; \Theta\!\left(\frac1{n^3}\right).
		\end{equation}
		The boundary case $k=2$, i.e.\ $m=n-2$, is excluded from~\eqref{eq:TB_sandwich} since then $T_B=0$ exactly and $|\Delta\lambda_\star|=T_A=\frac{2(n-2)}{(n-1)(n-2)\cdot2}=\frac1{n-1}$.
		
		\medskip\noindent\textbf{\textit{Specialization to the three regimes.}}
		We specify~\eqref{eq:master_asymptotic} in the regimes of $m$ used throughout this work.
		\begin{itemize}
			\item \textbf{\textit{Few marked nodes}} ($m=o(n)$): then $k=\Theta(n)$, so $T_A=\Theta(m/n^3)$. If $m=\Theta(1)$ (in particular $m=0$), $T_A=O(1/n^3)$ is of the same order as $T_B$, and
			\begin{equation}
				|\Delta\lambda_\star| = \Theta\!\left(\frac1{n^3}\right).
			\end{equation}
			If instead $m\to\infty$ with $m=o(n)$, $T_A$ dominates and
			\begin{equation}
				|\Delta\lambda_\star| = \Theta\!\left(\frac{m}{n^3}\right).
			\end{equation}
			\item \textbf{\textit{Marked nodes in an intermediate range}} ($\varepsilon n\le m\le(1-\varepsilon)n$, hence also $\varepsilon n\le k\le(1-\varepsilon)n$): here $T_A=\Theta(1/n^2)$ (both $m$ and $k$ are $\Theta(n)$), which strictly dominates $T_B=\Theta(1/n^3)$, giving
			\begin{equation}
				|\Delta\lambda_\star| = \Theta\!\left(\frac1{n^2}\right),
			\end{equation}
			with constants depending on $\varepsilon$.
			\item \textbf{\textit{Few unmarked nodes}} ($k=n-m=o(n)$, so $m=n-k=\Theta(n)$): then $T_A=\Theta(1/(nk))$, which dominates $T_B=\Theta(1/n^3)$ since $k=o(n)\Rightarrow1/(nk)\gg1/n^2>1/n^3$, giving
			\begin{equation}
				|\Delta\lambda_\star| = \Theta\!\left(\frac1{nk}\right).
			\end{equation}
			In particular $|\Delta\lambda_\star|=\Theta(1/n)$ when $k=\Theta(1)$.
		\end{itemize}
		This establishes~\eqref{eq:master_asymptotic} together with its specializations, completing the proof.\qedhere
	\end{proof}
	}
	%THIRD APPENDIX END----------------------------------------------------------------------------------------------------------------------------------------	
	The proofs of \Cref{prop:exact_shift} and \Cref{prop:asymptotic_shift} are given in \nameshortref{app:gap_eigenvalue_shift}. As established there, the exact shift decomposes into two nonnegative contributions, $|\Delta\lambda_\star|=T_A+T_B$, with $T_A=\frac{2m}{(n-1)(n-2)k}$ known exactly and $T_B=\Theta(1/n^3)$ uniformly in $m$ (Eq.\eqref{eq:delta_lambda_star_explicit}). 
	
	\paragraph{Physical interpretation.} The strict negativity of $\Delta\lambda_\star$ has a transparent reading: removing a link can only loosen the spectral gap of the walk, never tighten it. The gap eigenvector $\mathbf v_\star=\mathbf 1_U/\sqrt{n-m}$ is uniformly spread over the unmarked nodes, so it feels the missing edge as a local depletion of connectivity, and the walk responds by lowering $\lambda_\star$. The reason the effect is so weak is equally physical. The perturbation $E$ lives entirely on the two columns $u$ and $v$, and its coupling to the marked block vanishes identically (the prefactor $\sqrt{P_{u,x}}$ is zero whenever $x$ is marked, see \Cref{prop:E_structure}); the defect therefore never talks to the marked subspace, and only the $k-2$ unmarked neighbors carry a nonzero entry. This decoupling is what suppresses the shift from the naive $\Theta(1/n)$ scale of $\|E\|_2$ down to $\Theta(1/n^2)$ in the search regime $m=\Theta(n)$: a single missing link is the minimal structural defect precisely because its spectral footprint is one order smaller than the perturbation that produces it. The two terms $T_A\propto m$ and $T_B=\Theta(1/n^3)$ separate the two ways the gap can react: $T_A$ counts how much of the eigenvector weight sits on genuinely marked-competing nodes, while $T_B$ is the residual geometric response of the pure redistribution. It is their interplay that sets the crossover between the $1/n^2$ (search) and $1/n^3$ (few marked nodes) regimes.
	
	Having established the scaling in the three general regimes above, we now specialize to the case relevant for the search algorithm. Introducing the linear optimality condition $m = \tfrac{n-1}{a}$
	from~\cite{GIORDANO2026170305} (which falls within the intermediate regime just discussed, since $m/n\to1/a\in(0,1)$), so that $k=n-m=\tfrac{(a-1)n+1}{a}$, the dominant contribution $T_A$ becomes
	\begin{align}\notag
		T_A
		&\;=\;
		\frac{2}{(n-2)\,\left((a-1)n+1\right)}\\
		&\;=\;
		\frac{2}{a-1}\,\frac{1}{n^2}
		\;+\;O\!\left(\frac{1}{n^3}\right),
	\end{align}
	and since $T_B=\Theta(1/n^3)$ is subdominant here, $\Delta\lambda_\star=-T_A+O(1/n^3)$. In particular, for $a=1.44512$,
	\begin{equation}
		\Delta\lambda_\star \;=\; -\,\frac{4.493}{n^2} + O\!\left(\frac{1}{n^3}\right).
	\end{equation}

	Thus, the first-order eigenvalue shift remains strictly negative throughout, with a magnitude that grows from $\Theta(1/n^3)$ in the few-marked-nodes regime up to $\Theta(1/n)$ when only a bounded number of nodes remain unmarked, and equals $\Theta(1/n^2)$ in the intermediate regime $m=\Theta(n)$ relevant for the search algorithm, confirming throughout that, under undirected edge removal with uniform redistribution, the gap eigenvalue decreases.

	\subsection{From eigenvalue shifts to phase shifts}\label{subsec:from_eigenvalue_shift_to_phase_shift}
	Let us recall from \cite{Portugal_book} that the eigenvalues of the quantum walk operator in Eq.\eqref{eq:quantum_walk_operator} for the complete graph case are analytically calculated and that only specific eigenspaces are involved in the dynamics given by $W$. The set $\sigma(W)$ of eigenvalues and corresponding eigenvectors is:
	
	\begin{align}
		\sigma(W) = \Big\{ & e^{\pm 2i \theta_M}=1, \quad \to \{ \ket{\omega_{M,j}^\pm}\}_{j \in \left[n-m+1, n\right]}; \nonumber \\
		& e^{\pm 2i\theta_{\star}}, \quad \to \{\ket{\omega^{+}_\star}, \ket{\omega^{-}_\star}\}; \nonumber \\
		& e^{\pm 2i\theta_B}, \quad \to \{\ket{\omega^{+}_{B,j}}, \ket{\omega^{-}_{B,j}}\}_{j \in \left[1, n-m-1\right]} \Big\}
	\end{align}
	The eigenphases associated with the eigenvalues, denoted as $\left\{\theta_M, \theta_\star, \theta_B\right\}$, are defined as follows:
	\begin{equation}\label{eq:eigenphases_complete_case}
		\cos(\theta_M) = 1 \quad \cos(\theta_\star) = \frac{n-m-1}{n-1}, \quad \cos(\theta_B) = \frac{1}{n-1}.
	\end{equation}
	
	Among these, only the eigenvectors associated with the eigenvalues $e^{\pm 2i\theta_\star}$ and $1$ are involved in the dynamics.
		
	We want to calculate the shift caused by the perturbation to the eigenvalues of the operator $W$, which means evaluating the shift on the phase $\theta_\star$ corresponding to $\lambda_\star$. Suppose the simple eigenvalue $\lambda_\star$ being shifted as $\tilde{\lambda_\star}=\lambda_\star+\Delta\lambda_\star$, we want to calculate the phase shift $\Delta\theta_\star=\tilde{\theta}_\star-\theta_\star$ with $\theta_\star=\arccos\lambda_\star$ and $\tilde{\theta_\star}=\arccos\tilde{\lambda_\star}$. The details of the	derivation are given in \nameshortref{app:phase_shift}.
	
	\begin{proposition}[First-order phase shift]
		\label{prop:phase_shift}
		Let $\theta_\star = \arccos\lambda_\star$ and $\tilde{\theta}_\star = \arccos\tilde{\lambda}_\star$. The exact relation between the phase shift and the eigenvalue shift is 
		\begin{equation}\label{eq:mvt-phase}
			\Delta\theta_\star
			= \tilde{\theta}_\star - \theta_\star
			= -\,\frac{\Delta\lambda_\star}{\sqrt{1-\xi^2}},
		\end{equation}
		for some $\xi$ between $\lambda_\star$ and $\tilde{\lambda}_\star$.	To first order in $\Delta\lambda_\star$,
		\begin{equation}\label{eq:linear_phase}
			\Delta\theta_\star
			= -\,\frac{\Delta\lambda_\star}{\sqrt{1-\lambda_\star^2}}
			\;+\; O\!\left((\Delta\lambda_\star)^2\right).
		\end{equation}
		Since $\Delta\lambda_\star < 0$, the phase shift is strictly positive: the decrease of $\lambda_\star$ causes an increase of $\theta_\star$.
	\end{proposition}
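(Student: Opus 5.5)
The plan is to apply the mean value theorem to $f(\lambda)=\arccos\lambda$ on the closed interval with endpoints $\lambda_\star$ and $\tilde\lambda_\star=\lambda_\star+\Delta\lambda_\star$. The function $f$ is continuous on $[-1,1]$ and differentiable on $(-1,1)$, with $f'(\lambda)=-1/\sqrt{1-\lambda^2}$. The first step is therefore to check that both endpoints lie in $(-1,1)$, so that the open interval between them avoids the singular points $\pm1$.

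For $\lambda_\star=\frac{n-m-1}{n-1}$ with $1\le m\le n-2$, we have $0<\lambda_\star<1$. For $\tilde\lambda_\star$, I would invoke \Cref{prop:exact_shift}, which gives $\Delta\lambda_\star<0$, so $\tilde\lambda_\star<\lambda_\star<1$. For the lower end, I would use either \Cref{prop:asymptotic_shift} or Weyl's bound with \Cref{prop:E_structure}(ii): $|\Delta\lambda_\star|=O(1/n)$, hence $\tilde\lambda_\star>-1$ for $n$ large enough. Alternatively, $\tilde\lambda_\star$ is an eigenvalue of the discriminant of a stochastic matrix, and such eigenvalues lie in $[-1,1]$.

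With the endpoints in place, the mean value theorem gives $\xi$ strictly between $\lambda_\star$ and $\tilde\lambda_\star$ such that $\tilde\theta_\star-\theta_\star=f'(\xi)\,\Delta\lambda_\star=-\Delta\lambda_\star/\sqrt{1-\xi^2}$. This is Eq.~\eqref{eq:mvt-phase}.

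For the linear form \eqref{eq:linear_phase}, I would Taylor expand $f$ at $\lambda_\star$ with a Lagrange remainder: $f(\tilde\lambda_\star)=f(\lambda_\star)+f'(\lambda_\star)\Delta\lambda_\star+\tfrac12 f''(\eta)(\Delta\lambda_\star)^2$, where $f''(\lambda)=-\lambda(1-\lambda^2)^{-3/2}$. The remainder is $O((\Delta\lambda_\star)^2)$ provided $1-\eta^2$ stays bounded away from zero.

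Controlling that remainder is the main obstacle. As $\lambda_\star\to1$, meaning few marked nodes, $f''$ blows up. I would state the $O(\cdot)$ with a constant depending on $\lambda_\star$, namely $\sup_{\eta}|f''(\eta)|$ over the interval. In the search regime $m=(n-1)/a$, $\lambda_\star\to1-1/a$ stays bounded away from $\pm1$, so the constant is uniform in $n$. It also helps that $|\Delta\lambda_\star|\ll 1-\lambda_\star$ there: $1-\lambda_\star=\frac{m}{n-1}=\Theta(1)$ while $|\Delta\lambda_\star|=\Theta(1/n^2)$.

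The sign claim follows directly from \eqref{eq:mvt-phase}. Since $\sqrt{1-\xi^2}>0$ and $\Delta\lambda_\star<0$ by \Cref{prop:exact_shift}, we get $\Delta\theta_\star>0$. Equivalently, $\arccos$ is strictly decreasing on $[-1,1]$. This conclusion holds exactly, not only to first order.
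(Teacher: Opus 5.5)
Your proposal is correct and follows the paper's proof: the mean value theorem for $\arccos$ on $(-1,1)$ gives \eqref{eq:mvt-phase}, the derivative $-1/\sqrt{1-\lambda^2}$ evaluated at $\lambda_\star$ gives \eqref{eq:linear_phase}, and the strict monotonicity of $\arccos$ gives the sign. You add two checks the paper passes over. The first is that both endpoints lie in the domain of $\arccos$. The second is that the constant hidden in $O((\Delta\lambda_\star)^2)$ is $\sup|f''|$ on the interval, so it depends on $\lambda_\star$ and is uniform in $n$ only when $\lambda_\star$ stays away from $1$, as for $m=\Theta(n)$. That second point matters for how the remainder is used later in the few-marked-nodes regime.
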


	\begin{proposition}[Asymptotic phase shift]
		\label{prop:phase_shift_explicit}
		For $\lambda_\star = \frac{n-m-1}{n-1}$, $k = n-m$, and	$\Delta\lambda_\star$ as in \Cref{prop:exact_shift}, the first-order phase shift satisfies
		\begin{equation}\label{eq:delta_theta_explicit_bound}
			\Delta\theta_\star = S_A + S_B + O\!\left(\frac1{n^2}\right),
		\end{equation}
		where
		\begin{align}\notag
			S_A &:= \frac{2}{k(n-2)}\sqrt{\frac{m}{2n-m-2}},
			\\ \label{eq:SA_SB_def}
			S_B &:= \frac{2(k-2)\left(\sqrt{n-1}-\sqrt{n-2}\right)^{2}}{k(n-2)\sqrt{m(2n-m-2)}}.
		\end{align}
		Consequently, for $m\ge1$, the asymptotic order of $\Delta\theta_\star$ depends on the regime of $m$:
		\begin{itemize}
			\item \textbf{Few marked nodes} ($m=o(n)$): $\Delta\theta_\star=\Theta(n^{-5/2})$ if $m=O(1)$, and $\Delta\theta_\star=\Theta(\sqrt m\,n^{-5/2})$ if $m\to\infty$ with $m=o(n)$;
			\item \textbf{Marked nodes in an intermediate range} ($\varepsilon n\le m\le(1-\varepsilon)n$ for some $\varepsilon\in(0,\tfrac12]$): $\Delta\theta_\star=\Theta(1/n^2)$;
			\item \textbf{Few unmarked nodes} ($k=n-m=o(n)$):
			\begin{equation}\label{eq:delta_theta_star_explicit_k}
				\Delta\theta_\star = \Theta\!\left(\frac1{nk}\right),
			\end{equation}
			in particular $\Delta\theta_\star=\Theta(1/n)$ when $k=\Theta(1)$.
		\end{itemize}
	\end{proposition}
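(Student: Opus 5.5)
Combine the linear phase relation of \Cref{prop:phase_shift} with the exact closed form of \Cref{prop:exact_shift}. By \eqref{eq:linear_phase},
\[
\Delta\theta_\star=-\Delta\lambda_\star/\sqrt{1-\lambda_\star^2}+O((\Delta\lambda_\star)^2).
\]
With $\lambda_\star=\frac{n-m-1}{n-1}$ we have $1-\lambda_\star=\frac{m}{n-1}$ and $1+\lambda_\star=\frac{2n-m-2}{n-1}$. Hence
\[
\sqrt{1-\lambda_\star^2}=\frac{\sqrt{m(2n-m-2)}}{n-1},
\]
which is nonzero for $m\ge1$. This is why the statement requires $m\ge1$.

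Dividing $|\Delta\lambda_\star|=T_A+T_B$ from \Cref{prop:asymptotic_shift} by this quantity, the factor $p=\frac1{n-1}$ cancels against $n-1$. This gives exactly
\[
T_A/\sqrt{1-\lambda_\star^2}=\frac{2m}{k(n-2)\sqrt{m(2n-m-2)}}=S_A,
\]
and likewise $T_B/\sqrt{1-\lambda_\star^2}=S_B$. The remainder is $O((\Delta\lambda_\star)^2)$, and $|\Delta\lambda_\star|\le\|E\|_2$ by Weyl together with \Cref{prop:E_structure}, so it is $O(1/n^2)$. This establishes \eqref{eq:delta_theta_explicit_bound}.

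To make the remainder estimate honest, I would use the mean value form \eqref{eq:mvt-phase} rather than the Taylor form. The derivative of $\arccos$ has magnitude $1/\sqrt{1-\xi^2}$, and its change between $\lambda_\star$ and $\xi$ is controlled by $|\Delta\lambda_\star|/(1-\lambda_\star^2)^{3/2}$. This requires $\tilde\lambda_\star$ to stay away from $1$, i.e. $|\Delta\lambda_\star|\ll 1-\lambda_\star=\frac{m}{n-1}$. That holds in every regime, because $T_A\le \frac{2m}{(n-1)(n-2)k}\cdot$const, which is smaller than $m/(n-1)$ by a factor $1/(nk)$ or less. I would state the resulting remainder as $O\bigl((\Delta\lambda_\star)^2(n-1)^3/(m(2n-m-2))^{3/2}\bigr)$ and check that it is negligible relative to $S_A+S_B$.

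The regimes then follow from the orders of $S_A$ and $S_B$.

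\emph{Few marked nodes} ($m=o(n)$). Here $k=\Theta(n)$ and $2n-m-2=\Theta(n)$, so $S_A=\Theta(\sqrt m\, n^{-5/2})$. Using $d^2=\Theta(1/n)$ from \eqref{eq:d2_sandwich}, $S_B=\Theta(n^{-5/2}/\sqrt m)$. Thus both terms are $\Theta(n^{-5/2})$ for $m=O(1)$, and $S_A$ dominates when $m\to\infty$.

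\emph{Intermediate regime} ($\varepsilon n\le m\le(1-\varepsilon)n$). Here $S_A=\Theta(1/n^2)$ and $S_B=\Theta(n^{-3})$.

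\emph{Few unmarked nodes} ($k=o(n)$). Here $m=\Theta(n)$ and $2n-m-2=\Theta(n)$, so $S_A=\Theta(1/(nk))$ while $S_B=O(n^{-3})$.

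\textbf{Main obstacle.} The stated additive $O(1/n^2)$ error is the same order as the main term in the intermediate regime and larger than it in the few-marked regime, so the regime claims cannot be read off from \eqref{eq:delta_theta_explicit_bound} as written. I would therefore derive the regime statements from the sharper relative remainder above, which gives $\Delta\theta_\star=(S_A+S_B)(1+o(1))$:

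\begin{itemize}
\item In the few-marked regime, $|\Delta\lambda_\star|/(1-\lambda_\star)=O(1/n^2)$ makes the relative correction $o(1)$.
\item In the few-unmarked regime with $k=\Theta(1)$, one has $\Delta\lambda_\star=\Theta(1/n)$ against $1-\lambda_\star=\Theta(1)$, so the correction is again lower order.
\end{itemize}

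In both cases the order of $\Delta\theta_\star$ is that of $S_A+S_B$. Positivity of $\Delta\theta_\star$, inherited from \Cref{prop:phase_shift}, ensures there is no cancellation between the main term and the remainder.
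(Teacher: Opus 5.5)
Your derivation of $S_A+S_B$ follows the paper's proof exactly, as does your reading of the orders of $S_A$ and $S_B$. You are also right that an additive $O(1/n^2)$ cannot support the regime claims for the true phase shift; the paper simply reads them off $S_A$, $S_B$.

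The gap is in your repair. It controls only the linearization of $\arccos$, and it silently identifies $\tilde\lambda_\star-\lambda_\star$ with $-(T_A+T_B)$. But \Cref{prop:exact_shift} computes the Rayleigh quotient $\mathbf v_\star^{T}E\mathbf v_\star$, which is only the first-order term. The second-order term $\sum_j(\mathbf v_j^{T}E\mathbf v_\star)^2/(\lambda_\star-\lambda_j)$ is positive: only bulk vectors couple, across the gap $k/(n-1)$. Since $\|\Pi_{\mathcal U_B}E\mathbf v_\star\|_2^2=\Theta(1/(n^2k))$ for $k\ge3$, this term is $\Theta(1/(nk^2))$. That is the same order as $T_A+T_B$ exactly when $m=O(1)$ (both $\Theta(n^{-3})$) or $k=O(1)$ (both $\Theta(1/n)$).

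Two concrete checks show the effect:
\begin{itemize}
\item For $m=0$ the perturbed chain is the simple random walk on $K_n$ minus one edge, which is reversible. So $\tilde\lambda_\star=\lambda_\star=1$ exactly, although $T_B>0$.
\item For $m=1$ the true shift is asymptotically $\tfrac45(T_A+T_B)$.
\end{itemize}
So your claim $\Delta\theta_\star=(S_A+S_B)(1+o(1))$ is false in precisely the regimes where you invoke it. Your relative-remainder argument also presupposes a bound on the true $|\tilde\lambda_\star-\lambda_\star|$ far sharper than Weyl's $O(1/n)$, and you have not established one.

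What is missing is an exact computation of $\tilde\lambda_\star$, which is easy here. By symmetry the perturbed gap eigenvector lies in $\mathrm{span}\{\mathbf e_u+\mathbf e_v,\mathbf 1_{U\setminus\{u,v\}}\}$. On that span $\tilde C$ acts as $\bigl(\begin{smallmatrix}0&(k-2)s\\2s&(k-3)p\end{smallmatrix}\bigr)$ with $s=1/\sqrt{(n-1)(n-2)}$, which gives
\[
\tilde\lambda_\star-\lambda_\star=\frac p2\Bigl[\sqrt{(k+1)^2-\tfrac{8m}{n-2}}-(k+1)\Bigr].
\]
For $k\ge2$ its modulus lies in $\bigl[\tfrac{2pm}{(n-2)(k+1)},\tfrac{3pm}{(n-2)(k+1)}\bigr]$. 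This gives strict negativity for $m\ge1$, and the shift is at most $1-\lambda_\star$. Hence in the mean-value form $1-\lambda_\star^2\le1-\xi^2\le4(1-\lambda_\star^2)$, and so $\Delta\theta_\star=\Theta\bigl(\sqrt m/(n^{3/2}k)\bigr)$ uniformly in $1\le m\le n-2$. That is the order of $S_A$ alone ($S_B$ is cancelled at leading order by the second-order term), and it yields all four regime claims.

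If instead you read the proposition as a statement about the first-order quantity $-\Delta\lambda_\star/\sqrt{1-\lambda_\star^2}$ itself, that quantity equals $S_A+S_B$ identically and the regimes are immediate. That is effectively all the paper's proof establishes.
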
	
	
	%FOURTH APPENDIX BEGIN-------------------------------------------------------------------------------------
	\toappendix{%
		\stepcounter{AppCount}
		\subsection*{\Alph{AppCount}.\space Phase shift propositions proofs}\labelshort[\Alph{AppCount}]{app:phase_shift}
		
		\begin{proof}[Proof of \Cref{prop:phase_shift}]
			\medskip\noindent\textbf{First-order phase shift.}
			Since $\arccos$ is differentiable on $(-1,1)$, by the mean-value theorem there exists $\xi$ strictly between $\lambda_\star$ and $\tilde{\lambda}_\star$ such that
			\begin{equation}
				\tilde{\theta}_\star - \theta_\star
				= -\,\frac{\tilde{\lambda}_\star - \lambda_\star}{\sqrt{1-\xi^2}},
				\qquad
				|\Delta\theta_\star|
				= \frac{|\Delta\lambda_\star|}{\sqrt{1-\xi^2}},
			\end{equation}
			which establishes~\eqref{eq:mvt-phase}. This relation is exact and shows that the phase shift is proportional to the eigenvalue shift, with a factor depending on the intermediate value $\xi$.
			When the perturbation is small, $\tilde{\lambda}_\star \approx \lambda_\star$ and hence $\xi \approx \lambda_\star$. Using $\frac{d}{d\lambda}\arccos\lambda = -1/\sqrt{1-\lambda^2}$, the first-order expansion gives
			\begin{equation}
				\Delta\theta_\star
				= -\,\frac{\Delta\lambda_\star}{\sqrt{1-\lambda_\star^2}}
				+ O\!\left((\Delta\lambda_\star)^2\right),
			\end{equation}
			which is~\eqref{eq:linear_phase}. The minus sign reflects that $\arccos$ is strictly decreasing, so a negative $\Delta\lambda_\star$ produces a positive $\Delta\theta_\star$. The denominator
			$\sqrt{1-\lambda_\star^2}$ quantifies how strongly a spectral perturbation of $C$ translates into a phase perturbation of $W$.\qedhere
		\end{proof}

	\begin{proof}[Proof of \Cref{prop:phase_shift_explicit}]
		\medskip\noindent\textbf{Asymptotic phase shift.}
		For $\lambda_\star = \frac{n-m-1}{n-1}$ one computes
		\begin{equation}\label{eq:denom_phase}
			\sqrt{1-\lambda_\star^2} = \sqrt{1 - \left(\frac{n-m-1}{n-1}\right)^{\!2}} = \frac{\sqrt{m\,(2n-m-2)}}{n-1}.
		\end{equation}
		Substituting into~\eqref{eq:linear_phase} gives
		\begin{equation}\label{eq:phase_shift_substitution}
			\Delta\theta_\star = -\frac{n-1}{\sqrt{m(2n-m-2)}}\,\Delta\lambda_\star	+ O\!\left((\Delta\lambda_\star)^2\right).
		\end{equation}
		Using the exact expression for $\Delta\lambda_\star$ from \Cref{prop:exact_shift} (Eq.\eqref{eq:exact_rayleigh_shift}), and $(n-1)p=1$, Eq.\eqref{eq:phase_shift_substitution} becomes
		\begin{equation}
			\Delta\theta_\star = \frac{2}{k(n-2)\sqrt{m(2n-m-2)}}\Big[\,m+(k-2)\left(\sqrt{n-1}-\sqrt{n-2}\right)^{2}\Big] + O\!\left(\frac1{n^2}\right),
		\end{equation}
		where the remainder follows from $|\Delta\lambda_\star|\le\|E\|_2=O(1/n)$ (\Cref{prop:E_structure}), so $(\Delta\lambda_\star)^2=O(1/n^2)$ irrespective of the regime of $m$. Splitting the bracket over the shared denominator and using $\dfrac{m}{\sqrt{m(2n-m-2)}}=\sqrt{\dfrac{m}{2n-m-2}}$ for the first term gives
		\begin{equation}\label{eq:delta_theta_explicit_bound_proof}
			\Delta\theta_\star = S_A + S_B + O\!\left(\frac1{n^2}\right),
		\end{equation}
		where
		\begin{equation}\label{eq:SA_SB_def_proof}
			S_A := \frac{2}{k(n-2)}\sqrt{\frac{m}{2n-m-2}},
			\qquad
			S_B := \frac{2(k-2)}{k(n-2)\sqrt{m(2n-m-2)}}\left(\sqrt{n-1}-\sqrt{n-2}\right)^{2}.
		\end{equation}
		This establishes~\eqref{eq:delta_theta_explicit_bound}.\qedhere
		
	\medskip\noindent\textbf{\textit{Specialization to the three regimes.}}
	We read off Eq.\eqref{eq:delta_theta_explicit_bound_proof} in the regimes of $m$ used throughout this work.
	\begin{itemize}
		\item \textbf{\textit{Few marked nodes}} ($m=o(n)$): then $k=\Theta(n)$ and $\sqrt{m(2n-m-2)}\sim\sqrt{mn}$, so, using $d^2\sim1/(4n)$ from \Cref{prop:asymptotic_shift},
		\begin{equation}\notag
			S_A \sim \frac2{n^2}\sqrt{\frac mn} = \Theta\!\left(\frac{\sqrt m}{n^{5/2}}\right),
			\qquad
			S_B \sim \frac2n\cdot\frac1{4n}\cdot\frac1{\sqrt{mn}} = \Theta\!\left(\frac1{n^{5/2}\sqrt m}\right).
		\end{equation}
		If $m=O(1)$, $S_A$ and $S_B$ are of the same order and $\Delta\theta_\star=\Theta(n^{-5/2})$. If instead $m\to\infty$ while $m=o(n)$, $S_A$ dominates $S_B$ (their ratio $S_A/S_B\sim m\to\infty$), and $\Delta\theta_\star=\Theta\!\left(\sqrt m\,n^{-5/2}\right)$.
		
		\item \textbf{\textit{Marked nodes in an intermediate range}} ($\varepsilon n\le m\le(1-\varepsilon)n$, hence also $\varepsilon n\le k\le(1-\varepsilon)n$): then $k=\Theta(n)$ and $\sqrt{m(2n-m-2)}\sim n$, so $S_A=\Theta(1/n^2)$ while $S_B=\Theta(1/n^3)$ is subdominant, giving $\Delta\theta_\star=\Theta(1/n^2)$.
		\item \textbf{\textit{Few unmarked nodes}} ($k=n-m=o(n)$, so $m=n-k=\Theta(n)$): then $\sqrt{m(2n-m-2)}=\Theta(n)$, so $S_A=\Theta(1/(kn))$, while $S_B=\Theta(1/n^3)$ remains subdominant since $k=o(n)$ implies $1/(kn)\gg1/n^3$. This establishes Eq.\eqref{eq:delta_theta_star_explicit_k}, and in particular $\Delta\theta_\star=\Theta(1/n)$ when $k=\Theta(1)$.
	\end{itemize}
	
	This establishes the specializations claimed in the proposition, completing the proof.\qedhere
	\end{proof}
	}
	%FOURTH APPENDIX END----------------------------------------------------------------------------------------------------------------------------------------

The scaling of $\Delta\theta_\star$ in the three regimes of $m$ is established in \Cref{prop:phase_shift_explicit}, with detailed derivation given in \nameshortref{app:phase_shift}. In particular, for a fixed (constant) number of marked nodes, $\Delta\theta_\star=\Theta(n^{-5/2})$, while in the intermediate regime relevant for the search algorithm, $m=\Theta(n)$, the scaling is $\Delta\theta_\star=\Theta(1/n^2)$.

We now specialize to the regime relevant for the search algorithm, where $m = \tfrac{n-1}{a}$ with $a = 1.44512$ from~\cite{GIORDANO2026170305}; this falls within the intermediate regime discussed above, since $m/n\to1/a\in(0,1)$. Hence:
\begin{equation}
	k \;=\; n-m \;=\; \frac{(a-1)n+1}{a}
	\;=\; \Theta(n),
\end{equation}
and
\begin{equation}
	\sqrt{m\,(2n-m-2)}
	\;=\;
	\frac{n-1}{a}\,\sqrt{\,2a-1\,}
	\;=\; \Theta(n).
\end{equation}

Substituting them into $S_A$, the dominant contribution (Eq.\eqref{eq:SA_SB_def_proof}), we get
\begin{align}
	\Delta\theta_\star=\frac{2a}{(a-1)\sqrt{2a-1}}\;\frac{1}{n^2}
	\;+\;O\!\left(\frac{1}{n^3}\right),
\end{align}
where $S_B$ contributes only at order $O(1/n^3)$ and is absorbed into the remainder. In particular, for $a=1.44512$
\begin{equation}
	\Delta\theta_\star=
	4.723\frac{1}{n^2}
	\;+\;O\!\left(\frac{1}{n^3}\right)
	=
	\Theta\!\left(\frac{1}{n^2}\right).
\end{equation}

We observe that the behavior of $\Delta\theta_\star$ for few marked nodes, in particular for $m$ constant, is compatible with the results obtained in \cite{GIORDANO2026170305} for the quantum phase estimation procedure (QPE). Specifically, the second part of the algorithm for completeness testing in \cite{GIORDANO2026170305} involves estimation of the phase $\theta_\star$ in order to distinguish complete and incomplete graphs with $m=1$, one single marked node. The number of qubits necessary in the first register of the QPE is estimated empirically based on the fitting of $\Delta\theta_\star$, where $\tilde{\theta}_\star$ is given by numerical simulations and $\theta_\star$ is known analytically. These fitting results are compatible with the asymptotic behavior we find for this regime of $m$, $\Delta\theta_\star=\Theta(n^{-5/2})$, and this finding further validates the empirical results obtained in \cite{GIORDANO2026170305}, as it is shown in the analytical derivation of the QPE qubit count in \nameshortref{app:algorithm_reformulation}.

\paragraph{Physical interpretation.} The phase $\theta_\star=\arccos\lambda_\star$ is the quantity the search dynamics actually rotate through, so it is the physically observable consequence of the spectral shift. Because $\arccos$ is strictly decreasing, the loosening of the gap ($\Delta\lambda_\star<0$) translates into a strictly positive phase shift $\Delta\theta_\star>0$: the missing link makes the walk precess slightly faster around the marked subspace. The magnitude, however, is amplified relative to the eigenvalue shift by the geometric factor $|\Delta\theta_\star/ \Delta \lambda_\star|\approx 1/\sqrt{1-\lambda_\star^2}=1/\sin\theta_\star$, and it is this factor that changes the exponent in the distinct regimes of $m$. For a macroscopic marked fraction ($m=\Theta(n)$, the search regime) $\sin\theta_\star=\Theta(1)$, the amplification is order one, and $\Delta\theta_\star$ inherits the $\Theta(1/n^2)$ scale of $\Delta\lambda_\star$.
	
\section{Eigenspaces rotation}
\label{sec:contribution_of_the_eigenvector_rotation}

While we have already extensively dedicated
\Cref{subsec:from_eigenvalue_shift_to_phase_shift} to the shift of the eigenvalue $\lambda_\star$ and the corresponding phase shift $\Delta\theta_\star = \Theta(1/n^2)$ (for $m=\Theta(n)$), we now evaluate the eigenvector
rotation due to the single edge removal. Our primary objective is to estimate the geometric reorientation of the effective subspace $\mathcal{U}_{\mathrm{eff}}$ in $\mathbb{R}^n$ and then, via the
explicit map between eigenvectors of $C$ and eigenvectors of $W$, determine the corresponding rotation of the relevant eigenvectors in $\mathcal{H}$.

\subsection{Rotation of the subspaces in $\mathbb{R}^n$}
We denote by $\angle(\mathcal{U}, \tilde{\mathcal{U}})$ the largest principal angle between $\mathcal{U}$ and $\tilde{\mathcal{U}}$, thus $\|\sin\Theta(\mathcal{U},\tilde{\mathcal{U}})\|_2 =\sin\angle(\mathcal{U},\tilde{\mathcal{U}})$. Let $\Theta(\mathcal{U}_{\mathrm{eff}},\tilde{\mathcal{U}}_{\mathrm{eff}})$ be the diagonal matrix of principal angles between $\mathcal{U}_{\mathrm{eff}}$ and $\tilde{\mathcal{U}}_{\mathrm{eff}}$. We define the rotation angle $\alpha_{\mathrm{eff}}$ as the largest principal angle between the two subspaces:
\begin{equation}
	\alpha_{\mathrm{eff}} := \angle(\mathcal{U}_{\mathrm{eff}},
	\tilde{\mathcal{U}}_{\mathrm{eff}}) = \max_j \alpha_j.
\end{equation}
Crucially, the spectral norm of the sine of the angle matrix corresponds exactly to the sine of this maximum angle:
\begin{equation}
	\|\sin\Theta(\mathcal{U}_{\mathrm{eff}},
	\tilde{\mathcal{U}}_{\mathrm{eff}})\|_2 = \sin\alpha_{\mathrm{eff}}.
\end{equation}
Therefore, bounding $\|\sin\Theta\|_2$ is equivalent to bounding the maximum rotation angle $\alpha_{\mathrm{eff}}$.

	The main tool that we use is the Davis--Kahan $\sin\Theta$ theorem~\cite{DavisKahan1970}, in the form given in \cite{Stewart1990}, Theorems~3.4 and~3.6, Chapter~V, which bounds the distance between original and perturbed invariant subspaces of a symmetric matrix in terms of the perturbation norm and the spectral gap. The details of the computation of the gaps and the application of the theorem are given in \nameshortref{app:rotation_angle}. Matrix $C$ admits an orthogonal spectral decomposition into three invariant subspaces:
\begin{equation}
	\mathbb{R}^n = \mathcal{U}_M \oplus \mathcal{U}_\star \oplus
	\mathcal{U}_B,
\end{equation}
where $\mathcal{U}_M$ corresponds to $\lambda=1$ with multiplicity $m$, $\mathcal{U}_\star$ is the one-dimensional subspace of the simple eigenvalue $\lambda_\star = \tfrac{n-m-1}{n-1}$, and $\mathcal{U}_B$ is the bulk subspace of dimension $n-m-1$ associated with $-\tfrac{1}{n-1}$. As established in \Cref{subsec:discriminant_with_marked_vertices_block_form}, the success probability $p_M(t)$ is entirely determined by the projection of $\ket{\psi_0}$ onto $\mathcal{V}_{\mathrm{eff}}$, the invariant subspace of $W$ in $\mathcal{H}$ corresponding to $\mathcal{U}_{\mathrm{eff}} = \mathcal{U}_M \oplus \mathcal{U}_\star$, thus we start by analyzing the rotation of $\mathcal{U}_{\mathrm{eff}}$. Using the exact preservation of $\mathcal{U}_{\mathrm{M}}$ together with the spectral gap $\delta_\star$ yields the following bound.

\begin{proposition}[Rotation of $\mathcal{U}_{\mathrm{eff}}$]
	\label{prop:subspace_rotation}
	Let $\alpha_{\mathrm{eff}} := \angle(\mathcal{U}_{\mathrm{eff}}, \tilde{\mathcal{U}}_{\mathrm{eff}})$ be the largest principal angle	between the unperturbed and perturbed effective subspaces of $C$ in $\mathbb{R}^n$. Since $E = 0_{(m)} \oplus E_U$, the marked block of $\tilde{C}$ is identical to that of $C$, so $\tilde{\mathcal{U}}_M = \mathcal{U}_M$ exactly and the rotation of $\mathcal{U}_{\mathrm{eff}}$ comes entirely from $\mathcal{U}_\star$. The Davis--Kahan theorem applied to $\mathcal{U}_\star$ gives:
	\begin{equation}\label{eq:subspace_rotation_bound}
		\sin\alpha_{\mathrm{eff}}
		= \|\sin\Theta(\mathcal{U}_\star,
		\tilde{\mathcal{U}}_\star)\|_2
		\leq \frac{\|E\|_2}{\delta_\star},
	\end{equation}
	where $\delta_\star = \min\!\left\{\frac{n-m}{n-1},\,\frac{m}{n-1}\right\}$. In particular, for $m = \Theta(n)$ the gap satisfies $\delta_\star = \Theta(1)$ and the bound reduces to $O(1/n)$.
\end{proposition}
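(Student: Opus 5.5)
The plan has four steps: show that the marked block decouples exactly, reduce the rotation of $\mathcal{U}_{\mathrm{eff}}$ to the rotation of the single vector $\mathbf{v}_\star$, apply Davis--Kahan to that vector with an explicit spectral gap, and then insert the norm bound from \Cref{prop:E_structure}.

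\emph{Step 1 (exact decoupling).} By \Cref{prop:E_structure}(i), $E_{x,y}=0$ whenever $x$ or $y$ is marked. So in the ordering marked-then-unmarked we have $E=0_{(m)}\oplus E_U$ and $\tilde C = I_{(m)}\oplus\bigl(\tfrac{1}{n-1}(J_{(k)}-I_{(k)})+E_U\bigr)$. The block-diagonal form persists, so $\mathcal{U}_M=\mathrm{span}\{\mathbf{e}_j:j\in M\}$ is still an eigenspace of $\tilde C$ with eigenvalue $1$. Hence $\tilde{\mathcal{U}}_M=\mathcal{U}_M$. I define $\tilde{\mathcal{U}}_\star$ as the eigenline of the unmarked block $\tilde C_U$ associated with the eigenvalue continuing $\lambda_\star$.

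\emph{Step 2 (reduction to $\mathcal{U}_\star$).} Both $\mathcal{U}_\star$ and $\tilde{\mathcal{U}}_\star$ lie in $\mathcal{U}_M^\perp=\mathbb{R}^U$. Then $\mathcal{U}_{\mathrm{eff}}=\mathcal{U}_M\oplus\mathcal{U}_\star$ and $\tilde{\mathcal{U}}_{\mathrm{eff}}=\mathcal{U}_M\oplus\tilde{\mathcal{U}}_\star$ share the common summand $\mathcal{U}_M$. Their orthogonal projectors differ by $\Pi_{\tilde{\mathcal{U}}_\star}-\Pi_{\mathcal{U}_\star}$. Since $\|\sin\Theta\|_2$ equals the spectral norm of the projector difference, this gives $\sin\alpha_{\mathrm{eff}}=\|\sin\Theta(\mathcal{U}_\star,\tilde{\mathcal{U}}_\star)\|_2$, which is the equality in \eqref{eq:subspace_rotation_bound}.

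\emph{Step 3 (Davis--Kahan and the gap).} I apply the $\sin\Theta$ theorem (\cite{Stewart1990}, Ch.~V, Thm.~3.6) to the one-dimensional space $\mathcal{U}_\star$. The gap is the distance from $\lambda_\star$ to the rest of the spectrum of $C$, namely $\{1,-\tfrac1{n-1}\}$:
\begin{equation*}
1-\lambda_\star=\frac{m}{n-1},\qquad \lambda_\star+\frac{1}{n-1}=\frac{n-m}{n-1},
\end{equation*}
so $\delta_\star=\min\{\tfrac{n-m}{n-1},\tfrac{m}{n-1}\}$. In the form of the theorem that uses the unperturbed gap, this yields $\|\sin\Theta\|_2\le\|E\|_2/\delta_\star$. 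If the separation condition has to be checked on the perturbed spectrum, it holds once $\|E\|_2<\delta_\star/2$, up to a harmless constant.

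Actually the $\lambda=1$ competitor plays no role, because the problem decouples into $\mathbb{R}^U$. Inside $\tilde C_U$ only the bulk matters, and the relevant gap is really $k/(n-1)$. I keep $\delta_\star$ as stated, since it is the weaker quantity and therefore still a valid bound.

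\emph{Step 4 (scaling).} By \Cref{prop:E_structure}(ii), $\|E\|_2=\Theta(1/n)$. If $m=\Theta(n)$ with $k=\Theta(n)$ as well, in the intermediate regime, then $\delta_\star=\Theta(1)$, so $\sin\alpha_{\mathrm{eff}}=O(1/n)$ and hence $\alpha_{\mathrm{eff}}=O(1/n)$.

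The main obstacle is justifying the Davis--Kahan separation hypothesis rigorously, rather than asymptotically. Concretely, this means verifying that the perturbed eigenvalue $\tilde\lambda_\star$ stays separated from the perturbed bulk by a constant fraction of $\delta_\star$. I would handle this with Weyl's inequality \eqref{eq:bauer_fike_C}: every eigenvalue moves by at most $\|E\|_2=O(1/n)\ll\delta_\star$. This holds for $n$ large whenever $\min(m,k)/n$ is bounded below.
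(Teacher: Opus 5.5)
Your proposal is correct and follows the paper's proof. Both arguments use the exact invariance $\tilde{\mathcal{U}}_M=\mathcal{U}_M$ from $E=0_{(m)}\oplus E_U$, reduce $\alpha_{\mathrm{eff}}$ to the rotation of $\mathcal{U}_\star$, apply Davis--Kahan with $\delta_\star=\min\{\tfrac{n-m}{n-1},\tfrac{m}{n-1}\}$, and insert $\|E\|_2=O(1/n)$ from \Cref{prop:E_structure}. Your extras are refinements the paper does not spell out: the projector-difference justification of the equality in \eqref{eq:subspace_rotation_bound}, the Weyl check of the separation hypothesis (the paper plugs in the unperturbed gap directly), and the remarks that the relevant in-block gap is $k/(n-1)$ and that $k=\Theta(n)$ is also needed for $\delta_\star=\Theta(1)$.
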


The proof is given in \nameshortref{app:rotation_angle}. This proposition highlights that for $\mathcal{U}_M$ the block-diagonal structure of the perturbation has an exact
consequence,  $\tilde{\mathcal{U}}_M =\mathcal{U}_M$ exactly, with no rotation at all.

\paragraph{Rotation of $\mathcal{U}_B$ in $\mathbb{R}^n$.}
The bulk subspace $\mathcal{U}_B$ also rotates under the perturbation $E$. The spectral gap separating $\lambda_B = -\tfrac{1}{n-1}$ from the rest of the spectrum is:
\begin{align}\notag
	\delta_B &= \min\!\left\{|\lambda_B - \lambda_\star|,\,
	|\lambda_B - 1|\right\} \\
	&= \min\!\left\{\frac{n-m}{n-1},\, \frac{n}{n-1}\right\}
\end{align}
which for $m=\Theta(n)$ becomes $\delta_B= \Theta(1)$. Applying Davis--Kahan with $\|E\|_2 = O(1/n)$ and keeping the regime $m=\Theta(n)$ we have:
\begin{equation}
	\|\sin\Theta(\mathcal{U}_B, \tilde{\mathcal{U}}_B)\|_2
	\leq \frac{\|E\|_2}{\delta_B} = O\!\left(\frac{1}{n}\right).
\end{equation}
The rotation of $\mathcal{U}_B$ is therefore of the same order as that of $\mathcal{U}_\star$. However, there is a crucial structural difference: since both $C$ and $E$ are block-diagonal with $E = 0_{(m)} \oplus E_U$, the bulk eigenvectors $\mathbf{v}_{B,l}$ have support strictly on the unmarked block $U$ and this support is preserved after the perturbation. Concretely, $\tilde{C} = I_{(m)} \oplus (C_U + E_U)$, so the perturbed bulk eigenvectors $\tilde{\mathbf{v}}_{B,l}$ are eigenvectors of $C_U + E_U$ and therefore have zero components on the marked nodes both before and
after the perturbation.

This block structure constrains the rotation to stay entirely within the unmarked subspace $\mathbb{R}^k$, and has a direct consequence at the level of $\mathcal{H}$, as we will see in \Cref{prop:bulk_orthogonality}.

\subsection{Rotation of subspaces and vectors in $\mathcal{H}$}
In this subsection we analyze the effects of the perturbation on the evolution operator eigenspaces. First, we analyze how the perturbation changes the subspace $\mathcal{V}_\star$ through its eigenvectors.
\paragraph{Rotation of the	eigenvectors of $W$ in $\mathcal{V}_\star$.}
The eigenvectors of $W$ in $\mathcal{H}$ are related to the eigenvectors of $C$ in $\mathbb{R}^n$ by the explicit formula of~\cite{Portugal_book} (Theorem~11.4):
\begin{equation}
	\ket{\omega^\pm_\star} = \frac{A\mathbf{v}_\star -
		e^{\pm i\theta_\star}B\mathbf{v}_\star}{\sqrt{2}\sin\theta_\star},
	\qquad
	\ket{\tilde\omega^\pm_\star} = \frac{\tilde{A}\tilde{\mathbf{v}}_\star
		- e^{\pm i\tilde\theta_\star}\tilde{B}\tilde{\mathbf{v}}_\star}{
		\sqrt{2}\sin\tilde\theta_\star},
\end{equation}
where $A$ and $B$ have been defined in \Cref{sec:szegedy_quantum_walk_construction}, and $\tilde{A}$, $\tilde{B}$ are their perturbed counterparts. The key quantity controlling the change of the embedding of $\mathbb{R}^n$ into $\mathcal{H}$ is the spectral norm of the isometry perturbation, whose proof is given in \nameshortref{app:rotation_angle}.

With respect to $A$ and $B$ we formulate the following lemma:

\begin{lemma}[Spectral norm of the isometry perturbation]
	\label{lem:A_perturbation}
	Let $A = \sum_u|\psi_u\rangle\langle u|$ and $\tilde{A} = \sum_u|\tilde\psi_u\rangle\langle u|$ be the isometries associated with the complete graph $K_n$ with marked nodes and the graph $G_{u,v}$ obtained by removing the edge $\{u,v\}$ with uniform redistribution, respectively. Then:
	\begin{equation}
		\|\tilde{A} - A\|_2 = \Theta\!\left(\frac{1}{\sqrt{n}}\right),
	\end{equation}
	and the same holds for $\|\tilde{B} - B\|_2$.
\end{lemma}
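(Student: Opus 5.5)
The plan is to compute $\tilde A - A$ explicitly and exploit its column structure. By definition,
\begin{equation}
\tilde A - A = \sum_x \bigl(|\tilde\psi_x\rangle - |\psi_x\rangle\bigr)\langle x| .
\end{equation}
Since $\Delta$ changes only the columns $u$ and $v$ of $P$, we have $|\tilde\psi_x\rangle = |\psi_x\rangle$ for every $x\notin\{u,v\}$. Hence $\tilde A - A = |\delta_u\rangle\langle u| + |\delta_v\rangle\langle v|$, where
\begin{equation}
|\delta_u\rangle = |u\rangle\otimes\sum_y\Bigl(\sqrt{\tilde P_{y,u}}-\sqrt{P_{y,u}}\Bigr)|y\rangle ,
\end{equation}
and analogously for $|\delta_v\rangle$.

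The first key observation is that $|\delta_u\rangle$ and $|\delta_v\rangle$ live in the orthogonal subspaces $|u\rangle\otimes\mathbb{C}^n$ and $|v\rangle\otimes\mathbb{C}^n$. Therefore
\begin{equation}
(\tilde A-A)^\dagger(\tilde A-A)=\|\delta_u\|^2|u\rangle\langle u|+\|\delta_v\|^2|v\rangle\langle v| ,
\end{equation}
and so $\|\tilde A-A\|_2=\max\{\|\delta_u\|,\|\delta_v\|\}$ exactly, with no need for Frobenius sandwiching.

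Next I compute $\|\delta_u\|^2$ using the column $P_{\cdot,u}$ from the unmarked complete graph and the uniform redistribution $\mathbf r^{(u)}_x=\frac1{n-2}$ over $V\setminus\{u,v\}$. The entry $y=v$ goes from $\sqrt{p}$ to $0$ and contributes $p=\frac1{n-1}$. Each of the $n-2$ entries $y\neq u,v$ goes from $\sqrt p$ to $\sqrt{p(1+\frac1{n-2})}=\sqrt{\frac1{n-2}}$ and contributes $\bigl(\frac{1}{\sqrt{n-2}}-\frac{1}{\sqrt{n-1}}\bigr)^2$. This reuses the identity for $d=\sqrt{n-1}-\sqrt{n-2}$ from the proof of \Cref{prop:exact_shift}, giving a total of $\frac{d^2}{n-1}$ and hence
\begin{equation}
\|\delta_u\|^2=\frac{1}{n-1}+\frac{d^2}{n-1}.
\end{equation}
The sandwich $d^2\le\frac1{4(n-2)}$ from \Cref{prop:asymptotic_shift} then gives $\frac1{n-1}\le\|\delta_u\|^2\le\frac1{n-1}\bigl(1+\frac1{4(n-2)}\bigr)$, so $\|\tilde A-A\|_2=\Theta(1/\sqrt n)$. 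The point I would stress is that the $\sqrt{\cdot}$ map turns the $O(1/n)$ entry deletion into an $O(1/\sqrt n)$ amplitude change, which explains why this is larger than $\|\Delta\|_2$.

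For $B$ I would run the same argument. Here $|\phi_y\rangle=\sum_x\sqrt{P_{y,x}}|x\rangle\otimes|y\rangle$ involves rows of $P$, and a row $y$ changes only in its entries at columns $u,v$. So $\tilde B-B=\sum_y|\eta_y\rangle\langle y|$, where $|\eta_y\rangle$ is supported on $\{|u\rangle,|v\rangle\}\otimes|y\rangle$; these vectors are again mutually orthogonal across $y$, and $\|\tilde B-B\|_2=\max_y\|\eta_y\|$.

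For $y=v$ (and symmetrically $y=u$), the entry $P_{v,u}$ is deleted, so $\|\eta_v\|^2\ge p$, which gives the lower bound. For unmarked $y\ne u,v$, the two entries change by $O(d/\sqrt n)$, negligible. For marked $y$, one must check the convention. In $|\phi_y\rangle$, which is built from rows of $P$, a marked $y$ receives weight from the unmarked columns $u,v$: with redistribution over all of $V\setminus\{u,v\}$, $P_{y,u}$ increases from $p$ to $p+\frac{p}{n-2}$, again an $O(1/n^{3/2})$ amplitude change. Thus the maximum is attained at $y\in\{u,v\}$ and equals $\sqrt{p+O(1/n^2)}=\Theta(1/\sqrt n)$.

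The main obstacle is this bookkeeping for $B$: rows versus columns, and whether the isometries are built from $P$ or $P^0$ (the paper defines $\psi,\phi$ with $P^0$ but uses marked $P$ for $C$). Both conventions give the same leading $\sqrt p$ deletion term, so I would state the result for either convention and note that the marked sector contributes only lower order.
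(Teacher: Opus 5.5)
Your proposal is correct and follows the paper's proof essentially verbatim for $A$. It uses the same decomposition $\tilde A-A=|\delta_u\rangle\langle u|+|\delta_v\rangle\langle v|$, the same first-register orthogonality giving $\|\tilde A-A\|_2=\max(\|\delta_u\|,\|\delta_v\|)$, and the same entrywise count, where your exact value $\|\delta_u\|^2=(1+d^2)/(n-1)$ merely sharpens the paper's $\frac{1}{n-1}+O(1/n^2)$.

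For $B$, the paper only asserts that $|\phi_u\rangle,|\phi_v\rangle$ are the only vectors that change. That holds if $|\phi_y\rangle$ is built from column $y$, but not under the row-indexed formula $\sqrt{P_{v,u}}$ as literally written. Your second-register orthogonality argument covers every $y$, so it is the more careful version of the same idea.
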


Notice that this lemma makes no assumptions on the number of marked nodes $m$. Proof of \Cref{lem:A_perturbation} is given in  \nameshortref{app:rotation_angle}. The rotation of the eigenvectors $\ket{\omega^\pm_\star}$ in $\mathcal{H}$ receives two distinct contributions: the spectral rotation of $\mathbf{v}_\star$ in $\mathbb{R}^n$, transported to $\mathcal{H}$ via $A$ and $B$, which by \Cref{prop:subspace_rotation} contributes only $O(1/n)$; and the change of the operators
$A \to \tilde{A}$ and $B \to \tilde{B}$ themselves, which by \Cref{lem:A_perturbation} contributes $\Theta(1/\sqrt{n})$ and dominates. The proof of the following proposition is given in \nameshortref{app:rotation_angle}.

\begin{proposition}[Rotation of $\ket{\omega^\pm_\star}$ in $\mathcal{H}$]
	\label{prop:omega_star_rotation}
	Under the perturbation induced by the removal of edge $\{u,v\}$, and for $m = \Theta(n)$ so that $\sin\theta_\star = \Theta(1)$, the eigenvectors $\ket{\omega^\pm_\star}$ of $W$ in $\mathcal{V}_\star$ satisfy:
	\begin{equation}\label{eq:omega_star_rotation}
		\|\ket{\tilde\omega^\pm_\star} - \ket{\omega^\pm_\star}\|_2
		= O\!\left(\frac{1}{\sqrt{n}}\right),
	\end{equation}
	dominated by the change of the isometries $\|\tilde{A} - A\|_2 = \Theta(1/\sqrt{n})$ from \Cref{lem:A_perturbation}, while the contribution from the rotation of $\mathbf{v}_\star$ in $\mathbb{R}^n$
	is only $O(1/n)$.
\end{proposition}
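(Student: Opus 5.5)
The plan is to write the difference of the two eigenvectors directly from the explicit formula of \cite{Portugal_book} and split it with a telescoping decomposition. Set $s=\sin\theta_\star$, $\tilde s=\sin\tilde\theta_\star$, and define $N(\mathbf v,\theta,A,B):=A\mathbf v-e^{\pm i\theta}B\mathbf v$. Then
\begin{equation}
\ket{\tilde\omega^\pm_\star}-\ket{\omega^\pm_\star}
=\frac{N(\tilde{\mathbf v}_\star,\tilde\theta_\star,\tilde A,\tilde B)-N(\mathbf v_\star,\theta_\star,A,B)}{\sqrt2\,\tilde s}
+N(\mathbf v_\star,\theta_\star,A,B)\left(\frac{1}{\sqrt2\,\tilde s}-\frac{1}{\sqrt2\,s}\right).
\end{equation}
Since $\|N(\mathbf v_\star,\theta_\star,A,B)\|\le2$, the second term is bounded by $\sqrt2\,|\tilde s-s|/(s\tilde s)$. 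By \Cref{prop:phase_shift} and \Cref{prop:phase_shift_explicit}, $|\tilde s-s|\le|\Delta\theta_\star|=O(1/n^2)$ for $m=\Theta(n)$. Because $s=\Theta(1)$ in this regime, $\tilde s=\Theta(1)$ as well for large $n$, so this term is $O(1/n^2)$.

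For the numerator I will telescope into three pieces.
\begin{enumerate}
\item[(a)] Isometry change: $(\tilde A-A)\tilde{\mathbf v}_\star-e^{\pm i\tilde\theta_\star}(\tilde B-B)\tilde{\mathbf v}_\star$. Its norm is at most $\|\tilde A-A\|_2+\|\tilde B-B\|_2=\Theta(1/\sqrt n)$ by \Cref{lem:A_perturbation}.
\item[(b)] Rotation of the eigenvector: $(A-e^{\pm i\tilde\theta_\star}B)(\tilde{\mathbf v}_\star-\mathbf v_\star)$. Since $\|A\|_2=\|B\|_2=1$, its norm is at most $2\|\tilde{\mathbf v}_\star-\mathbf v_\star\|$.
\item[(c)] Phase change: $(e^{\pm i\theta_\star}-e^{\pm i\tilde\theta_\star})B\mathbf v_\star$. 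Its norm is at most $|\Delta\theta_\star|=O(1/n^2)$.
\end{enumerate}
Dividing by $\sqrt2\,\tilde s=\Theta(1)$ keeps all three orders unchanged.

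The step that needs care is (b), namely turning the subspace angle of \Cref{prop:subspace_rotation} into a vector distance. An eigenvector is defined only up to sign, so I will fix the sign of $\tilde{\mathbf v}_\star$ so that $\mathbf v_\star^T\tilde{\mathbf v}_\star\ge0$. With this choice, $\|\tilde{\mathbf v}_\star-\mathbf v_\star\|=2\sin(\alpha/2)\le\sqrt2\sin\alpha$, where $\alpha\le\pi/2$ is the angle between the one-dimensional subspaces $\mathcal U_\star$ and $\tilde{\mathcal U}_\star$. By \Cref{prop:subspace_rotation}, $\sin\alpha\le\|E\|_2/\delta_\star$. Here $\|E\|_2=O(1/n)$ by \Cref{prop:E_structure}, and $\delta_\star=\Theta(1)$ for $m=\Theta(n)$, so (b) is $O(1/n)$.

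The perturbed eigenvectors $\ket{\tilde\omega^\pm_\star}$ carry a global phase convention, and the estimate is understood with the phase induced by this choice of $\tilde{\mathbf v}_\star$. A second point is that the formula for $\ket{\tilde\omega^\pm_\star}$ requires $\tilde{\mathbf v}_\star$ to be an eigenvector of $\tilde C$ built from $\tilde P$, which is exactly how $\tilde A$ and $\tilde B$ are defined.

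Summing the pieces gives $O(1/\sqrt n)+O(1/n)+O(1/n^2)=O(1/\sqrt n)$. The leading term comes from (a), as stated in the proposition.
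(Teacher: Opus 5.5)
Your proposal is correct and follows essentially the same route as the paper. It uses the same add-and-subtract of the normalization factor, with $\sin\tilde\theta_\star$ in the first denominator where the paper uses $\sin\theta_\star$, and the same split of the numerator into three parts: the isometry change (via \Cref{lem:A_perturbation}), the rotation of $\mathbf{v}_\star$ (via \Cref{prop:subspace_rotation}), and the $O(1/n^2)$ phase change. Your step (b) is in fact more careful than the paper's, which writes $\|\tilde{\mathbf{v}}_\star-\mathbf{v}_\star\|_2\le\|E\|_2/\delta_\star$ directly without fixing the sign of $\tilde{\mathbf{v}}_\star$ or the factor $\sqrt2$ from $\sin\alpha$ to the vector distance, though the orders are unaffected.
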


\paragraph{Rotation of the subspace $\mathcal{V}_\star$ in $\mathcal{H}$.}
While \Cref{prop:omega_star_rotation} bounds the distance between the
individual eigenvectors $\ket{\omega^\pm_\star}$ and
$\ket{\tilde\omega^\pm_\star}$, it is also important to bound the
rotation of the two-dimensional invariant subspace
$\mathcal{V}_\star$ as a whole, since this quantity will be used in the next \Cref{sec:effects_on_succ_prob}. Since the eigenvalues
$e^{\pm 2i\theta_\star}$ of $W$ in $\mathcal{V}_\star$ are
well-separated from all other eigenvalues of $W$ (both $e^{\pm 2i\theta_B}$
and $1$), Theorem~VII.3.4 of~\cite{Bhatia1997} can be applied directly
to $\mathcal{V}_\star$. The proof is given in
\nameshortref{app:rotation_angle}.

\begin{proposition}[Rotation of $\mathcal{V}_\star$ in $\mathcal{H}$]
	\label{prop:V_star_rotation}
	Let $\delta_W = \min\bigl(|e^{\pm 2i\theta_\star} - 1|,\;|e^{\pm 2i\theta_\star} - e^{\pm 2i\theta_B}|\bigr)$, which becomes $\delta_W= \Theta(1)$ for $m = \Theta(n)$. Then:
	\begin{equation}\label{eq:V_star_rotation}
		\|\Pi_{\tilde{\mathcal{V}}_\star}
		- \Pi_{\mathcal{V}_\star}\|_2
		\leq \frac{\|\tilde{W} - W\|_2}{\delta_W}
		= O\!\left(\frac{1}{\sqrt{n}}\right),
	\end{equation}
	by Theorem~VII.3.4 of~\cite{Bhatia1997} applied to $W$ and $\tilde{W}$ as normal operators, combined with \Cref{cor:W_perturbation}.
\end{proposition}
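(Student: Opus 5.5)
The plan has three parts: bound the operator perturbation $\|\tilde W - W\|_2$, compute the spectral separation $\delta_W$ in the regime $m=\Theta(n)$, and then apply Theorem~VII.3.4 of~\cite{Bhatia1997} to the normal operators $W$ and $\tilde W$.

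\textbf{Operator perturbation.} Write $R_A = 2AA^{\dagger}-I$ and $\tilde R_A = 2\tilde A\tilde A^{\dagger}-I$, and similarly for $B$. Telescoping gives
\begin{equation}
\tilde W - W = (\tilde R_B - R_B)\tilde R_A + R_B(\tilde R_A - R_A).
\end{equation}
Reflections are unitary, so
\begin{equation}
\|\tilde W - W\|_2 \le \|\tilde R_A - R_A\|_2 + \|\tilde R_B - R_B\|_2 .
\end{equation}
Next, $\tilde A\tilde A^{\dagger} - AA^{\dagger} = (\tilde A - A)\tilde A^{\dagger} + A(\tilde A - A)^{\dagger}$, and $A,\tilde A$ are isometries. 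Hence $\|\tilde R_A - R_A\|_2 \le 4\|\tilde A - A\|_2$, and the same holds for $B$. By \Cref{lem:A_perturbation} this yields $\|\tilde W - W\|_2 = O(1/\sqrt n)$. I take this to be the content of \Cref{cor:W_perturbation}; if not, the chain above proves it directly.

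\textbf{Separation.} Use Eq.~\eqref{eq:eigenphases_complete_case} with $\cos\theta_\star = 1 - m/(n-1)$. For $m=\Theta(n)$, more precisely $\varepsilon n \le m \le (1-\varepsilon)n$, the angle $\theta_\star$ stays in a compact subinterval of $(0,\pi/2)$. Two separations then follow:
\begin{itemize}
\item $|e^{\pm 2i\theta_\star} - 1| = 2\sin(2\theta_\star)$ is bounded below by a constant. If $m$ approaches $n$, then $\theta_\star$ approaches $\pi/2$; this is still fine for the distance to $1$, since $|e^{2i\theta}-1| = 2|\sin\theta|$ under the convention $e^{\pm 2i\theta}$ in the paper.
\item $\theta_B = \arccos(-1/(n-1))$ tends to $\pi/2$, while $\theta_\star$ stays bounded away from $\pi/2$ by $\Omega(1)$. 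So $|e^{\pm 2i\theta_\star} - e^{\pm 2i\theta_B}|$, for all sign combinations, is $\Omega(1)$.
\end{itemize}
Together these give $\delta_W = \Theta(1)$.

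\textbf{Applying the theorem.} Take the arc (or disc) $S_1$ consisting of small neighbourhoods of $e^{\pm 2i\theta_\star}$, and let $S_2$ be the rest of the spectrum. The unperturbed projector $\Pi_{\mathcal V_\star}$ corresponds to $S_1$ for $W$. The perturbed projector $\Pi_{\tilde{\mathcal V}_\star}$ is the spectral projector of $\tilde W$ onto eigenvalues within distance $\|\tilde W - W\|_2 = o(1)$ of $S_1$, and these remain separated from the remaining spectrum of $\tilde W$ by $\delta_W - O(1/\sqrt n)$.

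Bhatia's $\sin\Theta$ theorem for normal operators gives
\begin{equation}
\|\Pi_{\tilde{\mathcal V}_\star}(I-\Pi_{\mathcal V_\star})\|_2 \le c\,\|\tilde W - W\|_2/\delta_W ,
\end{equation}
with a universal constant $c$ (the paper absorbs it into the stated bound). Since the two projectors have equal rank $2$, their difference has the same norm as this off-diagonal part. This gives $O(1/\sqrt n)$.

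\textbf{Main obstacle.} The hard step is ensuring the perturbed $\mathcal V_\star$ is well defined, i.e.\ that exactly two eigenvalues of $\tilde W$ lie near $e^{\pm 2i\theta_\star}$ and none of them merges into the huge $+1$ eigenspace. I would settle this by appealing directly to the $C$-side result: $\tilde\lambda_\star$ shifts only $O(1/n^2)$ by \Cref{prop:exact_shift}, and $\tilde{\mathcal U}_M = \mathcal U_M$. Szegedy's spectral correspondence for $\tilde C$ then identifies $\tilde{\mathcal V}_\star$ explicitly, and the eigenvalue-counting issue disappears.
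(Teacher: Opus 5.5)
Your proposal is correct and follows essentially the same route as the paper: the telescoping estimate $\|\tilde W-W\|_2\le\|\tilde R_A-R_A\|_2+\|\tilde R_B-R_B\|_2\le 4\bigl(\|\tilde A-A\|_2+\|\tilde B-B\|_2\bigr)$ (exactly the paper's proof of \Cref{cor:W_perturbation}), then $\delta_W=\Theta(1)$, then Bhatia's Theorem~VII.3.4 combined with the equal-rank projector identity. Your extra care is sound: the universal constant $c$; the correct reason for $\delta_W=\Theta(1)$, namely that $\theta_B\to\pi/2$ while $\theta_\star$ stays bounded away from it (the paper's proof instead asserts $\theta_B$ is bounded away from $\pi/2$); and the well-definedness of $\tilde{\mathcal V}_\star$ via the spectral correspondence for $\tilde C$. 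Just fix the slip $|e^{\pm 2i\theta_\star}-1|=2\sin\theta_\star$, not $2\sin(2\theta_\star)$.
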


This proposition establishes that the subspace $\tilde{\mathcal{V}}_\star$ is rotated by $O(1/\sqrt{n})$ with respect to $\mathcal{V}_\star$, in the regime $m=\Theta(n)$. The rotation outside this regime acquires a factor of $1/\delta_W$.
\paragraph{Invariance of $\mathcal{V}_M$ and bulk orthogonality	of $\mathcal{V}_B$.}
We now state two exact consequences of the block-diagonal structure of $E = 0_{(m)} \oplus E_U$ at the level of $\mathcal{H}$.

First, since $\tilde{\mathcal{U}}_M = \mathcal{U}_M$ exactly (\Cref{prop:subspace_rotation}) and the marked columns of $P$ are unchanged by the perturbation, the eigenvectors of $W$ in $\mathcal{V}_M$ are exactly preserved:
\begin{equation}
	|\tilde\omega_{M,j}\rangle = |\omega_{M,j}\rangle
	\quad \text{exactly, for all } j = 1,\ldots,m.
\end{equation}

Second, as established in Section~5.1, the rotation of $\mathcal{U}_B$ in $\mathbb{R}^n$ stays strictly within the unmarked block. At the level of $\mathcal{H}$, this has the following exact consequence.

\begin{proposition}[Bulk orthogonality]
	\label{prop:bulk_orthogonality}
	Since $E = 0_{(m)} \oplus E_U$, the perturbed matrix inherits the block-diagonal form $\tilde{C} = I_{(m)} \oplus (C_U + E_U)$. By the spectral theory of block-diagonal matrices, the perturbed bulk
	eigenvectors in $\tilde{\mathcal{V}}_B$ retain support strictly on the unmarked block and remain orthogonal to $\tilde{\mathcal{V}}_M$. Hence:
	\begin{equation}
		\Pi_M\ket{\tilde\omega_{B,l}} = 0 \quad \text{for all } l.
	\end{equation}
	Consequently, for any state $\ket{\psi} \in \tilde{\mathcal{V}}_B$, $\Pi_M\tilde{W}^t\ket{\psi} = 0$ exactly, since $\tilde{W}^t$ preserves $\tilde{\mathcal{V}}_B$.
\end{proposition}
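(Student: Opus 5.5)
The plan is to prove the statement first in $\mathbb{R}^n$ for the discriminant $\tilde C$, and then transport it to $\mathcal{H}$ through the perturbed isometries $\tilde A,\tilde B$.

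\textbf{Step 1: block structure of $\tilde C$.} By \Cref{prop:E_structure}(i), every entry $E_{x,y}$ with $x$ or $y$ marked vanishes. The reason is the prefactor $\sqrt{P_{u,x}}$, which is zero for marked $x$, together with the fact that the marked columns of $P$ are untouched by $\Delta$. Ordering the basis as $M$ then $U$, this gives $E=0_{(m)}\oplus E_U$ and hence $\tilde C=I_{(m)}\oplus(C_U+E_U)$. Since $\tilde C$ is real symmetric and block diagonal, it has an orthonormal eigenbasis adapted to the splitting $\mathbb{R}^n=\mathbb{R}^M\oplus\mathbb{R}^U$. I will take the $m$ marked eigenvectors $\mathbf{e}_a$, $a\in M$, with eigenvalue $1$, and the $k$ eigenvectors of the $k\times k$ block $C_U+E_U$, padded with zeros on $M$.

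The perturbed bulk eigenvectors $\tilde{\mathbf v}_{B,l}$ are the $k-1$ eigenvectors of $C_U+E_U$ whose eigenvalues are within $\|E\|_2=O(1/n)$ of $-1/(n-1)$, by Weyl's inequality. These are separated from $\tilde\lambda_\star$ and from $1$ for $n$ large. They are therefore supported in $U$, so $(\tilde{\mathbf v}_{B,l})_a=0$ for all $a\in M$, and they are orthogonal to $\tilde{\mathcal U}_M=\mathcal U_M$.

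\textbf{Step 2: lift to $\mathcal{H}$.} I use the eigenvector formula already quoted before \Cref{prop:omega_star_rotation}, now with the perturbed data:
\[
\ket{\tilde\omega^\pm_{B,l}}=\frac{\tilde A\tilde{\mathbf v}_{B,l}-e^{\pm i\tilde\theta_{B,l}}\tilde B\tilde{\mathbf v}_{B,l}}{\sqrt2\sin\tilde\theta_{B,l}}.
\]
The key computation is that $\tilde A\tilde{\mathbf v}_{B,l}=\sum_{x\in U}(\tilde{\mathbf v}_{B,l})_x\ket{\tilde\psi_x}$. Each $\ket{\tilde\psi_x}=\ket{x}\otimes\sum_y\sqrt{\tilde P_{y,x}}\ket y$ has first register $x\in U$, so $\Pi_M$ annihilates this term.

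The $\tilde B$ term needs more care. Here $\ket{\tilde\phi_y}=\sum_x\sqrt{\tilde P_{y,x}}\ket{x}\ket{y}$ with $y\in U$. Its first register is marked only through entries $\tilde P_{y,a}$ with $a\in M$. But $\tilde P_{\cdot,a}=\mathbf{e}_a$ is unchanged by marking and by the perturbation, so $\tilde P_{y,a}=\delta_{ya}=0$ for $y\in U$.

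This requires fixing $\Pi_M$ concretely. I will take $\Pi_M=\sum_{a\in M}\ket a\bra a\otimes I$, i.e.\ the walker is on a marked node as read from the first register, consistent with how $\Pi_M$ is used in \Cref{subsec:discriminant_with_marked_vertices_block_form}. With this choice both terms are killed and $\Pi_M\ket{\tilde\omega^\pm_{B,l}}=0$.

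Orthogonality to $\tilde{\mathcal V}_M$ follows in the same way. The vectors $\ket{\tilde\omega_{M,j}}$ are built from $\mathbf e_a$, which is orthogonal to $\tilde{\mathbf v}_{B,l}$. They are also supported on first register in $M$: $\tilde A\mathbf e_a=\ket a\ket a$ and $\tilde B\mathbf e_a=\sum_x\sqrt{\tilde P_{a,x}}\ket x\ket a$, and these overlaps vanish against the $U$-supported first-register components. Alternatively, orthogonality follows directly because eigenvectors of the unitary $\tilde W$ with distinct eigenvalues are orthogonal.

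\textbf{Step 3: dynamics.} $\tilde{\mathcal V}_B$ is spanned by eigenvectors of $\tilde W$, so it is $\tilde W$-invariant. Therefore $\tilde W^t\ket\psi\in\tilde{\mathcal V}_B$ for any $\ket\psi\in\tilde{\mathcal V}_B$, and by linearity $\Pi_M\tilde W^t\ket\psi=0$.

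\textbf{Main obstacle.} The delicate part is Step 2, specifically the $\tilde B$ term. $\tilde B$ mixes registers, and for a generic $\Pi_M$ (for instance one defined on the second register) the claim could fail, since unmarked nodes transition into marked ones with probability $1/(n-1)$. I would first pin down that $\Pi_M$ acts on the first register, matching the $\mathcal{A}^\perp\cap\mathcal{B}^\perp$ statement quoted from \cite{Portugal_book}. If the paper's convention instead projects the second register, I would fall back to the statement only for the $\tilde A$-component. Alternatively, I would use that $\Pi_M$ restricted to $\mathcal{A}+\mathcal{B}$ is determined by the marked coordinates of $C$-eigenvectors, which vanish for the bulk vectors by Step 1.
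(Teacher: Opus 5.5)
\textbf{Gap in Step 2: the $\tilde B$ term.} Your Steps 1 and 3 are fine and coincide with the paper: the block form $\tilde C=I_{(m)}\oplus(C_U+E_U)$, the $U$-support of $\tilde{\mathbf v}_{B,l}$, and the $\tilde W$-invariance of $\tilde{\mathcal V}_B$. Orthogonality of $\tilde{\mathcal V}_B$ and $\tilde{\mathcal V}_M$ also holds, simply because they are eigenspaces of a unitary for distinct eigenvalues. The paper's proof never lifts to $\mathcal H$; it passes directly from the $U$-support of $\tilde{\mathbf v}_{B,l}$ to $\Pi_M\ket{\tilde\omega_{B,l}}=0$. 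Your Step 2 tries to justify that transfer, and it breaks at the $\tilde B$ term. You set $\ket{\tilde\phi_y}=\sum_x\sqrt{\tilde P_{y,x}}\,\ket{x}\ket{y}$ by substituting into the paper's formula for $\ket{\phi_v}$. But that formula is written for the symmetric $P^0$, where the index order is immaterial. For the non-symmetric marked and perturbed matrix, your version gives $\|\ket{\tilde\phi_y}\|^2=\sum_x\tilde P_{y,x}$, a row sum that is not $1$, and $\langle\tilde\psi_x|\tilde\phi_y\rangle=\tilde P_{y,x}$ instead of $\tilde C_{x,y}$. Then $\tilde B$ is not an isometry, $\tilde R_B$ is not a reflection, and the eigenvector formula you invoke does not apply. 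The only choice compatible with $\tilde A^\dagger\tilde B=\tilde C$ is $\ket{\tilde\phi_y}=\sum_x\sqrt{\tilde P_{x,y}}\,\ket{x}\ket{y}$, whose amplitudes are the outgoing probabilities of $y$.

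\textbf{The identity itself fails.} With the correct definition the step cannot be repaired. For $y\in U$ and $a\in M$ one has $\tilde P_{a,y}\ge\frac{1}{n-1}$ (unmarked nodes still jump to marked ones, as you noticed), so
\[
\Pi_M\tilde B\tilde{\mathbf v}_{B,l}=\sum_{a\in M}\ket{a}\otimes\sum_{y\in U}(\tilde{\mathbf v}_{B,l})_y\sqrt{\tilde P_{a,y}}\,\ket{y},
\qquad
\|\Pi_M\tilde B\tilde{\mathbf v}_{B,l}\|^2\ge\frac{m}{n-1}.
\]
Since $\Pi_M\tilde A\tilde{\mathbf v}_{B,l}=0$, nothing cancels, and $\|\Pi_M\ket{\tilde\omega^\pm_{B,l}}\|\ge\sqrt{m/(2(n-1))}$. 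This is $\Theta(1)$ for $m=\Theta(n)$, and the same computation already applies at $E=0$.

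Your fallbacks do not close the gap either. Projecting on the second register only moves the obstruction to the $\tilde A$ term, since $\ket{\tilde\psi_x}$ carries amplitude $\sqrt{\tilde P_{a,x}}$ on each marked $a$ in its second register. Your last fallback also fails, because $\Pi_M B\mathbf v$ depends on the $U$-coordinates of $\mathbf v$, not on its marked ones. The $U$-support only buys invisibility to the projector onto $\mathrm{span}\{\ket{a}\ket{a}:a\in M\}$, which is not the success-probability projector.

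So support on $U$ in $\mathbb R^n$ does not give invisibility to $\Pi_M$ in $\mathcal H$. The paper's proof carries the same defect, hidden by skipping the lift. What the success-probability argument actually needs is a bound on the weight of $\ket{\psi_0}$ in $\tilde{\mathcal V}_B$ (zero at $E=0$), and that has to be proved separately.
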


The proof is given in \nameshortref{app:rotation_angle}. This last paragraph completes the analysis of the subspace rotation since $\mathcal{H} = \mathcal{V}_{\mathrm{eff}} \oplus \mathcal{V}_B \oplus (\mathcal{A}^\perp \cap \mathcal{B}^\perp)$, and $\tilde{\mathcal{V}}_M = \mathcal{V}_M$ exactly, the effective subspace rotation is given by the rotation of $\mathcal{V}_\star$. This will be useful in the next section to define the contributions to the probability variation under the perturbation. 
%FIFTH APPENDIX BEGIN----------------------------------------------------------------------------------------------------------------------------------------
\toappendix{%
	\stepcounter{AppCount}
	\subsection*{\Alph{AppCount}.\space Subspace and eigenvector rotation under edge removal} \labelshort[\Alph{AppCount}]{app:rotation_angle}
	In this appendix we first give an overview of two important theorems that are employed in many of the proofs in the following appendices, the Davis--Kahan bound (or theorem) and the Theorem VII.3.4 for normal operators in \cite{Bhatia1997}. Bounds of this type, controlling subspace rotation or dynamical leakage by the ratio between the perturbation norm and the spectral gap, are a recurring feature of quantum perturbation theory; see e.g.~\cite{SzaboEtAl2025} for a recent time-independent leakage bound of this form for block-diagonal effective dynamics in gapped quantum systems. Then we will proceed to the proofs for \Cref{prop:subspace_rotation}, \Cref{lem:A_perturbation}, \Cref{prop:omega_star_rotation}, \Cref{prop:V_star_rotation} and \Cref{prop:bulk_orthogonality}.
	
	\medskip
	\noindent\textbf{\textit{Davis--Kahan bound.}}
	Because both $C$ and $\tilde{C} = C + E$ are real and symmetric, we can quantify how much the eigenvectors of $C$ rotate under the perturbation $E$ using the Davis--Kahan $\sin\Theta$ theorem~\cite{DavisKahan1970}, in the form presented in~\cite{Stewart1990}, Theorems~3.4 and~3.6, Chapter~V. This theorem establishes an upper bound on the distance between original and perturbed invariant subspaces of a Hermitian matrix in terms of the norm of the perturbation and of the spectral gap.
	
	More specifically, the theorem postulates that the norm of the sine of the principal angles matrix between the two subspaces is bounded by the ratio between the perturbation norm and the minimum spectral gap of the	unperturbed matrix spectrum. This allows us to quantify the stability of the eigenvectors, ensuring that provided the eigenvalues are sufficiently separated, the estimated subspace remains close to the original one.
	
	Our application simplifies it since we are dealing with an invariant subspace of dimension one, corresponding to the simple eigenvalues $\lambda_\star$ and $\tilde{\lambda}_\star$. Let $D\in\mathbb{R}^{n\times n}$ be symmetric, and let $\tilde{D} = D + E$ with $E$ a symmetric perturbation. Denote by $\mathcal{U}$ and $\tilde{\mathcal{U}}$ the invariant subspaces of $D$ and $\tilde{D}$, respectively. If $\Theta(\mathcal{U},\tilde{\mathcal{U}})$ is the diagonal matrix of principal angles, the Davis--Kahan theorem states that
	\begin{equation}\label{eq:davis-kahan-general}
		\|\sin\Theta(\mathcal{U},\tilde{\mathcal{U}})\|_2
		\;\le\;
		\frac{\|D Z - Z F\|_2}{\delta},
	\end{equation}
	where $Z$ spans $\tilde{\mathcal{U}}$, $F$ is any Hermitian matrix of appropriate size, and $\delta$ is the minimum spectral gap separating the eigenvalues of interest from the rest of the spectrum. The matrix $R = DZ - ZF$ is called the residual; when $\tilde{D} = D + E$ with $E$ small, $R = EZ$ and the rotation is controlled directly by $\|E\|_2$. When $\mathcal{U}$ is one-dimensional, spanned by a single eigenvector $\mathbf{v}_i$ of a simple eigenvalue $\lambda_i$, the subspace angle reduces to the vector angle $\angle$ between $\mathbf{v}_i$ and $\tilde{\mathbf{v}}_i$, and~\eqref{eq:davis-kahan-general} becomes
	\begin{equation}\label{eq:davis-kahan-vector}
		\sin\angle(\mathbf{v}_i, \tilde{\mathbf{v}}_i)
		\;\le\;
		\frac{\|E\|_2}{\delta_i},
		\qquad
		\delta_i = \min_{j \neq i}|\lambda_i - \lambda_j|.
	\end{equation}
	
	\medskip
	\noindent\textbf{\textit{Bhatia's Theorem VII.3.4 for normal operators.}}
	While the Davis--Kahan theorem applies to symmetric (Hermitian)
	matrices, the walk operators $W$ and $\tilde{W}$ are unitary, hence
	normal but not Hermitian. To bound the rotation of their invariant
	subspaces, we use Theorem~VII.3.4 of~\cite{Bhatia1997}, which
	generalizes the perturbation theory of invariant subspaces to normal
	operators.
	
	Let $N_1$ and $N_2$ be normal operators on a finite-dimensional
	Hilbert space $\mathcal{H}$. Let $S_1 \subset \mathbb{C}$ and
	$S_2 \subset \mathbb{C}$ be two disjoint sets of complex numbers,
	and let $\Pi_1 = E_{N_1}(S_1)$ and $\Pi_2 = E_{N_2}(S_2)$ be the
	spectral projections of $N_1$ onto eigenvalues in $S_1$ and of $N_2$
	onto eigenvalues in $S_2$, respectively. Define the separation
	$\delta = \mathrm{dist}(S_1, S_2) = \min\{|s_1 - s_2| : s_1 \in S_1,
	\, s_2 \in S_2\}$. Then Theorem~VII.3.4 of~\cite{Bhatia1997} states
	that:
	\begin{equation}\label{eq:bhatia-VII-3-4}
		\|\Pi_1 \Pi_2\|_2 \leq \frac{1}{\delta}
		\|\Pi_1 (N_1 - N_2) \Pi_2\|_2.
	\end{equation}
	In particular, since $\|\Pi_1(N_1 - N_2)\Pi_2\|_2 \leq
	\|N_1 - N_2\|_2$:
	\begin{equation}\label{eq:bhatia-VII-3-4-simple}
		\|\Pi_1 \Pi_2\|_2 \leq \frac{\|N_1 - N_2\|_2}{\delta}.
	\end{equation}
	When $S_1$ contains the eigenvalues of $N_1$ associated with a
	subspace $\mathcal{V}$ and $S_2$ contains the eigenvalues of $N_2$
	outside the corresponding perturbed subspace $\tilde{\mathcal{V}}$,
	we have $\Pi_1 = \Pi_{\mathcal{V}}$ and
	$\Pi_2 = \Pi_{\tilde{\mathcal{V}}^\perp}$. If $\mathcal{V}$ and
	$\tilde{\mathcal{V}}$ have the same dimension, the standard identity
	for orthogonal projectors of equal rank gives:
	\begin{equation}\label{eq:projector-identity}
		\|\Pi_{\tilde{\mathcal{V}}} - \Pi_{\mathcal{V}}\|_2
		= \|\Pi_{\mathcal{V}}\Pi_{\tilde{\mathcal{V}}^\perp}\|_2,
	\end{equation}
	so that~\eqref{eq:bhatia-VII-3-4-simple} directly bounds the
	projector difference:
	\begin{equation}\label{eq:bhatia-projector-bound}
		\|\Pi_{\tilde{\mathcal{V}}} - \Pi_{\mathcal{V}}\|_2
		\leq \frac{\|N_1 - N_2\|_2}{\delta}.
	\end{equation}
	This is the key tool we use to bound the rotation of the invariant
	subspaces of $W$ in $\mathcal{H}$, where the separation $\delta$
	between eigenvalues plays the role of the spectral
	gap.

	\begin{proof}[Proof of \Cref{prop:subspace_rotation}]
		\medskip\noindent\textbf{Rotation of $\mathcal{U}_{\mathrm{eff}}$.}
		Matrix $C$ admits an orthogonal spectral decomposition into three invariant subspaces:
		\begin{equation}
			\mathbb{R}^n = \mathcal{U}_M \oplus \mathcal{U}_\star \oplus
			\mathcal{U}_B,
		\end{equation}
		where $\mathcal{U}_M$ corresponds to the marked eigenvalue $\lambda=1$ with multiplicity $m$, $\mathcal{U}_\star$ is the one-dimensional subspace of the simple eigenvalue $\lambda_\star = \tfrac{n-m-1}{n-1}$, and $\mathcal{U}_B$ is the bulk subspace of dimension $n-m-1$ associated with $-\tfrac{1}{n-1}$.
		
		\medskip
		\noindent\textbf{\textit{Exact preservation of $\mathcal{U}_M$.}}
		Since $E = 0_{(m)} \oplus E_U$, the perturbed discriminant matrix is $\tilde{C} = I_{(m)} \oplus (C_U + E_U)$. The marked block is $I_{(m)}$ for both $C$ and $\tilde{C}$, so $\tilde{\mathcal{U}}_M = \mathcal{U}_M$ exactly. Therefore $\|\sin\Theta(\mathcal{U}_M, \tilde{\mathcal{U}}_M)\|_2 = 0$ and the rotation of $\mathcal{U}_{\mathrm{eff}} = \mathcal{U}_M \oplus
		\mathcal{U}_\star$ is determined entirely by the rotation of $\mathcal{U}_\star$.
		
		\medskip
		\noindent\textbf{\textit{Gap associated with $\mathcal{U}_\star$.}}
		Between $\lambda_\star$ and the other eigenvalues $1$ and $-\tfrac{1}{n-1}$, the smallest gap is
		\begin{equation}
			\delta_\star = \min\!\left\{\frac{n-m}{n-1},\,
			\frac{m}{n-1}\right\}.
		\end{equation}
		We distinguish three main cases:
		\begin{itemize}
			\item \textbf{\textit{Few marked nodes:}} for $m = o(n)$, the minimum relevant gap is $\delta_\star = \tfrac{m}{n-1}$.
			\item \textbf{\textit{Marked nodes in an intermediate range:}} recalling	that $k = n-m$ is the number of unmarked nodes, there exists $\varepsilon \in (0,1/2]$ such that $\varepsilon n \le m \le
			(1-\varepsilon)n$ and $\varepsilon n \le k \le (1-\varepsilon)n$. Since both $\tfrac{n-m}{n-1}$ and $\tfrac{m}{n-1}$ are $\Theta(1)$, then also $\delta_\star = \Theta(1)$.
			\item \textbf{\textit{Few unmarked nodes:}} when $k = n-m = o(n)$, the minimum relevant gap is $\delta_\star = \tfrac{n-m}{n-1}$.
		\end{itemize}
		With $m = \tfrac{n-1}{a}$ and $a = 1.44512$,
		\begin{equation}
			\frac{m}{n-1} \approx \frac{1}{a} \approx 0.6918,
			\qquad
			\frac{n-m}{n-1} \approx 1 - \frac{1}{a} \approx 0.3080,
		\end{equation}
		thus the minimum gap is $\delta_\star = \tfrac{n-m}{n-1} = \Theta(1)$. With $\|E\|_2 = O(1/n)$, the Davis--Kahan bound gives
		\begin{equation}
			\sin\alpha_{\mathrm{eff}}
			= \|\sin\Theta(\mathcal{U}_\star,
			\tilde{\mathcal{U}}_\star)\|_2
			\leq \frac{\|E\|_2}{\delta_\star}
			= O\!\left(\frac{1}{n}\right),
		\end{equation}
		which establishes~\eqref{eq:subspace_rotation_bound}.\qedhere
	\end{proof}
	
	\begin{proof}[Proof of \Cref{lem:A_perturbation}]
		\medskip\noindent\textbf{Spectral norm of the isometry perturbation.}
		Since $A = \sum_u|\psi_u\rangle\langle u|$ and the perturbation $\Delta$ modifies only columns $u$ and $v$ of $P$, only $|\psi_u\rangle$ and $|\psi_v\rangle$ change, so:
		\begin{equation}
			\tilde{A} - A =
			|\delta_u\rangle\langle u|
			+ |\delta_v\rangle\langle v|,
		\end{equation}
		where $|\delta_u\rangle := |\tilde\psi_u\rangle - |\psi_u\rangle$ and $|\delta_v\rangle := |\tilde\psi_v\rangle - |\psi_v\rangle$. Since $\tilde{A} - A$ is a linear operator from $\mathbb{R}^n$ to
		$\mathcal{H}$, its spectral norm satisfies $\|\tilde{A} - A\|_2^2 = \lambda_{\max}\bigl((\tilde{A} - A)^\dagger (\tilde{A} - A)\bigr)=  \lambda_{\max}\bigl((\tilde{A} - A)^* (\tilde{A} - A)\bigr)$. Computing $(\tilde{A} - A)^* (\tilde{A} - A)$ as an operator on $\mathbb{R}^n$:
		\begin{align}
			(\tilde{A} - A)^*(\tilde{A} - A)
			&= (|u\rangle\langle\delta_u| + |v\rangle\langle\delta_v|)
			(|\delta_u\rangle\langle u| + |\delta_v\rangle\langle v|)
			\notag \\
			&= \|\delta_u\|_2^2\,|u\rangle\langle u|
			+ \langle\delta_u|\delta_v\rangle\,|u\rangle\langle v|
			+ \langle\delta_v|\delta_u\rangle\,|v\rangle\langle u|
			+ \|\delta_v\|_2^2\,|v\rangle\langle v|.
		\end{align}
		
		Since $|\psi_u\rangle = |u\rangle \otimes \sum_w\sqrt{P_{w,u}}|w\rangle$ and $|\tilde\psi_u\rangle = |u\rangle \otimes \sum_w\sqrt{\tilde{P}_{w,u}}|w\rangle$, their difference has the form
		$|\delta_u\rangle = |u\rangle \otimes |\eta_u\rangle$ where $|\eta_u\rangle = \sum_w(\sqrt{\tilde{P}_{w,u}} - \sqrt{P_{w,u}})|w\rangle$, and similarly $|\delta_v\rangle = |v\rangle \otimes |\eta_v\rangle$. Therefore:
		\begin{equation}
			\langle\delta_u|\delta_v\rangle
			= \langle u|v\rangle\,\langle\eta_u|\eta_v\rangle
			= 0,
		\end{equation}
		since $|u\rangle$ and $|v\rangle$ are distinct computational basis states and thus orthogonal. The operator $(\tilde{A}-A)^*(\tilde{A}-A)$ therefore reduces to:
		\begin{equation}
			(\tilde{A} - A)^*(\tilde{A} - A)
			= \|\delta_u\|_2^2\,|u\rangle\langle u|
			+ \|\delta_v\|_2^2\,|v\rangle\langle v|,
		\end{equation}
		which is diagonal in the computational basis with eigenvalues $\|\delta_u\|_2^2$ and $\|\delta_v\|_2^2$ (and zero on the orthogonal complement of $\mathrm{span}\{|u\rangle,|v\rangle\}$). Thus we have
		\begin{equation}
			\|\tilde{A} - A\|_2
			= \max\bigl(\||\delta_u\rangle\|_2,\;
			\||\delta_v\rangle\|_2\bigr).
		\end{equation}
		
		We compute $\||\delta_u\rangle\|_2^2 = \sum_w(\sqrt{\tilde{P}_{w,u}} - \sqrt{P_{w,u}})^2$. For $w = v$, since the removal of the link $u \to v$ sets $\sqrt{\tilde{P}_{v,u}} = 0$ from $\sqrt{P_{v,u}} = 1/\sqrt{n-1}$, we have:
		\begin{equation}
			\left(\sqrt{\tilde{P}_{v,u}} - \sqrt{P_{v,u}}\right)^2
			= \frac{1}{n-1}.
		\end{equation}
		For each of the remaining $n-2$ terms with $w \neq u,v$ we have the redistributed weight for $\tilde{P}$ which gives:
		\begin{equation}
			\left(\frac{1}{\sqrt{n-2}} - \frac{1}{\sqrt{n-1}}\right)^2
			= \frac{(\sqrt{n-1} - \sqrt{n-2})^2}{(n-1)(n-2)}
			= \frac{1}{(n-1)(n-2)(\sqrt{n-1}+\sqrt{n-2})^2}
			= O\!\left(\frac{1}{n^3}\right),
		\end{equation}
		so their total contribution over $n-2$ terms is $O(1/n^2)$. Therefore:
		\begin{equation}
			\||\delta_u\rangle\|_2^2
			= \frac{1}{n-1} + O\!\left(\frac{1}{n^2}\right)
			= \frac{1}{n-1}\left(1 + O\!\left(\frac{1}{n}\right)\right).
		\end{equation}
		Taking the square root and using $\sqrt{1+\varepsilon} = 1 + O(\varepsilon)$ for small $\varepsilon$:
		\begin{equation}
			\||\delta_u\rangle\|_2
			= \frac{1}{\sqrt{n-1}}\left(1 + O\!\left(\frac{1}{n}\right)\right)
			= \frac{1}{\sqrt{n-1}} + O\!\left(\frac{1}{n^{3/2}}\right)
			= \Theta\!\left(\frac{1}{\sqrt{n}}\right).
		\end{equation}
		The identical argument applies to $\||\delta_v\rangle\|_2$ by symmetry. Therefore:
		\begin{equation}
			\|\tilde{A} - A\|_2 = \Theta\!\left(\frac{1}{\sqrt{n}}\right).
		\end{equation}
		The same holds for $\|\tilde{B} - B\|_2$, since $B = \sum_v|\phi_v\rangle\langle v|$ and only $|\phi_u\rangle$ and $|\phi_v\rangle$ change under the perturbation, by the same argument.\qedhere
	\end{proof}

	\begin{proof}[Proof of \Cref{prop:omega_star_rotation}]
		\medskip\noindent\textbf{Rotation of $| \omega_\star^\pm \rangle$ in $\mathcal{H}$.}
		Using the explicit expressions from \cite{Portugal_book}(Theorem~11.4), we write:
		\begin{equation}\label{eq:omega_star_diff}
			\ket{\tilde\omega^\pm_\star} - \ket{\omega^\pm_\star}
			= \frac{\tilde{A}\tilde{\mathbf{v}}_\star
			- e^{\pm i\tilde\theta_\star}\tilde{B}\tilde{\mathbf{v}}_\star}{
			\sqrt{2}\sin\tilde\theta_\star}
			- \frac{A\mathbf{v}_\star
			- e^{\pm i\theta_\star}B\mathbf{v}_\star}{
			\sqrt{2}\sin\theta_\star}.
		\end{equation}
		Throughout this proof we work in the regime $m = \tfrac{n-1}{a}=\Theta(n)$, which ensures $\sin\theta_\star = \Theta(1)$ and, in this intermediate regime, $|\Delta\theta_\star| = O(1/n^2)$ (\Cref{prop:phase_shift_explicit}, intermediate-regime specialization). We add and subtract to \Cref{eq:omega_star_diff} $(\tilde{A}\tilde{\mathbf{v}}_\star - e^{\pm i\tilde\theta_\star}\tilde{B}\tilde{\mathbf{v}}_\star)/(\sqrt{2}\sin\theta_\star)$ and apply the triangle inequality:
		\begin{align}
			\|\ket{\tilde\omega^\pm_\star} - \ket{\omega^\pm_\star}\|_2
			\leq \frac{1}{\sqrt{2}\sin\theta_\star}
			\|\tilde{A}\tilde{\mathbf{v}}_\star
			- e^{\pm i\tilde\theta_\star}\tilde{B}\tilde{\mathbf{v}}_\star
			- A\mathbf{v}_\star+ e^{\pm i\theta_\star}B\mathbf{v}_\star\|_2 
			+ \left|\frac{1}{\sqrt{2}\sin\tilde\theta_\star}
			- \frac{1}{\sqrt{2}\sin\theta_\star}\right|
			\cdot\|\tilde{A}\tilde{\mathbf{v}}_\star
			- e^{\pm i\tilde\theta_\star}
			\tilde{B}\tilde{\mathbf{v}}_\star\|_2.
		\end{align}
		
		\medskip
		\noindent\textbf{\textit{Second term.}}
		Since $\|\tilde{A}\tilde{\mathbf{v}}_\star\|_2 \leq \|\tilde{A}\|_2 = 1$ and $\|\tilde{B}\tilde{\mathbf{v}}_\star\|_2 \leq 1$, we have $\|\tilde{A}\tilde{\mathbf{v}}_\star - e^{\pm i\tilde\theta_\star}\tilde{B}\tilde{\mathbf{v}}_\star\|_2 \leq 2$. The prefactor satisfies:
		\begin{equation}
			\left|\frac{1}{\sin\tilde\theta_\star}
			- \frac{1}{\sin\theta_\star}\right|
			= \frac{|\sin\theta_\star - \sin\tilde\theta_\star|}{
				\sin\theta_\star\sin\tilde\theta_\star}
			\leq \frac{|\Delta\theta_\star|}{\sin^2\theta_\star}
			= O\!\left(\frac{1}{n^2}\right),
		\end{equation}
		where we used $|\Delta\theta_\star| = O(1/n^2)$ (\Cref{prop:phase_shift_explicit}, specialized to $m=\Theta(n)$) and $\sin\theta_\star = \Theta(1)$. Therefore this term contributes $O(1/n^2)$ and is negligible.
		
		\medskip
		\noindent\textbf{\textit{First term.}}
		We rewrite the numerator by grouping the $A$ and $B$ terms separately:
		\begin{equation}
			\tilde{A}\tilde{\mathbf{v}}_\star
			- e^{\pm i\tilde\theta_\star}\tilde{B}\tilde{\mathbf{v}}_\star
			- A\mathbf{v}_\star + e^{\pm i\theta_\star}B\mathbf{v}_\star
			 = \bigl(\tilde{A}\tilde{\mathbf{v}}_\star
			- A\mathbf{v}_\star\bigr)
			- \bigl(e^{\pm i\tilde\theta_\star}
			\tilde{B}\tilde{\mathbf{v}}_\star
			- e^{\pm i\theta_\star}B\mathbf{v}_\star\bigr).
		\end{equation}
		For the first group, adding and subtracting	$A\tilde{\mathbf{v}}_\star$, we get:
		\begin{equation}
			\tilde{A}\tilde{\mathbf{v}}_\star - A\mathbf{v}_\star
			= (\tilde{A}-A)\tilde{\mathbf{v}}_\star
			+ A(\tilde{\mathbf{v}}_\star - \mathbf{v}_\star).
		\end{equation}
		For the second group, adding and subtracting $e^{\pm i\tilde\theta_\star}B\mathbf{v}_\star$:
		\begin{align}
			e^{\pm i\tilde\theta_\star}\tilde{B}\tilde{\mathbf{v}}_\star
			- e^{\pm i\theta_\star}B\mathbf{v}_\star
			&= e^{\pm i\tilde\theta_\star}
			\bigl(\tilde{B}\tilde{\mathbf{v}}_\star
			- B\mathbf{v}_\star\bigr)
			+ \bigl(e^{\pm i\tilde\theta_\star}
			- e^{\pm i\theta_\star}\bigr)B\mathbf{v}_\star \notag \\
			&= e^{\pm i\tilde\theta_\star}\bigl[
			(\tilde{B}-B)\tilde{\mathbf{v}}_\star
			+ B(\tilde{\mathbf{v}}_\star - \mathbf{v}_\star)\bigr]
			+ \bigl(e^{\pm i\tilde\theta_\star}
			- e^{\pm i\theta_\star}\bigr)B\mathbf{v}_\star,
		\end{align}
		where in the second line we further decomposed $\tilde{B}\tilde{\mathbf{v}}_\star - B\mathbf{v}_\star$ by adding and subtracting $B\tilde{\mathbf{v}}_\star$. Applying the triangle inequality to the full expression, and using $|e^{\pm i\tilde\theta_\star}| = 1$ and $\|B\mathbf{v}_\star\|_2 \leq \|B\|_2 = 1$:
		\begin{align}
			\|\tilde{A}\tilde{\mathbf{v}}_\star
			- e^{\pm i\tilde\theta_\star}\tilde{B}\tilde{\mathbf{v}}_\star
			- A\mathbf{v}_\star + e^{\pm i\theta_\star}
			B\mathbf{v}_\star\|_2
			&\leq \|(\tilde{A}-A)\tilde{\mathbf{v}}_\star\|_2 + \|A(\tilde{\mathbf{v}}_\star- \mathbf{v}_\star)\|_2 
			 + \|(\tilde{B}-B)\tilde{\mathbf{v}}_\star\|_2  \notag \\
			&\quad + \|B(\tilde{\mathbf{v}}_\star- \mathbf{v}_\star)\|_2 + |e^{\pm i\tilde\theta_\star} - e^{\pm i\theta_\star}|.
		\end{align}
		Using $|e^{i\alpha} - e^{i\beta}| \leq |\alpha - \beta|$, the last term is bounded by $|\Delta\theta_\star| = O(1/n^2)$ and is negligible. We bound the remaining four terms separately.
		
		\medskip
		\noindent\textbf{\textit{Contribution from $A \to \tilde{A}$ and $B \to \tilde{B}$.}}
		Since $\|\tilde{\mathbf{v}}_\star\|_2 = 1$:
		\begin{equation}
			\|(\tilde{A}-A)\tilde{\mathbf{v}}_\star\|_2
			\leq \|\tilde{A}-A\|_2
			= \Theta\!\left(\frac{1}{\sqrt{n}}\right),
		\end{equation}
		and the same holds for $\|(\tilde{B}-B)\tilde{\mathbf{v}}_\star\|_2$, by \Cref{lem:A_perturbation}.
		
		\medskip
		\noindent\textbf{\textit{Contribution from $\mathbf{v}_\star \to \tilde{\mathbf{v}}_\star$.}}
		Since $\|A\|_2 = \|B\|_2 = 1$ and $\|\tilde{\mathbf{v}}_\star - \mathbf{v}_\star\|_2 \leq \|E\|_2/\delta_\star = O(1/n)$ from \Cref{prop:subspace_rotation}:
		\begin{equation}
			\|A(\tilde{\mathbf{v}}_\star - \mathbf{v}_\star)\|_2
			\leq \|\tilde{\mathbf{v}}_\star - \mathbf{v}_\star\|_2
			= O\!\left(\frac{1}{n}\right),
		\end{equation}
		and the same holds for $\|B(\tilde{\mathbf{v}}_\star - \mathbf{v}_\star)\|_2$.
		
		Since $\sin\theta_\star = \Theta(1)$, collecting all contributions:
		\begin{equation}
			\|\ket{\tilde\omega^\pm_\star} - \ket{\omega^\pm_\star}\|_2
			= \frac{1}{\Theta(1)}\left(O\!\left(\frac{1}{\sqrt{n}}\right)
			+ O\!\left(\frac{1}{n}\right)\right)
			+ O\!\left(\frac{1}{n^2}\right)
			= O\!\left(\frac{1}{\sqrt{n}}\right),
		\end{equation}
		dominated by the change of the isometries $A \to \tilde{A}$ and $B \to \tilde{B}$ defined by \Cref{lem:A_perturbation}.\qedhere
	\end{proof}
	\begin{proof}[Proof of \Cref{prop:V_star_rotation}]
		\medskip\noindent\textbf{Rotation of $\mathcal{V}_\star$ in $\mathcal{H}$.}
		The eigenvalues of $W$ in $\mathcal{V}_\star$ are $e^{\pm 2i\theta_\star}$. The other eigenvalues of $W$ are $1$ for $\mathcal{V}_M$ and $\mathcal{A}^\perp \cap \mathcal{B}^\perp$, and $e^{\pm 2i\theta_B}$ for $\mathcal{V}_B$. We apply \eqref{eq:bhatia-projector-bound} from Theorem VII.3.4 of \cite{Bhatia1997} with $N_1 = W$, $N_2 = \tilde{W}$, $S_1 = \{e^{\pm 2i\theta_\star}\}$, and
		$S_2 = \sigma(\tilde{W}|_{\tilde{\mathcal{V}}_\star^\perp})$. The separation is:
		\begin{equation}
			\delta_W = \mathrm{dist}(S_1, S_2)
			= \min\bigl(|e^{\pm 2i\theta_\star} - 1|,\;
			|e^{\pm 2i\theta_\star} - e^{\pm 2i\theta_B}|\bigr)
		\end{equation}
		which is $\delta_W= \Theta(1)$ for $m = \Theta(n)$, since both $\theta_\star$ and $\theta_B$ are bounded away from $0$ and $\pi/2$. Therefore by \Cref{eq:bhatia-projector-bound} and \Cref{cor:W_perturbation}, within the regime $m=\Theta(n)$, we have:
		\begin{equation}
			\|\Pi_{\tilde{\mathcal{V}}_\star}
			- \Pi_{\mathcal{V}_\star}\|_2
			\leq \frac{\|\tilde{W} - W\|_2}{\delta_W}
			= \frac{\Theta(1/\sqrt{n})}{\Theta(1)}
			= O\!\left(\frac{1}{\sqrt{n}}\right).\qedhere
		\end{equation}	
	\end{proof}
	
	\begin{proof}[Proof of \Cref{prop:bulk_orthogonality}]
		\medskip\noindent\textbf{Bulk orthogonality.}
		The original discriminant matrix $C$ is block-diagonal, $C = I_{(m)} \oplus C_U$, reflecting the separation between $M$ and $U$. Since the perturbation arises from the removal of an edge
		connecting two unmarked nodes $u, v \in U$, the matrix $E$ has non-zero entries strictly confined to $U$, i.e.\ $E = 0_{(m)} \oplus E_U$, so that
		\begin{equation}
			\tilde{C} = \begin{pmatrix} I_{(m)} & 0 \\ 0 & C_U + E_U
			\end{pmatrix}.
		\end{equation}
		By the spectral theory of block-diagonal matrices, the perturbed bulk eigenvectors in $\tilde{\mathcal{U}}_B$ and the perturbed gap eigenvector in $\tilde{\mathcal{U}}_\star$ retain support strictly on the unmarked block, while the marked eigenvectors in $\tilde{\mathcal{U}}_M$ remain supported strictly on $M$. We denote by $\{|\tilde\omega_{B,l}\rangle\}_l$ the orthonormal basis of
		eigenvectors of $\tilde{C}$ associated with $\tilde{\mathcal{U}}_B$. Since each $|\tilde\omega_{B,l}\rangle$ has support strictly on the unmarked block,
		\begin{equation}
			\Pi_M|\tilde\omega_{B,l}\rangle = 0 \quad \text{for all } l,
		\end{equation}
		so the bulk eigenvectors do not contribute to the success probability upon measurement. Consequently, for any state $|\psi\rangle \in \tilde{\mathcal{V}}_B$, we have $\Pi_M\tilde{W}^t|\psi\rangle = 0$ exactly, since $\tilde{W}^t$ preserves $\tilde{\mathcal{V}}_B$.\qedhere
	\end{proof}

}
%FIFTH APPENDIX END----------------------------------------------------------------------------

\section{Effect on the Success Probability of a Single Edge Removal}
\label{sec:effects_on_succ_prob}

Having evaluated the spectral perturbation induced by a single undirected edge removal, we now assess its impact on the success probability of the Szegedy walk. Recall that for a marked set $M$,
the success probability at discrete time $t$ is defined as~\cite{Portugal_book}
\begin{equation}
	p_M(t) \;=\; \|\Pi_M\, W^t\!\ket{\psi_0}\|^2,
\end{equation}
where $W$ is Szegedy's walk operator, $\Pi_M$ projects onto the marked subspace, and $\ket{\psi_0}$ is the uniform initial state.

We saw that the variation of $p_M(t)$ under a small perturbation of $W$ originates from two sources: the phase shift of the eigenvalue $\theta_\star \to \tilde{\theta}_\star$, and the geometric misalignment between $\mathcal{V}_{\mathrm{eff}}$ and $\tilde{\mathcal{V}}_{\mathrm{eff}}$ in $\mathcal{H}$. We now combine these two effects and analyze their joint impact.

\subsection{Combined effect on the success probability}

We compute the overall variation of the success probability combining the phase shift $\Delta\theta_\star$ associated with the change $\lambda_\star \to \tilde{\lambda}_\star$, and the rotation of
$\mathcal{V}_{\mathrm{eff}}$ in $\mathcal{H}$.

\begin{proposition}[Perturbation bound on the success probability]
	\label{prop:prob_perturbation_direct}
	Let $p_M(t)$ and $\tilde{p}_M(t)$ be the success probabilities of the Szegedy walk on $K_n$ and $G_{u,v}$ respectively, with $m = \Theta(n)$ marked nodes and uniform initial state $\ket{\psi_0}$. For any $t \geq 1$,
	\begin{equation}
		|\tilde{p}_M(t) - p_M(t)|
		\;=\; O\!\left(\frac{1}{\sqrt{n}}\right)
		+ O\!\left(\frac{t}{n^2}\right),
	\end{equation}
	and in particular for $t = t^\star = O(1)$,
	\begin{equation}
		|\tilde{p}_M(t^\star) - p_M(t^\star)|
		\;=\; O\!\left(\frac{1}{\sqrt{n}}\right),
	\end{equation}
	dominated by the geometric misalignment between $\mathcal{V}_{\mathrm{eff}}$ and $\tilde{\mathcal{V}}_{\mathrm{eff}}$, controlled by $\|\Pi_{\tilde{\mathcal{V}}_{\mathrm{eff}}} - 	\Pi_{\mathcal{V}_{\mathrm{eff}}}\|_2 = O(1/\sqrt{n})$ (\Cref{cor:W_perturbation}, Theorem~VII.3.4 of~\cite{Bhatia1997}).
\end{proposition}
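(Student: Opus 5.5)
The approach is to write the success probability through its spectral decomposition on the effective subspace and compare it term by term. By the reduction of \Cref{subsec:discriminant_with_marked_vertices_block_form}, $p_M(t)=\|\Pi_M W^t\Pi_{\mathcal{V}_{\mathrm{eff}}}\ket{\psi_0}\|^2$. The same structure holds for the perturbed walk. \Cref{prop:bulk_orthogonality} makes $\tilde{\mathcal V}_B$ invisible to $\Pi_M$, and $\tilde W$ still acts as the identity on $\tilde{\mathcal A}^\perp\cap\tilde{\mathcal B}^\perp$, which is killed by $\Pi_M$. Hence $\tilde p_M(t)=\|\Pi_M \tilde W^t\Pi_{\tilde{\mathcal{V}}_{\mathrm{eff}}}\ket{\psi_0}\|^2$. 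Using $|\,\|x\|^2-\|y\|^2|\le \|x-y\|(\|x\|+\|y\|)\le 2\|x-y\|$, with both vectors of norm at most one, it suffices to bound
\[
\|\Pi_M(\tilde W^t\Pi_{\tilde{\mathcal{V}}_{\mathrm{eff}}}-W^t\Pi_{\mathcal{V}_{\mathrm{eff}}})\ket{\psi_0}\|.
\]

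I would split this difference by adding and subtracting $\tilde W^t\Pi_{\mathcal V_{\mathrm{eff}}}\ket{\psi_0}$. This gives a geometric term $\tilde W^t(\Pi_{\tilde{\mathcal V}_{\mathrm{eff}}}-\Pi_{\mathcal V_{\mathrm{eff}}})\ket{\psi_0}$ and a dynamical term $(\tilde W^t-W^t)\Pi_{\mathcal V_{\mathrm{eff}}}\ket{\psi_0}$.

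For the geometric term, $\tilde W^t$ is unitary and $\|\Pi_M\|=1$, so the term is bounded by $\|\Pi_{\tilde{\mathcal V}_{\mathrm{eff}}}-\Pi_{\mathcal V_{\mathrm{eff}}}\|_2$. Since $\tilde{\mathcal V}_M=\mathcal V_M$ exactly and the two summands are orthogonal, this equals $\|\Pi_{\tilde{\mathcal V}_\star}-\Pi_{\mathcal V_\star}\|_2=O(1/\sqrt n)$ by \Cref{prop:V_star_rotation}. This bound is independent of $t$.

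For the dynamical term, I would not use the crude bound $\|\tilde W^t-W^t\|\le t\|\tilde W-W\|$, which would give $t/\sqrt n$. Instead, I would expand $\Pi_{\mathcal V_{\mathrm{eff}}}\ket{\psi_0}$ in the eigenbasis $\{\ket{\omega_{M,j}},\ket{\omega_\star^\pm}\}$ and, in the same way, use the perturbed eigenbasis for the perturbed evolution. On $\mathcal V_M$ both operators act as the identity, so there is no contribution. On $\mathcal V_\star$, one has $W^t\ket{\omega^\pm_\star}=e^{\pm2it\theta_\star}\ket{\omega^\pm_\star}$. Replacing $\ket{\omega^\pm_\star}$ by $\ket{\tilde\omega^\pm_\star}$ costs $O(1/\sqrt n)$ by \Cref{prop:omega_star_rotation}, uniformly in $t$ because only unit-modulus phases multiply these differences. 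What remains is the phase mismatch $|e^{\pm2it\tilde\theta_\star}-e^{\pm2it\theta_\star}|\le 2t|\Delta\theta_\star|=O(t/n^2)$, by \Cref{prop:phase_shift_explicit} in the intermediate regime $m=\Theta(n)$. The expansion coefficients are bounded by one, and the subspace is two-dimensional, so constants do not grow. Summing the two terms gives $O(1/\sqrt n)+O(t/n^2)$. For $t=t^\star=O(1)$ the second term is $O(1/n^2)$, so the geometric $O(1/\sqrt n)$ contribution dominates.

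The main obstacle is making the dynamical term rigorous without inheriting a factor of $t$ on the eigenvector error. The state $\tilde W^t\Pi_{\mathcal V_{\mathrm{eff}}}\ket{\psi_0}$ is not exactly in $\tilde{\mathcal V}_{\mathrm{eff}}$. I would handle this by projecting first: write $\Pi_{\mathcal V_{\mathrm{eff}}}\ket{\psi_0}=\Pi_{\tilde{\mathcal V}_{\mathrm{eff}}}\Pi_{\mathcal V_{\mathrm{eff}}}\ket{\psi_0}+O(1/\sqrt n)$ using \Cref{prop:V_star_rotation}, and then evolve exactly with $\tilde W$ in its own eigenbasis. The coefficients of this vector along $\ket{\tilde\omega^\pm_\star}$ differ from those along $\ket{\omega^\pm_\star}$ by $O(1/\sqrt n)$, again by \Cref{prop:omega_star_rotation}. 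With this route the $t$-dependence appears only through the phases, as required.
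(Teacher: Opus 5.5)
Your proposal is correct and is essentially the paper's proof. It uses the same exact reduction to $\mathcal{V}_{\mathrm{eff}}$ and $\tilde{\mathcal{V}}_{\mathrm{eff}}$, the same $2\|x-y\|$ estimate, and the same geometric/spectral split controlled by \Cref{prop:V_star_rotation}, \Cref{prop:omega_star_rotation} and the $O(t/n^2)$ phase mismatch, with the $\mathcal{V}_M$ terms cancelling because $\tilde{\mathcal{V}}_M=\mathcal{V}_M$. The only difference is that you interpolate through $\tilde W^t\Pi_{\mathcal{V}_{\mathrm{eff}}}\ket{\psi_0}$ instead of the paper's $W^t\Pi_{\tilde{\mathcal{V}}_{\mathrm{eff}}}\ket{\psi_0}$, so your extra $O(1/\sqrt{n})$ projection step replaces the paper's discarding of the components of $\Pi_{\tilde{\mathcal{V}}_{\mathrm{eff}}}\ket{\psi_0}$ outside $\mathcal{V}_{\mathrm{eff}}$ (done after $\Pi_M$ has already been dropped), and the final bound is the same.
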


The proof is given in \nameshortref{app:prob_perturbation}. We summarize the main steps here. As established in \Cref{subsec:discriminant_with_marked_vertices_block_form}, the component of $\ket{\psi_0}$ outside $\mathcal{V}_{\mathrm{eff}}$ lies in $\mathcal{A}^\perp \cap \mathcal{B}^\perp$ and has zero overlap with the marked subspace, while $\ket{\psi_0} \perp \mathcal{V}_B$ exactly. Therefore both $p_M(t) = \|\Pi_M W^t \Pi_{\mathcal{V}_{\mathrm{eff}}}\ket{\psi_0}\|^2$ and $\tilde{p}_M(t) = \|\Pi_M \tilde{W}^t \Pi_{\tilde{\mathcal{V}}_{\mathrm{eff}}}\ket{\psi_0}\|^2$ hold exactly, with no remainder terms. Defining $|\psi_0^{\parallel} \rangle = \Pi_{\mathcal{V}_{\mathrm{eff}}}\ket{\psi_0}$ and $| \tilde{\psi}_0^{\parallel} \rangle = \Pi_{\tilde{\mathcal{V}}_{\mathrm{eff}}}\ket{\psi_0}$, the total variation reduces exactly to $|\tilde{p}_M^\parallel(t) - p_M^\parallel(t)|$, which we bound through two contributions (notice that for $| \tilde{\psi}_0^{\parallel} \rangle = \Pi_{\tilde{\mathcal{V}}_{\mathrm{eff}}}\ket{\psi_0}$ the tilde denotes the projection on the perturbed subspace, while the state $\ket{\psi_0}$ is unperturbed). The geometric contribution bounds the difference between $\|\Pi_M W^t | \tilde{\psi}^{\parallel}_0 \rangle \|^2$ and $\|\Pi_M W^t|\psi_0^{\parallel} \rangle\|^2$, arising from the misalignment $| \tilde{\psi}_0^{\parallel} \rangle - |\psi_0^{\parallel} \rangle = (\Pi_{\tilde{\mathcal{V}}_{\mathrm{eff}}} - \Pi_{\mathcal{V}_{\mathrm{eff}}})\ket{\psi_0}$. Since $\tilde{\mathcal{V}}_M = \mathcal{V}_M$ exactly (\Cref{prop:subspace_rotation}), the projector difference reduces to $\Pi_{\tilde{\mathcal{V}}_{\mathrm{eff}}} - \Pi_{\mathcal{V}_{\mathrm{eff}}} = \Pi_{\tilde{\mathcal{V}}_\star} - \Pi_{\mathcal{V}_\star}$. By \Cref{prop:V_star_rotation} we have:
\begin{equation}
	\|\Pi_{\tilde{\mathcal{V}}_{\mathrm{eff}}}
	- \Pi_{\mathcal{V}_{\mathrm{eff}}}\|_2
	= \|\Pi_{\tilde{\mathcal{V}}_\star}
	- \Pi_{\mathcal{V}_\star}\|_2
	\leq O\!\left(\frac{1}{\sqrt{n}}\right),
\end{equation}
This gives a geometric contribution of $O(1/\sqrt{n})$. The spectral contribution bounds the difference between $\|\Pi_M\tilde{W}^t| \tilde{\psi}_0^{\parallel} \rangle\|^2$ and $\|\Pi_M W^t| \tilde{\psi}_0^{\parallel} \rangle\|^2$, arising from the change in both the eigenvectors $\ket{\omega^\pm_\star} \to \ket{\tilde\omega^\pm_\star}$ and the eigenphase $\theta_\star \to \tilde\theta_\star$. By \Cref{prop:omega_star_rotation} we have $\|\ket{\tilde\omega^\pm_\star} - \ket{\omega^\pm_\star}\|_2= O\!\left(1/\sqrt{n}\right),$
dominated by the change of the isometries $A \to \tilde{A}$ and $B \to \tilde{B}$, while the eigenphase shift contributes $O(t|\Delta\theta_\star|) = O(t/n^2)$, using the intermediate-regime specialization $\Delta\theta_\star=\Theta(1/n^2)$ for $m=\Theta(n)$ established in \Cref{prop:phase_shift_explicit}. This gives a spectral contribution of $O(1/\sqrt{n}) + O(t/n^2)$. Combining the two contributions, for any $t \geq 1$:
\begin{equation}
	|\tilde{p}_M(t) - p_M(t)|
	= O\!\left(\frac{1}{\sqrt{n}}\right)
	+ O\!\left(\frac{t}{n^2}\right),
\end{equation}
and in particular for $t = O(1)$, which comes from the regime $m=\Theta(n)$ \cite{GIORDANO2026170305}, the term $O(t/n^2)$ is negligible and the bound reduces to $O(1/\sqrt{n})$, dominated by the geometric misalignment of $\mathcal{V}_{\mathrm{eff}}$.

\paragraph{Physical interpretation.} This result identifies which of the two spectral effects actually limits the monitoring procedure. One might expect the loss of success probability to be governed by the change in the search frequency, i.e.\ by the eigenphase shift $\Delta\theta_\star=\Theta(1/n^2)$. The bound shows instead that the dominant mechanism is geometric: the missing link tilts the effective invariant subspace $\mathcal V_{\mathrm{eff}}$ by a principal angle of order $1/\sqrt n$, and it is this misalignment between the ideal and the perturbed subspaces, not the slower precession, that leaks amplitude away from the marked set. The two effects scale differently: the phase term contributes only $O(t/n^2)$ (and stays negligible over the $O(1)$ search time, used in the completeness testing algorithm \cite{GIORDANO2026170305}) while the subspace tilt contributes $O(1/\sqrt n)$ and dominates. Physically, detecting a single failed link is therefore a problem of subspace tracking rather than of frequency resolution: monitoring how the effective subspace rotates is a more sensitive probe of the anomaly than estimating the eigenphase, and the $O(1/\sqrt n)$ law sets the fundamental precision that any such quantum monitoring scheme must reach as the network grows.

%SIXTH APPENDIX BEGIN-----------------------------------------------------------------
\toappendix{
	\stepcounter{AppCount}
	\subsection*{\Alph{AppCount}.\space Success probability bound}
	\labelshort[\Alph{AppCount}]{app:prob_perturbation}
	In order to prove \Cref{prop:prob_perturbation_direct}, we have to provide a corollary to \Cref{lem:A_perturbation} which will be used in the proof.
	
	\begin{corollary}[Spectral norm of the walk operator perturbation]
		\label{cor:W_perturbation}
		Let $W = R_B R_A$ and $\tilde{W} = \tilde{R}_B\tilde{R}_A$ be the Szegedy walk operators associated with the complete graph $K_n$ with marked nodes and the graph $G_{u,v}$ obtained by removing the edge $\{u,v\}$ with uniform redistribution, respectively. Then:
		\begin{equation}
			\|\tilde{W} - W\|_2 = \Theta\!\left(\frac{1}{\sqrt{n}}\right).
		\end{equation}
	\end{corollary}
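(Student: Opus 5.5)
We need to prove $\|\tilde W - W\|_2=\Theta(1/\sqrt n)$, with an upper bound and a matching lower bound.

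\textbf{Upper bound.} Write $R_A = 2AA^\dagger - I$ and $\tilde R_A = 2\tilde A\tilde A^\dagger - I$, so that
\begin{equation}
\tilde R_A - R_A = 2\bigl[(\tilde A - A)\tilde A^\dagger + A(\tilde A - A)^\dagger\bigr].
\end{equation}
Since $\|A\|_2=\|\tilde A\|_2=1$, \Cref{lem:A_perturbation} gives $\|\tilde R_A - R_A\|_2 \le 4\|\tilde A - A\|_2 = O(1/\sqrt n)$. The same bound holds for $\tilde R_B - R_B$. Then use the telescoping identity
\begin{equation}
\tilde W - W = \tilde R_B(\tilde R_A - R_A) + (\tilde R_B - R_B)R_A .
\end{equation}
Reflections are unitary, so this gives $\|\tilde W - W\|_2 \le \|\tilde R_A - R_A\|_2 + \|\tilde R_B - R_B\|_2 = O(1/\sqrt n)$.

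\textbf{Lower bound.} This is the main obstacle, because cancellation between the two reflection changes is conceivable. I would evaluate the difference on an explicit unit vector and show its image has norm of order $1/\sqrt n$.

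A natural probe is $\ket{u}\ket{v}$, the arc that was deleted. Since $\tilde P_{v,u}=0$, the perturbed state $\ket{\tilde\psi_u}$ has no component on $\ket u\ket v$. Also, $\ket{u}\ket{v}$ is orthogonal to every $\ket{\tilde\psi_x}$ with $x\neq u$, because those states have first register $\ket x$. Hence $\tilde R_A\ket u\ket v = -\ket u\ket v$.

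In the unperturbed walk,
\begin{equation}
R_A\ket u\ket v = -\ket u\ket v + \frac{2}{\sqrt{n-1}}\ket{\psi_u}.
\end{equation}
So $(\tilde R_A - R_A)\ket u\ket v = -\frac{2}{\sqrt{n-1}}\ket{\psi_u}$, which has norm exactly $2/\sqrt{n-1}$.

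For $\tilde W - W$ itself, compute
\begin{equation}
(\tilde W - W)\ket u\ket v = \tilde R_B(\tilde R_A - R_A)\ket u\ket v + (\tilde R_B - R_B)R_A\ket u\ket v .
\end{equation}
The first term has norm $2/\sqrt{n-1}$. I would compute the second term explicitly. Its $\ket{\psi_u}$-part involves $(\tilde R_B - R_B)\ket{\psi_u}$, and since $\langle\phi_v|\psi_u\rangle = 1/\sqrt{n-1}$, this is of order $1/n$ or smaller. Its $\ket u\ket v$-part is $(\tilde R_B - R_B)\ket u\ket v$, which by the same argument as for $A$ equals $\frac{2}{\sqrt{n-1}}\ket{\phi_v}$ with a sign. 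That piece is of the same order, so it could in principle cancel the first term.

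One then has to check that the two order-$1/\sqrt n$ vectors, $\tilde R_B\ket{\psi_u}$ and $\ket{\phi_v}$, are not anti-aligned. Their overlap is $O(1/\sqrt n)$, so the sum has norm $\approx\sqrt{2}\cdot 2/\sqrt{n-1}$ rather than vanishing.

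\textbf{Fallback.} If this bookkeeping becomes messy, I would use a probe that kills one of the two contributions, for example a vector $\ket u\ket v - \ket v\ket u$ suitably projected. Alternatively, I would lower-bound $\|\tilde W - W\|_2 \ge |\langle\psi_u|(\tilde W - W)|u,v\rangle|$ and compute this scalar directly, tracking only the leading $1/\sqrt{n}$ terms and bounding the rest by $O(1/n)$.
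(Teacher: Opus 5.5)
Your upper bound is the paper's argument essentially verbatim: the telescoping $\tilde W - W = \tilde R_B(\tilde R_A - R_A) + (\tilde R_B - R_B)R_A$, the bound $\|\tilde R_A - R_A\|_2 \le 4\|\tilde A - A\|_2$, and \Cref{lem:A_perturbation}. The real difference is the lower bound. The paper's proof chains only upper bounds and then writes $\Theta(1/\sqrt{n})$, so as written it establishes only $O(1/\sqrt{n})$; that is all \Cref{prop:V_star_rotation} later uses. Your probe $\ket{u}\ket{v}$ supplies the missing $\Omega(1/\sqrt{n})$, and the bookkeeping you were worried about disappears. Because $\tilde P_{v,u}=\tilde P_{u,v}=0$, both $\tilde R_A$ and $\tilde R_B$ send $\ket{u}\ket{v}$ to $-\ket{u}\ket{v}$, so $\tilde W\ket{u}\ket{v}=\ket{u}\ket{v}$. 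Expanding $W\ket{u}\ket{v}$ then gives the exact identity
\begin{equation*}
(\tilde W - W)\ket{u}\ket{v} = \frac{2}{\sqrt{n-1}}\bigl(\ket{\phi_v} - R_B\ket{\psi_u}\bigr),
\end{equation*}
with no $O(1/n)$ remainder to control. Two facts finish the computation: $R_B\ket{\phi_v}=\ket{\phi_v}$, and $\langle\phi_v|\psi_u\rangle = C_{u,v} = \frac{1}{n-1}$ (not $1/\sqrt{n-1}$ as you wrote). They give $\|(\tilde W - W)\ket{u}\ket{v}\|_2^2 = \frac{8(n-2)}{(n-1)^2}$, hence $\frac{2\sqrt{2(n-2)}}{n-1} \le \|\tilde W - W\|_2 \le 8\|\tilde A - A\|_2$, a two-sided bound with explicit constants. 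In your own decomposition, you asserted the overlap without derivation; it is exactly $\langle\phi_v|\tilde R_B|\psi_u\rangle = -\frac{1}{n-1}$, because $\langle\tilde\phi_v|\psi_u\rangle = \sqrt{\tilde P_{u,v}P_{v,u}} = 0$. Since this is precisely the non-cancellation step, write it out. Also note that the two terms enter with opposite signs, so what must be excluded is $\tilde R_B\ket{\psi_u}\approx\ket{\phi_v}$ (alignment), not anti-alignment.
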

	
	\begin{proof}[Proof of \Cref{cor:W_perturbation}]
		\medskip\noindent\textbf{Spectral norm of the walk operator perturbation.}
		From $W = R_B R_A$ and $\tilde{W} = \tilde{R}_B\tilde{R}_A$, adding and subtracting $\tilde{R}_B R_A$ and applying the triangle inequality:
		\begin{equation}
			\|\tilde{W} - W\|_2
			\leq \|\tilde{R}_B(\tilde{R}_A - R_A)\|_2
			+ \|(\tilde{R}_B - R_B)R_A\|_2
			\leq \|\tilde{R}_A - R_A\|_2 + \|\tilde{R}_B - R_B\|_2,
		\end{equation}
		where we used $\|\tilde{R}_B\|_2 = \|R_A\|_2 = 1$ since reflections are unitary. Since $R_A = 2AA^T - I$, we have:
		\begin{equation}
			\tilde{R}_A - R_A = 2(\tilde{A}\tilde{A}^T - AA^T)
			= 2\bigl[(\tilde{A}-A)\tilde{A}^T + A(\tilde{A}-A)^T\bigr],
		\end{equation}
		so using $\|\tilde{A}\|_2 = \|A\|_2 = 1$ (both are isometries):
		\begin{equation}
			\|\tilde{R}_A - R_A\|_2
			\leq 2\bigl(\|\tilde{A}-A\|_2\|\tilde{A}\|_2
			+ \|A\|_2\|\tilde{A}-A\|_2\bigr)
			= 4\|\tilde{A} - A\|_2
			= \Theta\!\left(\frac{1}{\sqrt{n}}\right),
		\end{equation}
		by \Cref{lem:A_perturbation}. The identical bound holds for $\|\tilde{R}_B - R_B\|_2$. Combining:
		\begin{equation}
			\|\tilde{W} - W\|_2 = \Theta\!\left(\frac{1}{\sqrt{n}}\right).\qedhere
		\end{equation}
	\end{proof}
	
	\begin{proof}[Proof of \Cref{prop:prob_perturbation_direct}]
		\medskip\noindent\textbf{Perturbation bound on the success probability.}
		
		\medskip
		\noindent\textbf{\textit{Decomposition of $\ket{\psi_0}$.}}
		Since $\ket{\psi_0} \perp \mathcal{V}_B$ exactly (\Cref{subsec:discriminant_with_marked_vertices_block_form}), the state $\ket{\psi_0}$ decomposes as:
		\begin{equation}
			\ket{\psi_0}
			= \underbrace{\Pi_{\mathcal{V}_{\mathrm{eff}}}\ket{\psi_0}}_{
				|\psi_0^\parallel\rangle }
			+ \underbrace{\Pi_{\mathcal{A}^\perp \cap \mathcal{B}^\perp}
				\ket{\psi_0}}_{|\psi_0^\perp\rangle},
		\end{equation}
		where $|\psi_0^\parallel \rangle \in \mathcal{V}_{\mathrm{eff}}$ and $|\psi_0^\perp\rangle \in \mathcal{A}^\perp \cap \mathcal{B}^\perp$, with $\||\psi_0^{\parallel} \rangle\|_2^2 + \|\ket{\psi_0^\perp}\|_2^2 = 1$. Similarly, we define the projection of $\ket{\psi_0}$ onto the perturbed effective subspace:
		\begin{equation}
			| \tilde{\psi}_0^{\parallel} \rangle
			:= \Pi_{\tilde{\mathcal{V}}_{\mathrm{eff}}}\ket{\psi_0}.
		\end{equation}
		
		\medskip
		\noindent\textbf{\textit{Reduction to the effective subspace.}}
		We define:
		\begin{equation}
			p_M^\parallel(t) := \|\Pi_M W^t|\psi_0^{\parallel} \rangle\|^2,
			\qquad
			\tilde{p}_M^\parallel(t) :=
			\|\Pi_M\tilde{W}^t| \tilde{\psi}_0^{\parallel} \rangle\|^2.
		\end{equation}
		We claim that $p_M(t) = p_M^\parallel(t)$ and $\tilde{p}_M(t) = \tilde{p}_M^\parallel(t)$ exactly. For the unperturbed case: since $\ket{\psi_0^\perp} \in \mathcal{A}^\perp \cap \mathcal{B}^\perp$, which is a $(+1)$-eigenspace of $W$ satisfying $\Pi_M(\mathcal{A}^\perp \cap \mathcal{B}^\perp) = 0$ (see \Cref{subsec:discriminant_with_marked_vertices_block_form}), we have $\Pi_M W^t\ket{\psi_0^\perp} = \Pi_M\ket{\psi_0^\perp} = 0$ exactly,
		so:
		\begin{equation}
			p_M(t) = \|\Pi_M W^t\ket{\psi_0}\|^2
			= \|\Pi_M W^t|\psi_0^{\parallel} \rangle\|^2
			= p_M^\parallel(t).
		\end{equation}
		
		For the perturbed case: decomposing $\ket{\psi_0}$ in the eigenspaces of $\tilde{W}$, the component in $\tilde{\mathcal{V}}_B$ is annihilated exactly by $\Pi_M$ via \Cref{prop:bulk_orthogonality}, and the component in $\tilde{\mathcal{A}}^\perp \cap \tilde{\mathcal{B}}^\perp$ satisfies $\Pi_M(\tilde{\mathcal{A}}^\perp \cap \tilde{\mathcal{B}}^\perp) = 0$
		since the marked block is unchanged under the perturbation. Therefore:
		\begin{equation}
			\tilde{p}_M(t) = \|\Pi_M\tilde{W}^t\ket{\psi_0}\|^2
			= \|\Pi_M\tilde{W}^t| \tilde{\psi}_0^{\parallel} \rangle\|^2
			= \tilde{p}_M^\parallel(t).
		\end{equation}
		
		\medskip
		\noindent\textbf{\textit{Analyzing the probability variation.}}
		Since $p_M(t) = p_M^\parallel(t)$ and $\tilde{p}_M(t) =	\tilde{p}_M^\parallel(t)$ exactly, the total variation reduces to:
		\begin{equation}\label{eq:triangle_split}
			|\tilde{p}_M(t) - p_M(t)|
			= |\tilde{p}_M^\parallel(t) - p_M^\parallel(t)|
			= \bigl|\|\Pi_M\tilde{W}^t| \tilde{\psi}_0^{\parallel} \rangle\|^2
			- \|\Pi_M W^t|\psi_0^{\parallel} \rangle\|^2\bigr|.
		\end{equation}
		Inserting and subtracting $\|\Pi_M W^t| \tilde{\psi}_0^{\parallel} \rangle\|^2$ and applying the triangle inequality we obtain:
		\begin{equation}
			|\tilde{p}_M(t) - p_M(t)|
			\leq
			\underbrace{\bigl|\|\Pi_M\tilde{W}^t| \tilde{\psi}_0^{\parallel} \rangle\|^2
				- \|\Pi_M W^t| \tilde{\psi}_0^{\parallel} \rangle\|^2\bigr|}_{\text{spectral}}
			+
			\underbrace{\bigl|\|\Pi_M W^t| \tilde{\psi}_0^{\parallel} \rangle\|^2
				- \|\Pi_M W^t|\psi_0^{\parallel} \rangle\|^2\bigr|}_{\text{geometric}},
		\end{equation}
		calling the first and second term respectively spectral contribution and geometric contribution to the probability variation. The first can be seen as the effect of the changes in the spectra of $W$ due to the perturbation, and the second one as the changes caused by the rotation of the eigenspace $\mathcal{V}_{\mathrm{eff}}$ due to the perturbation.
		
		\medskip
		\noindent\textbf{\textit{Geometric contribution.}}
		Using $|a^2-b^2| \leq 2|a-b|$, the reverse triangle inequality, and $\|\Pi_M\|_2 = \|W^t\|_2 = 1$:
		\begin{equation}
			\bigl|\|\Pi_M W^t|\tilde{\psi}_0^{\parallel}\rangle\|^2
			- \|\Pi_M W^t|\psi_0^{\parallel}\rangle\|^2\bigr|
			\leq 2\|\Pi_M W^t(|\tilde{\psi}_0^{\parallel}\rangle
			- |\psi_0^{\parallel}\rangle)\|_2
			\leq 2\||\tilde{\psi}_0^{\parallel}\rangle
			- |\psi_0^{\parallel}\rangle\|_2.
		\end{equation}
		Since $|\tilde{\psi}_0^{\parallel}\rangle - |\psi_0^{\parallel}\rangle = (\Pi_{\tilde{\mathcal{V}}_{\mathrm{eff}}} - \Pi_{\mathcal{V}_{\mathrm{eff}}})\ket{\psi_0}$:
		\begin{equation}
			\||\tilde{\psi}_0^{\parallel}\rangle
			- |\psi_0^{\parallel}\rangle\|_2
			\leq \|\Pi_{\tilde{\mathcal{V}}_{\mathrm{eff}}}
			- \Pi_{\mathcal{V}_{\mathrm{eff}}}\|_2.
		\end{equation}
		Since $\tilde{\mathcal{V}}_M = \mathcal{V}_M$ exactly (\Cref{prop:subspace_rotation}), we can decompose the projectors as $\Pi_{\mathcal{V}_{\mathrm{eff}}} = \Pi_{\mathcal{V}_M} + \Pi_{\mathcal{V}_\star}$ and $\Pi_{\tilde{\mathcal{V}}_{\mathrm{eff}}} = \Pi_{\mathcal{V}_M} + \Pi_{\tilde{\mathcal{V}}_\star}$, where the second decomposition uses $\tilde{\mathcal{V}}_\star \perp \tilde{\mathcal{V}}_M = \mathcal{V}_M$ since they are eigenspaces of $\tilde{W}$ for distinct eigenvalues. Therefore:
		\begin{equation}
			\|\Pi_{\tilde{\mathcal{V}}_{\mathrm{eff}}}
			- \Pi_{\mathcal{V}_{\mathrm{eff}}}\|_2
			= \|\Pi_{\tilde{\mathcal{V}}_\star}
			- \Pi_{\mathcal{V}_\star}\|_2
			= O\!\left(\frac{1}{\sqrt{n}}\right),
		\end{equation}
		where the last bound follows from \Cref{prop:V_star_rotation}. Therefore the geometric contribution is $O(1/\sqrt{n})$.
		
		\medskip
		\noindent\textbf{\textit{Spectral contribution.}}
		Using $|a^2-b^2| \leq 2|a-b|$, the reverse triangle inequality,	and $\|\Pi_M\|_2 = 1$:
		\begin{equation}\label{eq:spec_contrib_first_diff1}
			\bigl|\|\Pi_M\tilde{W}^t| \tilde{\psi}_0^{\parallel} \rangle\|^2
			- \|\Pi_M W^t| \tilde{\psi}_0^{\parallel} \rangle\|^2\bigr|
			\leq 2\|(\tilde{W}^t - W^t)| \tilde{\psi}_0^{\parallel} \rangle\|_2.
		\end{equation}
		We decompose $| \tilde{\psi}_0^{\parallel} \rangle \in \tilde{\mathcal{V}}_{\mathrm{eff}} = \tilde{\mathcal{V}}_\star \oplus \tilde{\mathcal{V}}_M$ in the eigenbasis of $\tilde{W}$:
		\begin{equation}
			| \tilde{\psi}_0^{\parallel} \rangle
			= \tilde{a}^+\ket{\tilde\omega^+_\star}
			+ \tilde{a}^-\ket{\tilde\omega^-_\star}
			+ \sum_{j=1}^m \tilde{c}_j\ket{\tilde\omega_{M,j}},
		\end{equation}
		where $|\tilde{a}^+|^2 + |\tilde{a}^-|^2 + \sum_j|\tilde{c}_j|^2 = \|| \tilde{\psi}_0^{\parallel} \rangle\|^2 \leq 1$. Since $\tilde{W}^t$ acts diagonally in this basis:
		\begin{equation}
			\tilde{W}^t| \tilde{\psi}_0^{\parallel} \rangle
			= \tilde{a}^+e^{2i\tilde\theta_\star t}\ket{\tilde\omega^+_\star}
			+ \tilde{a}^-e^{-2i\tilde\theta_\star t}\ket{\tilde\omega^-_\star}
			+ \sum_{j=1}^m \tilde{c}_j\ket{\tilde\omega_{M,j}}.
		\end{equation}
		Decomposing $| \tilde{\psi}_0^{\parallel} \rangle$ in the eigenbasis of $W$, the components in $\mathcal{V}_B$ and $\mathcal{A}^\perp \cap \mathcal{B}^\perp$ are annihilated by $\Pi_M$, so for the purpose of bounding $\|\Pi_M W^t| \tilde{\psi}_0^{\parallel} \rangle\|_2$ we write:
		\begin{equation}
			W^t| \tilde{\psi}_0^{\parallel} \rangle
			= a^+e^{2i\theta_\star t}\ket{\omega^+_\star}
			+ a^-e^{-2i\theta_\star t}\ket{\omega^-_\star}
			+ \sum_{j=1}^m c_j\ket{\omega_{M,j}}
			+ (\text{terms annihilated by }\Pi_M).
		\end{equation}
		Since the perturbation $E = 0_{(m)} \oplus E_U$ leaves the marked block unchanged, $\ket{\tilde\omega_{M,j}} = \ket{\omega_{M,j}}$ exactly (\Cref{prop:bulk_orthogonality}), with the same coefficients $c_j = \tilde{c}_j$, so the $\mathcal{V}_M$ terms cancel in $(\tilde{W}^t - W^t)| \tilde{\psi}_0^{\parallel} \rangle$. Expanding the difference and applying the triangle inequality for each $\pm$ term:
		\begin{equation}\label{eq:spec_contrib_first_diff3}
			\|\tilde{a}^\pm e^{\pm 2i\tilde\theta_\star t}
			\ket{\tilde\omega^\pm_\star}
			- a^\pm e^{\pm 2i\theta_\star t}\ket{\omega^\pm_\star}\|_2
			\leq |\tilde{a}^\pm|\,\|e^{\pm 2i\tilde\theta_\star t}
			\ket{\tilde\omega^\pm_\star}
			- e^{\pm 2i\theta_\star t}\ket{\omega^\pm_\star}\|_2
			+ |\tilde{a}^\pm - a^\pm|.
		\end{equation}
		For the first part, adding and subtracting $e^{\pm 2i\tilde\theta_\star t}\ket{\omega^\pm_\star}$:
		\begin{equation}
			\|e^{\pm 2i\tilde\theta_\star t}\ket{\tilde\omega^\pm_\star}
			- e^{\pm 2i\theta_\star t}\ket{\omega^\pm_\star}\|_2
			\leq \|\ket{\tilde\omega^\pm_\star} - \ket{\omega^\pm_\star}\|_2
			+ |e^{\pm 2i\tilde\theta_\star t} - e^{\pm 2i\theta_\star t}|.
		\end{equation}
		Since the two vectors $\ket{\tilde{\omega}_\star^\pm}$ and  $\ket{\omega_\star^\pm}$ belongs respectively to $\tilde{\mathcal{V}}_\star$ and $\mathcal{V}_\star$, by \Cref{prop:omega_star_rotation}:
		\begin{equation}
			\|\ket{\tilde\omega^\pm_\star} - \ket{\omega^\pm_\star}\|_2
			= O\!\left(\frac{1}{\sqrt{n}}\right),
		\end{equation}
		while for the eigenphase difference, $|e^{\pm 2i\tilde\theta_\star t} - e^{\pm 2i\theta_\star t}| \leq 2t|\Delta\theta_\star| = O(t/n^2)$ from \Cref{prop:phase_shift_explicit}.
		
		For the second part of the right side of \Cref{eq:spec_contrib_first_diff3}, since $\tilde{a}^\pm$ and $a^\pm$ are the coordinates of $| \tilde{\psi}_0^{\parallel} \rangle$ in orthonormal bases
		whose subspaces differ by $\|\Pi_{\tilde{\mathcal{V}}_\star} - \Pi_{\mathcal{V}_\star}\|_2 = O(1/\sqrt{n})$:
		\begin{equation}
			|\tilde{a}^\pm - a^\pm|
			\leq \|\ket{\tilde\omega^\pm_\star} - \ket{\omega^\pm_\star}\|_2
			= O\!\left(\frac{1}{\sqrt{n}}\right).
		\end{equation}
		Combining both parts for each $\pm$ term and summing:
		\begin{equation}
			\|(\tilde{W}^t - W^t)| \tilde{\psi}_0^{\parallel} \rangle\|_2
			= O\!\left(\frac{1}{\sqrt{n}}\right) + O\!\left(\frac{t}{n^2}\right).
		\end{equation}
		Therefore the spectral contribution is $O(1/\sqrt{n}) + O(t/n^2)$.
		
		\medskip
		\noindent\textbf{\textit{Merging the contributions.}}
		Collecting all contributions:
		\begin{equation}
			|\tilde{p}_M(t) - p_M(t)|
			= O\!\left(\frac{1}{\sqrt{n}}\right)
			+ O\!\left(\frac{t}{n^2}\right),
		\end{equation}
		and for $t = t^\star = O(1)$:
		\begin{equation}
			|\tilde{p}_M(t^\star) - p_M(t^\star)|
			= O\!\left(\frac{1}{\sqrt{n}}\right).\qedhere
		\end{equation}
	\end{proof}
}
%SIXTH APPENDIX END---------------------------------------------------------------------------------------------------

%SEVENTH APPENDIX BEGIN-------------------------------------------------------------------------------------
\toappendix{%
	\stepcounter{AppCount}
	\subsection*{\Alph{AppCount}.\space Reformulating Graph Completeness Testing Algorithm with Analytical Foundations}
	\labelshort[\Alph{AppCount}]{app:algorithm_reformulation}
	
	In this appendix we revisit the graph completeness testing algorithm introduced in~\cite{GIORDANO2026170305} and reformulate its two stages on the rigorous analytical basis established in the present work. Two aspects of the original algorithm rested on numerical evidence:
	\begin{enumerate}
		\item[\textup{(i)}] The strict decrease of the gap eigenvalue $\lambda_\star$ under edge removal, which was stated as a conjecture extrapolated from simulations and by analogy with the spectral radius of the adjacency matrix.
		\item[\textup{(ii)}] The scaling of the number of qubits in the first register of the QPE procedure, which was derived from an empirical fit of simulated phase differences, with no analytical justification for the fitted exponent.
	\end{enumerate}
	Both these assumptions are now replaced by the rigorous results proved in the main text.
	
	\paragraph{Overview of the algorithm.} The quantum completeness testing algorithm decides whether an undirected input graph $G$ with $n$ nodes is complete. It uses the Szegedy quantum walk with marked nodes and the quantum phase estimation subroutine, and operates in two stages.
	
	\medskip
	\noindent\textit{Stage 1: Ruling-out.}
	The optimal number of marked nodes $m^* = \lfloor(n-1)/a\rfloor$, with $a = 1.44512$, is computed from the linear optimality condition $n = am^*+1$. The walk operator $W_{P'}$ is applied to the uniform initial state $|\psi_0\rangle$ for $t^* = 3$ steps. If the measured node lies outside the marked set $M$, the graph is declared incomplete and the algorithm halts. Otherwise, Stage 2 is entered. The correctness of this ruling-out step rests on two facts: for a complete graph the success probability $P^*_M(t^*,n) \approx 1$, thus if the graph is complete we should find the walker on a node in the marked set $M$, hence not finding a walker on the marked node means that the graph is surely not complete. Notice that, on the contrary, finding a walker on a marked node does not necessarily imply completeness.
	
	\medskip
	\noindent\textit{Stage 2: Verification via QPE.}
	If the first stage fails to give an answer, i.e. we find a marked node, we proceed to this verification second stage. We change the number of marked nodes to $1$ ($m = 1$), the expected eigenphase of the complete graph is,
	\begin{equation}
		\theta_2 = \arccos\!\left(\frac{n-2}{n-1}\right),
	\end{equation}
	and is computed analytically. Also the eigenstate $|\theta^+_2\rangle$ of the complete graph is known by \cite{Portugal_book}. QPE is applied to the walk operator $W_{P''}$ with the eigenstate $|\theta^+_2\rangle$ in the second register. The estimated phase $\theta_j$ is compared with $\theta_2$: equality confirms completeness, any discrepancy confirms incompleteness.
	
	\paragraph{Analytical basis for Stage 1: formal proof of the eigenvalue decrease.}
	
	In~\cite{GIORDANO2026170305} the inequality $\tilde\lambda_\star < \lambda_\star$ for an incomplete graph was supported by simulations and assumed by analogy with the adjacency matrix spectral radius, without a proof at the level of the transition matrix. \Cref{prop:negative_shift} of the present work provides this missing proof in case the incompleteness concerns the unmarked sector: when a single edge $\{u,v\} \subseteq V \setminus M$ is removed from $K_n$, the perturbed gap eigenvalue satisfies $\tilde\lambda_\star<\lambda_\star$ strictly, for every $n\ge3$ and every $0\le m\le n-2$. \Cref{prop:exact_shift,prop:asymptotic_shift} further quantify the shift exactly and asymptotically:
	\begin{equation}\label{eq:app_shift_stage1}
		\Delta\lambda_\star
		= -\,\frac{2m}{(n-1)(n-2)(n-m)} - \Theta\!\left(\frac{1}{n^3}\right)
		= -\,\Theta\!\left(\frac{1}{n^2}\right),
	\end{equation}
	in the intermediate regime $m = \Theta(n)$ used in Stage 1 (\Cref{prop:asymptotic_shift}). Since $\cos(\tilde\theta_\star) = \tilde\lambda_\star < \lambda_\star = \cos(\theta_\star)$ and $\arccos$ is strictly decreasing, the phase
	$\tilde\theta_\star > \theta_\star$ and the walk dynamics on the incomplete graph differ from the complete case, lowering the success probability. This provides the rigorous mechanism underlying the ruling-out step.
	
	\paragraph{Analytical basis for Stage 2: qubit count from asymptotic phase scaling.}
	
	In~\cite{GIORDANO2026170305}, the minimum resolvable phase difference $\theta_j - \theta_2$ was studied numerically for graphs with $n = 4$ to $n = 300$ nodes, each with one edge removed and $m=1$ nodes marked. To set a lower bound for this difference, a least-square fit of the simulated data was done and its result adjusted getting the following function:	
	\begin{equation}\label{eq:empirical_bound}
		\mathcal{F}(n) \sim \frac{13}{n^{3.4}}.
	\end{equation}
	From $\mathcal{F}(n)$ the QPE qubit count was set as $p \sim |3.4\log_2(13n)| + 1 = O(\log_2 n)$. We stress that $\mathcal F(n)$ was deliberately chosen as a conservative lower bound on the simulated phase differences and on the fit of the data, as it can be seen from \cite{GIORDANO2026170305}, Fig. C7: the direct result of the nonlinear fit gives a decay closer to $n^{-2.7}$, and $\mathcal F(n)$ was tuned to sit safely below it, to yield a lower bound for the phase difference and an upper bound on the required qubit count and the QPE complexity. No analytical derivation of any of these exponents was available. The present study provides it.
	
	Stage 2 uses $m = 1$ marked node and the worst-case input is a graph with exactly one edge removed, precisely the case analyzed in the present work. Setting $m=1$ in \Cref{prop:phase_shift_explicit}, we are exactly in the few-marked-nodes regime with $m=O(1)$, which gives
	\begin{equation}
		\Delta\theta_\star \;=\; \Theta\!\left(n^{-5/2}\right).
	\end{equation}
	
	This analytical $\Theta(n^{-5/2})$ decay is the exact asymptotic scaling of the phase difference that Stage 2 must resolve. It is consistent with, and indeed lies closer to, the empirical observation in~\cite{GIORDANO2026170305}: the nonlinear fit decay rate ($\sim n^{-2.7}$) is markedly nearer to the analytical $n^{-5/2}=n^{-2.5}$ derived here than to the conservative empirical bound $\mathcal F(n)\sim n^{-3.4}$ used to set the qubit count in that work. The bound $\mathcal F(n)$ remains valid since $\mathcal F(n)\sim n^{-3.4}$ decays faster than $n^{-5/2}$ for all $n$ larger than a small constant. $\mathcal F(n)$ is still a valid conservative lower bound on the actual phase differences for all finite $n$ in the simulated range, as required by the construction of~\cite{GIORDANO2026170305}.
	
	Since $\Delta\theta_\star = \Theta(n^{-5/2})$ is the minimum phase difference to be resolved, the number of qubits in the QPE first register must satisfy
	\begin{equation}\label{eq:app_qubit_count}
		p \;\sim\; \left|\log_2\!\left(\Delta\theta_\star\right)\right| + 1
		\;\sim\; \frac{5}{2}\,\log_2 n + O(1),
	\end{equation}
	replacing the empirical estimate of $\sim 3.4\log_2(13n) + 1$. Both yield the same asymptotic complexity class, since the inverse quantum Fourier transform requires $\mathcal{O}(p^2)$ steps, giving QPE complexity $=\mathcal{O}\!\left(\log^2 n\right)$.
	
	The total time complexity of the completeness testing algorithm therefore remains $\mathcal{O}(\log^2 n)$, now established on a purely analytical basis: the constant running time $t^* = 3$ of Stage 1 is negligible for $n \to \infty$, and the QPE qubit count in Stage 2 scales as $\frac{5}{2}\log_2 n$, derived directly from the $\Theta(n^{-5/2})$ phase shift quantified in \Cref{prop:phase_shift_explicit}.
}

%SEVENTH APPENDIX END--------------------------------------------------------------------------------------------------------------------------------

	\section{Conclusions}
	Grounding on the completeness testing algorithm formulation \cite{GIORDANO2026170305} we have presented a rigorous perturbation analysis of the Szegedy quantum walk search algorithm on the complete graph $K_n$ with marked nodes, specifically addressing the structural defect caused by the removal of a single undirected edge. Our analysis demonstrates that this minimal topological change induces a perturbation to the transition matrix with a spectral norm scaling as $O(1/n)$. By examining the block structure of the discriminant matrix, we proved that this perturbation leads to a systematic negative shift of the spectral gap eigenvalue $\lambda_\star$, for every $n$ and every number of marked nodes $m$; in the regime relevant for the search algorithm, where the number of marked nodes $m$ is proportional to $n$, this shift has magnitude $\Theta(1/n^2)$. Consequently, the corresponding phase shift $\Delta\theta_\star$, which governs the frequency of the quantum search, also scales as $\Theta(1/n^2)$ in this regime.
	
	Furthermore, we evaluated the impact on the success probability $p_M(t)$, in the regime $m=\Theta(n)$. We established that the probability loss is upper-bounded by $O(1/\sqrt{n})$. Crucially, our derivation reveals that this loss is primarily driven by the geometric misalignment of the effective subspace $\mathcal{V}_{\mathrm{eff}}$ and its perturbed counterpart $\tilde{\mathcal{V}}_{\mathrm{eff}}$, controlled by
	$\|\Pi_{\tilde{\mathcal{V}}_{\mathrm{eff}}} - \Pi_{\mathcal{V}_{\mathrm{eff}}}\|_2 = O(1/\sqrt{n})$, rather than by the shift in the search frequency or leakage into the bulk spectrum. These results, together with those of our seminal work \cite{GIORDANO2026170305}, compose an analytical base in the framework of quantum walk based graph properties evaluation. First, graph-completeness testing provides the baseline quantum decision procedure for determining whether a graph or sub-graph is complete. Second, the single-edge-removal perturbative analysis refines this baseline by quantifying the response of the procedure to the minimal nontrivial anomaly. Together, these two components define a coherent approach for quantum-assisted topology-integrity monitoring, and are unified in \nameshortref{app:algorithm_reformulation}, where the completeness testing algorithm of~\cite{GIORDANO2026170305} is reformulated replacing its numerical foundations with the analytical results of the present work.
	
	In the application scenario outlined in \Cref{sec:motivation_and_application}, these results have a specific interpretation. The $O(1/\sqrt{n})$ bound on the probability loss quantifies the resilience of the quantum monitoring procedure: in a large trusted network with $n$ entities, the impact of a single link failure on the walk dynamics is provably small but detectable, since the spectral shift $\Delta\theta_\star = \Theta(1/n^2)$ is nonzero and analytically characterized in the regime $m=\Theta(n)$. The dominant mechanism is geometric, consisting in the misalignment of the effective subspace $\mathcal{V}_{\mathrm{eff}}$, while the change in search frequency is less relevant, which suggests that monitoring the subspaces can be more effective than eigenphase estimation.
	
	The framework therefore provides a foundation for quantum-assisted anomaly detection in environments such as control-plane overlays or data-center fabrics, where the expected topology is complete and even a single missing link is operationally significant. In such settings, our analysis establishes both the theoretical bounds and the limitations of the approach: the method is sensitive to the minimal	structural anomaly, but the effect scales as $O(1/\sqrt{n})$ for the probability of finding a marked node, which means that as the network grows, progressively higher precision is required for reliable detection.
	
	The next step would be to extend the analysis from one missing edge to multiple simultaneous edge failures, edges with non-uniform failure probabilities, or connectivity restrictions imposed by external policies. Another natural direction is to complement the analytical perturbation theory with numerical simulations on finite-size network instances motivated by cybersecurity use cases. Such developments would help clarify the practical detection thresholds of the method and the regimes in which the spectral response remains sufficiently strong for robust anomaly identification.

	\section*{Acknowledgements}
	The authors acknowledge support from Spanish MICIN grant PID2021-122547NB-I00 and the ``MADQuantum-CM'' project funded by Comunidad de Madrid (Programa de Acciones Complementarias) and by the Ministry for Digital Transformation and of Civil Service of the Spanish Government through the QUANTUM ENIA project call -- Quantum Spain project, and by the European Union through the Recovery, Transformation and Resilience Plan Next Generation EU within the framework of the Digital Spain 2026 Agenda, the CAM Programa TEC-2024/COM-84 QUITEMAD-CM. This work has been financially supported by the project MADQuantum-CM, funded by Comunidad de Madrid (Programa de Acciones Complementarias) and by the Recovery, Transformation and Resilience Plan -- Funded by the European Union -- (NextGeneration EU, PRTR-C17.I1). SG acknowledges support from the project MADQuantum-CM, funded by Comunidad de Madrid (Programa de Acciones Complementarias) and by the Recovery, Transformation and Resilience Plan -- Funded by the European Union -- (NextGeneration EU, PRTR-C17.I1). M.A. M.-D. has been partially supported by the U.S. Army Research Office through Grant No. W911NF-14-1-0103. We also acknowledge the ELLIS Unit Madrid.
	\end{multicols*}
	
	\appendix
	\section*{Appendices}
		\printappendix
		\bibliography{bibliography.bib}
		\newpage
		\section*{Variables}
		\begin{table}[h!]
			\centering
			\renewcommand{\arraystretch}{1.3}
			\begin{tabular}{|c|c|c|}
				\hline
				\textbf{Symbol} & \textbf{Description} & \textbf{Space / Range} \\
				\hline\hline
				\multicolumn{3}{|c|}{\textbf{Graph and Transition Matrices}} \\
				\hline
				$V$ & Set of nodes & -- \\
				$\mathcal{E}^0$ & Set of edges of the baseline graph & -- \\
				$G^0=(V,\mathcal{E}^0)$ & Baseline graph with $n$ nodes & -- \\
				$n$ & Number of nodes & $\in \mathbb{N}$ \\
				$M$ & Set of marked nodes & $M \subseteq V$ \\
				$m = |M|$ & Number of marked nodes & $\in \{1,\ldots,n-2\}$ \\
				$U = V \setminus M$ & Set of unmarked nodes & $U \subseteq V$ \\
				$k = n - m$ & Number of unmarked nodes & $\in \mathbb{N}$ \\
				$u, v$ & Endpoints of the removed edge & $\in U$ \\
				$P^0$ & Column-stochastic transition matrix, no marking & $\mathbb{R}^{n \times n}$ \\
				$P$ & Transition matrix with marked (absorbing) nodes & $\mathbb{R}^{n \times n}$ \\
				\hline
				\multicolumn{3}{|c|}{\textbf{Matrix Columns and Basis Vectors}} \\
				\hline
				$\mathbf{e}_j$ & Standard basis vector ($1$ at index $j$, $0$ elsewhere) & $\in \mathbb{R}^n$ \\
				$P_{\cdot,j}$ & $j$-th column of the transition matrix & $\in \mathbb{R}^n$ \\
				$\mathbf{c}$ & Original column of node $u$, i.e.\ $P_{\cdot,u}$ & $\in \mathbb{R}^n$ \\
				$\tilde{\mathbf{c}}$ & Perturbed column of node $u$ after edge removal & $\in \mathbb{R}^n$ \\
				$\mathbf{1}_U$ & All-ones vector supported on $U$ & $\in \mathbb{R}^n$ \\
				\hline
				\multicolumn{3}{|c|}{\textbf{Perturbation Analysis}} \\
				\hline
				$p = \frac{1}{n-1}$ & Transition probability weight of the removed edge & $\in (0,1)$ \\
				$\mathbf{r}^{(u)}, \mathbf{r}^{(v)}$ & Redistribution vectors for nodes $u$ and $v$, uniform over $V\setminus\{u,v\}$ & $\in \mathbb{R}^n$ \\
				$q$ & Redistributed weight received by each unmarked target, $q=p\bigl(\sqrt{1+\tfrac1{n-2}}-1\bigr)$ & $\in \mathbb{R}_{>0}$ \\
				$d$ & $d:=\sqrt{n-1}-\sqrt{n-2}$ & $\in \mathbb{R}_{>0}$ \\
				$\mathbf{a} = \mathbf{r}^{(u)}-\mathbf{e}_v$ & Auxiliary perturbation vector for column $u$ & $\in \mathbb{R}^n$ \\
				$\mathbf{b} = \mathbf{r}^{(v)}-\mathbf{e}_u$ & Auxiliary perturbation vector for column $v$ & $\in \mathbb{R}^n$ \\
				$\Delta$ & Perturbation of the transition matrix, rank $2$ & $\mathbb{R}^{n \times n}$ \\
				$E = \tilde{C}-C$ & Perturbation of the discriminant matrix, symmetric & $\mathbb{R}^{n \times n}$ \\
				$\|\cdot\|_F$ & Frobenius norm & $\to \mathbb{R}_{\geq 0}$ \\
				$\|\cdot\|_2$ & Spectral norm (largest singular value) & $\to \mathbb{R}_{\geq 0}$ \\
				$\delta_\star,\, \delta_B$ & Spectral gaps of $C$ used in Davis--Kahan bounds & $\in \mathbb{R}_{>0}$ \\
				$\delta_W$ & Spectral gap of $W$ separating $\mathcal{V}_{\star}$ from the rest of the spectrum & $\in \mathbb{R}_{>0}$ \\
				$\Delta\lambda_\star$ & First-order shift of the gap eigenvalue & $\in \mathbb{R}_{<0}$ \\
				$T_A,\,T_B$ & Nonnegative decomposition $|\Delta\lambda_\star|=T_A+T_B$ & $\in \mathbb{R}_{\ge0}$ \\
				$\Delta\theta_\star$ & First-order shift of the corresponding eigenphase & $\in \mathbb{R}_{>0}$ \\
				$S_A,\,S_B$ & Nonnegative decomposition $\Delta\theta_\star=S_A+S_B+O(1/n^2)$ & $\in \mathbb{R}_{\ge0}$ \\
				$\alpha_{\mathrm{eff}}$ & Largest principal angle between
				$\mathcal{U}_{\mathrm{eff}}$ and $\tilde{\mathcal{U}}_{\mathrm{eff}}$
				in $\mathbb{R}^n$ & $\in [0,\pi/2]$ \\
				\hline
			\end{tabular}
		\end{table}
		
		\begin{table}[h!]
			\centering
			\renewcommand{\arraystretch}{1.3}
			\begin{tabular}{|c|c|c|}
				\hline
				\textbf{Symbol} & \textbf{Description} & \textbf{Space / Range} \\
				\hline\hline
				\multicolumn{3}{|c|}{\textbf{Szegedy Quantum Walk Operators and Vectors}} \\
				\hline
				$\ket{u}, \ket{v}$ & Computational basis states representing nodes & $\in \mathbb{C}^n$ \\
				$\mathcal{H} = \mathbb{C}^n \otimes \mathbb{C}^n$ & Hilbert space of the walk,
				$\mathrm{span}\{\ket{x}\ket{y}\}$ & -- \\
				$\ket{\psi_u},\, \ket{\phi_v}$ & States in $\mathcal{H}$ used to define
				the reflections & $\in \mathcal{H}$ \\
				$A = \sum_u|\psi_u\rangle\langle u|$ & Isometry mapping $\mathbb{R}^n$
				into $\mathcal{H}$, $\|A\|_2=1$ & $\mathbb{R}^n \to \mathcal{H}$ \\
				$B = \sum_v|\phi_v\rangle\langle v|$ & Isometry mapping $\mathbb{R}^n$
				into $\mathcal{H}$, $\|B\|_2=1$ & $\mathbb{R}^n \to \mathcal{H}$ \\
				$\tilde{A},\, \tilde{B}$ & Perturbed counterparts of $A$ and $B$ & $\mathbb{R}^n \to \mathcal{H}$ \\
				$\ket{\psi_0}$ & Initial state:
				$\frac{1}{\sqrt{n}}\sum_{x \in V}\ket{\psi_x}$ & $\in \mathcal{H}$ \\
				$|\psi_0^{\parallel} \rangle = \Pi_{\mathcal{V}_{\mathrm{eff}}}\ket{\psi_0}$ &
				Component of $\ket{\psi_0}$ in $\mathcal{V}_{\mathrm{eff}}$ & $\in \mathcal{H}$ \\
				$| \tilde{\psi}_0^{\parallel} \rangle = \Pi_{\tilde{\mathcal{V}}_{\mathrm{eff}}}\ket{\psi_0}$ &
				Component of $\ket{\psi_0}$ in $\tilde{\mathcal{V}}_{\mathrm{eff}}$ & $\in \mathcal{H}$ \\
				$\ket{\psi_0^\perp} = \Pi_{\mathcal{A}^\perp \cap \mathcal{B}^\perp}\ket{\psi_0}$ &
				Component of $\ket{\psi_0}$ in $\mathcal{A}^\perp \cap \mathcal{B}^\perp$ & $\in \mathcal{H}$ \\
				$p_M^\parallel(t) = \|\Pi_M W^t|\psi_0^{\parallel} \rangle\|^2$ &
				Unperturbed success probability restricted to $\mathcal{V}_{\mathrm{eff}}$ & $\in [0,1]$ \\
				$\tilde{p}_M^\parallel(t) = \|\Pi_M\tilde{W}^t| \tilde{\psi}_0^{\parallel} \rangle\|^2$ &
				Perturbed success probability restricted to $\tilde{\mathcal{V}}_{\mathrm{eff}}$ & $\in [0,1]$ \\
				$R_A,\, R_B$ & Reflection operators (unperturbed) & Unitary on $\mathcal{H}$ \\
				$W = R_B R_A$ & Szegedy walk operator (unperturbed) & Unitary on $\mathcal{H}$ \\
				$\tilde{W}$ & Szegedy walk operator (perturbed) & Unitary on $\mathcal{H}$ \\
				$C,\, \tilde{C}$ & Discriminant matrix (original and perturbed)
				& $\mathbb{R}^{n \times n}$, symmetric \\
				$\Pi_M$ & Orthogonal projector onto the marked subspace & -- \\
				$p_M(t) = \|\Pi_M W^t \ket{\psi_0}\|^2$ & Success probability at time $t$
				& $\in [0,1]$\\
				\hline
	\end{tabular}
\end{table}

\begin{table}[h!]
\centering
\renewcommand{\arraystretch}{1.3}
\begin{tabular}{|c|c|c|}
\hline
\textbf{Symbol} & \textbf{Description} & \textbf{Space / Range} \\
\hline\hline
				\multicolumn{3}{|c|}{\textbf{Spectral Quantities and Subspaces}} \\
				\hline
				$\lambda_M = 1$ & Eigenvalue of $C$ assoc.\ with $\mathcal{U}_M$,
				mult.\ $m$ & $\in [-1,1]$ \\
				$\lambda_\star = \frac{n-m-1}{n-1}$ & Simple gap eigenvalue of $C$
				& $\in (-1,1)$ \\
				$\lambda_B = -\frac{1}{n-1}$ & Bulk eigenvalue of $C$, multiplicity $k-1$
				& $\in [-1,1]$ \\
				$\theta_M = 0,\, \theta_\star,\, \theta_B$ & Eigenphases of $W$,
				defined by $\cos\theta = \lambda$ & $\in [0,\pi]$ \\
				$\mathbf{v}_\star = \frac{1}{\sqrt{k}}\mathbf{1}_U$ &
				Normalized eigenvector of $\lambda_\star$ & $\in \mathbb{R}^n$, $\|\mathbf{v}_\star\|=1$ \\
				$\mathcal{U}_M$ & Eigenspace of $C$ for $\lambda_M=1$,
				subspace of $\mathbb{R}^n$ & Dim.\ $m$ \\
				$\mathcal{U}_\star$ & Eigenspace of $C$ for $\lambda_\star$,
				subspace of $\mathbb{R}^n$ & Dim.\ $1$ \\
				$\mathcal{U}_B$ & Bulk eigenspace of $C$ for $\lambda_B$,
				subspace of $\mathbb{R}^n$ & Dim.\ $k-1$ \\
				$\mathcal{A} = \mathrm{ran}(A)$ & Subspace of $\mathcal{H}$ spanned
				by $\{|\psi_u\rangle : u \in V\}$ & $\subset \mathcal{H}$ \\
				$\mathcal{B} = \mathrm{ran}(B)$ & Subspace of $\mathcal{H}$ spanned
				by $\{|\phi_v\rangle : v \in V\}$ & $\subset \mathcal{H}$ \\
				$\mathcal{A}^\perp \cap \mathcal{B}^\perp$ & $(+1)$-eigenspace of $W$
				orthogonal to $\mathcal{V}_{\mathrm{eff}} \oplus \mathcal{V}_B$ &
				Dim.\ $n^2-2n+m$ \\
				$\mathcal{V}_M$ & Invariant subspace of $W$ in $\mathcal{H}$
				corresponding to $\mathcal{U}_M$ & Dim.\ $m$ \\
				$\mathcal{V}_\star$ & Invariant subspace of $W$ in $\mathcal{H}$
				corresponding to $\mathcal{U}_\star$ & Dim.\ $2$ \\
				$\mathcal{V}_B$ & Bulk invariant subspace of $W$ in $\mathcal{H}$
				corresponding to $\mathcal{U}_B$ & Dim.\ $2(k-1)$ \\
				$\mathcal{V}_{\mathrm{eff}} = \mathcal{V}_M \oplus \mathcal{V}_\star$ &
				Effective invariant subspace of $W$ in $\mathcal{H}$ & Dim.\ $m+2$ \\
				$\tilde{\mathcal{U}}_{\mathrm{eff}},\,\tilde{\mathcal{U}}_M,\,
				\tilde{\mathcal{U}}_\star,\,\tilde{\mathcal{U}}_B$ &
				Perturbed counterparts of eigenspaces of $\tilde{C}$ in $\mathbb{R}^n$
				& $\subset \mathbb{R}^n$ \\
				$\tilde{\mathcal{V}}_{\mathrm{eff}},\,\tilde{\mathcal{V}}_M,\,
				\tilde{\mathcal{V}}_\star,\,\tilde{\mathcal{V}}_B$ &
				Perturbed counterparts of invariant subspaces of $\tilde{W}$
				& $\subset \mathcal{H}$ \\
				$\ket{\omega^\pm_\star}$ & Eigenvectors of $W$ in $\mathcal{V}_\star$
				for $e^{\pm 2i\theta_\star}$ & $\in \mathcal{H}$ \\
				$\ket{\omega_{M,j}}$ & Eigenvectors of $W$ in $\mathcal{V}_M$
				for eigenvalue $1$, $j=1,\ldots,m$ & $\in \mathcal{H}$ \\
				$\ket{\omega_{B,l}}$ & Eigenvectors of $W$ in $\mathcal{V}_B$
				& $\in \mathcal{H}$ \\
				$\ket{\tilde{\omega}^\pm_\star}$ & Eigenvectors of $\tilde{W}$
				in $\tilde{\mathcal{V}}_\star$ for $e^{\pm 2i\tilde\theta_\star}$
				& $\in \mathcal{H}$ \\
				$\ket{\tilde{\omega}_{M,j}}$ & Eigenvectors of $\tilde{W}$
				in $\tilde{\mathcal{V}}_M$ for eigenvalue $1$ & $\in \mathcal{H}$ \\
				$\ket{\tilde{\omega}_{B,l}}$ & Eigenvectors of $\tilde{W}$
				in $\tilde{\mathcal{V}}_B$ & $\in \mathcal{H}$ \\
				$a^\pm$ & Coefficients of $| \tilde{\psi}_0^{\parallel} \rangle$
				in eigenbasis of $W$ & $\in \mathbb{C}$ \\
				$\tilde{a}^\pm$ & Coefficients of $| \tilde{\psi}_0^{\parallel} \rangle$
				in eigenbasis of $\tilde{W}$ & $\in \mathbb{C}$ \\
				$c_j = \tilde{c}_j$ & Coefficients of $| \tilde{\psi}_0^{\parallel} \rangle$
				for $\mathcal{V}_M$ components (equal by block structure) & $\in \mathbb{C}$ \\
				\hline
			\end{tabular}
			\caption{Summary of notation used throughout the document.}
			\label{tab:notation}
		\end{table}
	
	\end{document}